\algrenewcommand{\algorithmicrequire}{\textbf{Input:}}
\algrenewcommand{\algorithmicensure}{\textbf{Output:}}
\newcommand{\rr}{\mathbb{R}}
\newcommand{\OO}{O}
\newcommand{\OOt}{\widetilde{\OO}}
\newcommand{\abs}[1]{\left\lvert #1 \right\rvert}
\newcommand{\psnorm}[2][p]{\left\lVert #2 \right\rVert_{#1}}
\global\long\def\AA{\boldsymbol{\mathit{A}}}
\global\long\def\CC{\boldsymbol{\mathit{C}}}
\global\long\def\PP{\boldsymbol{\mathit{P}}}
\global\long\def\vb{\boldsymbol{b}}
\global\long\def\vc{\boldsymbol{c}}
\global\long\def\vd{\boldsymbol{d}}
\global\long\def\vp{\boldsymbol{p}}
\global\long\def\vq{\boldsymbol{q}}
\global\long\def\vv{\boldsymbol{v}}
\global\long\def\vw{\boldsymbol{w}}
\global\long\def\vx{\boldsymbol{x}}
\global\long\def\vy{\boldsymbol{y}}
\global\long\def\vdelta{\boldsymbol{\delta}}
\global\long\def\ones{\boldsymbol{1}}
\global\long\def\zeros{\boldsymbol{0}}
\global\long\def\Tapprox{\mathcal{T}_{\textnormal{approx}}}
\global\long\def\Texact{\mathcal{T}_{\textnormal{exact}}}
\global\long\def\Tmixed{\mathcal{T}_{\textnormal{mixed}}}
\global\long\def\Tlinear{\mathcal{T}_{\textnormal{linear}}}
\global\long\def\smax{\textnormal{smax}}
\global\long\def\opt{\textnormal{OPT}}
\global\long\def\nnz{\textnormal{nnz}}
\global\long\def\Vmax{V_{\max}}
\newtheorem{theorem}{Theorem}
\newtheorem*{theorem*}{Theorem}
\newtheorem{lemma}[theorem]{Lemma}
\newtheorem*{lemma*}{Lemma}
\newtheorem{corollary}[theorem]{Corollary}
\newtheorem{definition}{Definition}
\newtheorem{problem}{Problem}
\newtheorem*{remark*}{Remark}
\newtheorem{claim}{Claim}
\crefname{claim}{Claim}{Claims}
\Crefname{claim}{Claim}{Claims}
\definecolor{darksepia}{rgb}{0.44, 0.26, 0.08}
\definecolor{sepia}{rgb}{0.91,0.86,0.8}
\renewcommand{\epsilon}{\varepsilon}
\crefname{procedure}{Procedure}{Procedures}
\Crefname{procedure}{Procedure}{Procedures}
\renewcommand*{\theHtheorem}{\theHsection.\the\value{theorem}}
\title{Adaptive Sparsification for Linear Programming}
\author{
    \'{E}tienne Objois\footnote{IRIF, Université Paris Cité, \texttt{objois@irif.fr}} 
    \and
    Adrian Vladu\footnote{CNRS, IRIF, Université Paris Cité, \texttt{vladu@irif.fr}}
}
\date{}
\begin{document}
\maketitle

\begin{abstract}
We introduce a generic framework for solving linear programs (LPs) with many constraints $(n \gg d)$ via adaptive sparsification. Our approach provides a principled generalization of the techniques of [Assadi '23] from matching problems to general LPs and robustifies [Clarkson's '95] celebrated algorithm for the exact setting. The framework reduces LP solving to a sequence of calls to a ``low-violation oracle'' on small, adaptively sampled subproblems, which we analyze through the lens of the multiplicative weight update method.

Our main results demonstrate the versatility of this paradigm. First, we present a quantum version of Clarkson's algorithm that finds an exact solution to an LP using $\tilde{O}(\sqrt{n} d^3)$ row-queries to the constraint matrix. This is achieved by accelerating the classical bottleneck (the search for violated constraints) with a generalization of Grover search, decoupling the quantum component from the classical solver. Second, our framework yields new state-of-the-art algorithms for mixed packing and covering problems when the packing constraints are ``simple''. By retaining all packing constraints while sampling only from the covering constraints, we achieve a significant width reduction, leading to faster solvers in both the classical and quantum query models. Our work provides a modular and powerful approach for accelerating LP solvers.

\end{abstract}

\section{Introduction}

Linear Programming (LP) is a central primitive in optimization, with applications ranging from combinatorial optimization to quantum information. Classical research has long pursued faster solvers, both in the high-precision regime (polynomial in $\log 1/\varepsilon$) via interior point and cutting plane methods~\cite{vaidyaSpeedingupLinearProgramming1989,leeEfficientInverseMaintenance2015,cohenSolvingLinearPrograms2019,brandSolvingTallDense2021}, and in the low-precision regime (polynomial in $1/\varepsilon$) using first-order techniques such as multiplicative weights~\cite{plotkinFastApproximationAlgorithms1991,grigoriadisSublineartimeRandomizedApproximation1995,aroraMultiplicativeWeightsUpdate2012}.

A particularly intriguing regime arises when the number of constraints $n$ far exceeds the dimension $d$. Since only $d$ constraints suffice to pin down an optimal solution, one may hope to reduce $n$ dramatically without losing accuracy. In the context of graph algorithms, this idea underlies sparsification techniques~\cite{benczurRandomizedApproximationSchemes2015,spielmanGraphSparsificationEffective2008}, which replace large inputs with much smaller sketches that preserve the structure of (near-)optimal solutions. However, its potential for linear programming remains largely unexplored.

A classical result of Clarkson~\cite{clarksonVegasAlgorithmsLinear1995} gives a concrete way to reduce constraints adaptively. Instead of attempting a one-shot reduction of all $n$ constraints, Clarkson iteratively samples small subsets, solves the resulting reduced LPs, and uses the solutions to adaptively adjust sampling probabilities. This iterative reweighting guarantees that after only $\widetilde{O}(d)$\footnote{$\OOt(.)$ hides poly-logarithmic dependencies on $n$, $d$, and $\epsilon^{-1}$.} iterations, the algorithm identifies the true solution while never solving an LP with more than $O(d^2)$ constraints. More recently, Assadi~\cite{assadiSimple$1varepsilon$ApproximationSemiStreaming2025} showed how related ideas yield powerful semi-streaming algorithms for approximate matching, with performance reminiscent of multiplicative weights but with strictly better dependence on the error parameter $\varepsilon$. Together, these results suggest that adaptive sparsification could form a general principle for reducing large LP instances to efficiently solvable ones.

This perspective is particularly timely in the quantum setting. In the row-query model, solving an LP to constant precision requires at least $\Omega(\sqrt{nd})$ queries to the constraint matrix~\cite{apeldoornQuantumSDPSolversBetter2020}, a bound conjectured to be tight~\cite{apersQuantumSpeedupsLinear2024}. Moreover, quantum algorithms can sample $r$ elements from a distribution over $n$ elements in just $\widetilde{O}(\sqrt{nr})$ queries~\cite{apersQuantumSpeedupGraph2023}. Clarkson’s and Assadi's frameworks, which only needs $\OOt(d)$ and $\OOt(\frac{1}{\epsilon})$ iterations of sampling a small number of constraints each, therefore naturally aligns with quantum speedups.

These observations lead to a guiding question:
\begin{center}
\emph{Can adaptive sparsification provide a unifying framework for quantum algorithms that efficiently solve LPs exactly or approximately?}
\end{center}

\subsection{Our Contributions} 

A linear program is defined as 
\begin{equation}
    \label{eq:intro-lp}
    \max_{\vx \in \mathcal{D}} \langle \vc, \vx \rangle, \qquad \text{subject to } \AA \vx \leq \vb,
\end{equation}
where $\AA \in \rr^{n \times d}$, $\vb \in \rr^n$ and $\vc \in \rr^d$. $\mathcal{D}$ should be seen as a simple convex set, present to upper bound the \emph{width} of the linear program. We denote by $\opt$ the optimum value, that is $\max \lbrace \langle \vc, \vx \rangle : \AA \vx \leq \vb \text{ and } \vx \in \mathcal{D}\rbrace$.

We first present a quantum version of Clarkson's algorithm that uses a generalization of Grover Search to sample the constraints achieves low query complexity of $\OOt(\sqrt{n}d^3)$. We present the comparison with other algorithms in \cref{tab:exact-solvers}.

\begin{theorem}
    \label{thm:clarkson-quantum-intro}
    There is a randomized quantum algorithm that finds an exact solution of \eqref{eq:intro-lp} using $\OOt(\sqrt{n} d^3)$ row-queries to $\AA$. Furthermore, this algorithm runs in expected time
    \begin{equation*}
        \OOt\left(\sqrt{n} d^3 r + d \cdot \Texact(d,24d^2)\right),
    \end{equation*}
    where $r$ is the row-sparsity of $\AA$, and $\Texact(d,n)$ is the time required to compute an exact solution to a linear program with $d$ variables and $n$ constraints.
\end{theorem}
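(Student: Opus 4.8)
The plan is to quantize Clarkson's iterative reweighting scheme, treating the classical LP solver as a black box and accelerating only the constraint-violation search. First I would recall the structure of Clarkson's algorithm: maintain a weight $w_i$ on each of the $n$ constraints (initially uniform), and repeat $\OOt(d)$ rounds. In each round, sample a multiset $S$ of $O(d^2)$ constraints with probability proportional to the weights, solve the reduced LP on $S$ exactly (cost $\Texact(d, O(d^2))$), obtain a candidate $\vx_S$, then identify the set $V$ of constraints of the full system violated by $\vx_S$; if the total weight of $V$ is at most a $1/(2d)$ fraction of the total weight, output $\vx_S$, otherwise double the weights of all constraints in $V$ and continue. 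The classical analysis (which I may invoke from the framework developed earlier in the paper, specializing the low-violation oracle to an exact solver) guarantees termination in $\OOt(d)$ rounds with an exact optimal solution.

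The quantum ingredient enters in two places per round: sampling $S$ and computing the violated set $V$ (or certifying it is light). For sampling, I would use the quantum sampling primitive of \cite{apersQuantumSpeedupGraph2023}, which draws $r$ samples from a distribution over $n$ elements maintained under updates in $\OOt(\sqrt{nr})$ queries; here $r = O(d^2)$, giving $\OOt(\sqrt{n}\,d)$ row-queries per round. For the violation search, given $\vx_S$ one needs to find all violated constraints $i$ (those with $\va_i^\top \vx_S > b_i$) — but in fact one only needs enough of them to either reach weight fraction $1/(2d)$ or certify lightness. I would use a generalization of Grover search (amplitude amplification / quantum maximum finding in the style of Dürr–Høyer, adapted to a weighted setting) to extract violated constraints: finding the $k$ constraints of largest weight among the violated ones, or detecting that their total weight is below threshold, costs $\OOt(\sqrt{nk})$ queries. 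Since the number of violated constraints that matter across the whole run is controlled — each is reweighted a bounded number of times, and the total weight at most $\poly(n)$ — a careful accounting bounds the cumulative violation-search cost by $\OOt(\sqrt{n}\,d^2)$ per round, hence $\OOt(\sqrt{n}\,d^3)$ over all $\OOt(d)$ rounds. This dominates the sampling cost, yielding the claimed $\OOt(\sqrt{n}\,d^3)$ row-query bound.

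For the running-time bound, I would multiply the $\OOt(\sqrt{n}\,d^3)$ row-queries by the per-query processing cost $r$ (the row-sparsity, since each queried row has $r$ nonzeros that must be touched to evaluate $\va_i^\top \vx_S - b_i$), giving the $\OOt(\sqrt{n}\,d^3 r)$ term; the solver is invoked once per round, $\OOt(d)$ times, on an instance with $d$ variables and $O(d^2) = 24d^2$ constraints (the constant $24$ coming from the concentration bound used to argue the sampled LP's optimum matches the true one with good probability), giving the $d \cdot \Texact(d, 24d^2)$ term. The expectation is over the internal randomness of both the quantum subroutines and the classical sampling.

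The main obstacle I anticipate is the weighted quantum violation search: standard Grover-type search handles unweighted existence/counting, but here the stopping criterion is a \emph{weighted} threshold on an adaptively changing weight vector, and the weights can be exponentially large in the number of rounds. Handling this requires either (i) a quantum routine that estimates a weighted sum over a marked set to within a multiplicative factor — built from amplitude estimation with the weights encoded as rotation angles — or (ii) reducing the weighted problem to an unweighted one by duplicating constraints according to integer weights and arguing the blow-up stays $\poly(n)$ because weights only double and the process halts after $\OOt(d)$ rounds. I would pursue route (ii) as it meshes most cleanly with the off-the-shelf sampling primitive of \cite{apersQuantumSpeedupGraph2023}, and would need to verify that the query complexity of that primitive degrades only polylogarithmically when the ground set is this (implicitly represented) multiset of size $\poly(n)$. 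A secondary subtlety is decoupling the quantum part from the classical solver cleanly enough that $\Texact$ appears as an unconditioned black box — this is exactly what the framework's reduction to a low-violation oracle buys us, so I would lean on that modular structure rather than re-deriving the reweighting analysis.
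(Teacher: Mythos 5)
There is a genuine gap, and it sits exactly where you flagged your ``main obstacle'': how the algorithm accesses the adaptively changing weights. Your scheme maintains Clarkson's weights explicitly, which forces you to identify the violated set $V$ in every round in order to double its weights (and to test the weighted lightness threshold). Extracting $V$ with Grover-type search costs $\OOt(\sqrt{n\,|V|})$ queries, and $|V|$ can be $\Theta(n)$ early on (and in heavy rounds that Clarkson discards, where the ``each constraint is reweighted $\OO(\log n)$ times'' accounting does not even apply). Even under the most favorable accounting --- each constraint doubled $\OO(\log n)$ times, so $\sum_t |V_t| = \OOt(n)$ over the $\OOt(d)$ doubling rounds --- Cauchy--Schwarz only gives $\sum_t \sqrt{n|V_t|} = \OOt(n\sqrt{d})$ total, not the $\OOt(\sqrt{n}\,d^2)$ per round you assert; so the claimed $\OOt(\sqrt{n}\,d^3)$ bound does not follow, and for $n \gg \mathrm{poly}(d)$ it is false for this route. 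Your fallback (ii), duplicating constraints according to integer weights, also does not rescue it: the total weight after $\OOt(d\log n)$ rounds is $n^{\Theta(1)}$ with exponent larger than $1$, so the ground set becomes a multiset of size $N = n^{c}$, $c>1$, and the $\OOt(\sqrt{N r})$ sampling cost is polynomially (not polylogarithmically) worse than $\sqrt{n}$.

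The paper avoids this entirely by never maintaining weights and never identifying a violated set. The weight of constraint $i$ at iteration $t$ is the implicit quantity $\vw_i = 2^{\sum_{\tau \le t} \vv_i^0(\vx_\tau)}$, so a single query to $\vw_i$ is answered on the fly with $t$ row-queries to $(\AA,\vb)$ (one per stored solution); the only quantum primitives are the sampling routine of \cref{lem:apers} applied to these implicit weights (with a cheap constant-factor estimate of $\psnorm[1]{\vw}$ maintained by sampling, \cref{claim:estimation_quantum}), plus one Grover feasibility check per stored solution at the very end. There is no per-round lightness test at all: the analysis bounds the expected weight growth via Clarkson's sampling lemma and applies Markov once at the end, against the $2^{T/d}$ lower bound forced by the basis constraints. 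The factor $d^3$ then arises as (number of iterations $\OOt(d)$) $\times$ (queries to $\vw$ per iteration $\OOt(\sqrt{n}d)$) $\times$ (row-queries per weight query $\OOt(d)$), not from any violation search; and the constant in $\Texact(d,24d^2)$ is just the high-probability cap $|S_t| \le 4s = 24d^2$ on the sample size, not a concentration argument about the sampled LP's optimum. To repair your proposal you would need to either adopt this implicit-weight recomputation (accepting the factor-$t$ overhead per weight query) or supply a genuinely new argument for maintaining explicit weights without paying $\Omega(\sqrt{n|V|})$ per round; as written, neither of your two routes establishes the stated query bound.
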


\begin{remark*}
    When the bit complexity of the input is not large, one can rely on high-precision solvers such as interior point methods. If $L$ is the bit-complexity of the input, that is, the number of bits required to write $(\AA,\vb,\vc)$,\footnote{Assume $\AA \in \rr^{n \times d}$, $\vb \in \rr^n$ and $\vc \in \rr^d$ are integral, and $n \gg d$. One can think of $L$ being the same order of magnitude as $\log n + \log (1 + d_{\max}) + \log (1 + \max \lbrace \psnorm[\infty]{\vc}, \psnorm[\infty]{\vd}\rbrace)$ where $d_{\max}$ is the largest subdeterminant of $\AA$ in absolute value.} then we can combine Clarkson's algorithm with~\cite{leeEfficientInverseMaintenance2015, vandenbrandMinimumCostFlows2021} to achieve an expected time of $\OOt(d \cdot \nnz(\AA) + L\cdot d^{3.5} \min \lbrace r, \sqrt{d} \rbrace)$. In this case, our algorithm still requires $\OOt(\sqrt{n}d^3)$ row-queries to $\AA$, and runs in time $\OOt(\sqrt{n} d^3 r + L\cdot d^{3.5} \min \lbrace r, \sqrt{d} \rbrace)$.
\end{remark*}

\begin{table}
    \centering
    \begin{tabular}{c c c c }
        \hline
        \textbf{Paper} & \textbf{Query model} & \textbf{Total runtime} & \textbf{Query complexity} \\ \hline
        \multicolumn{4}{c}{exact solver} \\ \hline
        \cite{clarksonVegasAlgorithmsLinear1995} & classical & $\OO\left(d \log n \cdot (\nnz(\AA) + \Texact(d,6d^2))\right)$ & - \\ \hline
        \cref{thm:clarkson-quantum} & quantum & $\OOt\left(d (\sqrt{n}d^2 r + \Texact(d,24d^2))\right)$ & $\OOt\left(\sqrt{n}d^3\right)$ \\ \hline
        \multicolumn{4}{c}{high-precision solver} \\ \hline
        \cite{vandenbrandMinimumCostFlows2021} & classical & $\OOt\left(nd + d^{2.5} \right)$ & - \\ \hline
        \cite{leeEfficientInverseMaintenance2015} & classical & $\OOt\left(\nnz(\AA) \sqrt{d} + d^{2.5}\right)$ & - \\ \hline
        \cite{apersQuantumSpeedupsLinear2024} & quantum & $\OOt\left(\sqrt{n} d^{7} r^2 + d^{\omega + 7}\right)$ & $\OOt\left(\sqrt{n}d^5 \right)$ \\ \hline
        quantum version  & 
        \multirow{2}{*}{quantum} &  \multirow{2}{*}{$\OOt\left(\sqrt{n}d r+ d^3 \right)$} &  \multirow{2}{*}{$\OOt\left(\sqrt{n}d \right)$}  \\ 
        of~\cite{leeFasterCuttingPlane2015} \\ \hline
\end{tabular}
    \caption{Comparison between algorithms for solving a linear program of the form $\max \langle \vc, \vx \rangle$ subject to $\AA \vx \leq \vb$ and $\vx \in \mathcal{D}$. We consider $\AA \in \rr^{n \times d}$, $r$ is the row-sparsity of $\AA$, $\rho = \max_{\vx \in \mathcal{D}} \psnorm[\infty]{\AA \vx - \vb}$ is the width of the LP, $\epsilon >0$, and $\omega$ is the matrix multiplication exponent.
    }
    \label{tab:exact-solvers}
\end{table}

\paragraph{Low-Precision.} In the low-precision regime, we give an algorithm that computes an $\epsilon$-approximate solution using $\OO(\Vmax/\epsilon \log n)$ iterations, where $\Vmax$ corresponds to the largest violation achievable by a solution in $\mathcal{D}$. $\Vmax$ can be much smaller than the width, for pure covering problems for instance, $\Vmax = 1$, while the width can be as large as $d$.

We provide two quantum versions of this classical algorithm. The main difference comes from computing the query access to the sampling probabilities. In the first version, we require fewer iterations, but querying the sampling probability of a constraint at iteration $t$ requires $t$ row-queries to $\AA$. In the second version, we perform more iterations, but querying the sampling probability of a constraint requires only one row-query to $\AA$. We present the different results in \Cref{tab:lp-solvers}.

\begin{table}
    \centering
    \begin{tabular}{c c c c}
        \hline
        \textbf{Paper} & \textbf{Query model} & \textbf{Type} & \textbf{Total runtime} \\ \hline
\cite{aroraMultiplicativeWeightsUpdate2012} & classical & deterministic & $\OO\left(\frac{\rho \Vmax}{\epsilon^2} \log n \cdot (\nnz(\AA) + \Tlinear(d))\right)$ \\ \hline
\cref{thm:lp-classical} & classical & randomized & $\OOt\left(\frac{\Vmax}{\epsilon} (\nnz(\AA) + \Texact(d,24d \frac{\Vmax}{\epsilon}))\right)$ \\ \hline
        \cref{thm:lp-quantum} & quantum & randomized & $\OOt\left(\frac{\Vmax}{\epsilon} (\sqrt{n d \frac{\Vmax}{\epsilon}} \frac{\Vmax}{\epsilon} r + \Texact(d,24d \frac{\Vmax}{\epsilon}))\right)$ \\ \hline
        \cref{thm:lp-quantum-width} & quantum & randomized & $\OOt\left(\frac{\rho}{\epsilon} (\sqrt{n d \frac{\rho}{\epsilon}} r + \Texact(d,24d \frac{\rho}{\epsilon}))\right)$ \\ \hline
    \end{tabular}
    \caption{Comparison between algorithms for finding an $\epsilon$-approximate solution to a linear program of the form $\max \langle \vc, \vx \rangle$ subject to $\AA \vx \leq \vb$ and $\vx \in \mathcal{D}$. We consider $\AA \in \rr^{n \times d}$, $r$ is the row-sparsity of $\AA$, $\rho = \max_{\vx \in \mathcal{D}} \psnorm[\infty]{\AA \vx - \vb}$ is the width of the LP, $\epsilon >0$, and $\Vmax = \max_{\vx \in \mathcal{D}} \max_{i \in [n]} \left(\AA \vx - \vb\right)_i$ is the largest violation possible.
    }
    \label{tab:lp-solvers}
\end{table}

\paragraph{Mixed Packing and Covering Problems.} A mixed packing and covering problem is of the form
\begin{equation*}
    \text{find } \vx \in [0,1]^d, \qquad \text{subject to } \CC \vx \geq \ones, \PP \vx \leq \ones,
\end{equation*}
where $\CC \in [0,1]^{n_c \times d}$, and $\PP \in [0,1]^{n_p \times d}$.\footnote{Solving a mixed packing and covering LP can be reduced to multiple problems of this form by adding a poly-logarithmic number of constraints, see appendix B in~\cite{boobFasterWidthdependentAlgorithm2019}.} When the matrix $\CC$ is tall, i.e. $n_c \gg n_p + d + \epsilon^{-1}$, we give a randomized algorithm improving the current state-of-the-art. We also provide a quantum version of this algorithm. We present those results in \cref{tab:mpc-solvers}.

\begin{table}
    \centering
    \begin{tabular}{c c c c}
        \hline
        \textbf{Paper} & \textbf{Query model} & \textbf{Type} & \textbf{Total runtime} \\ \hline
\cite{youngNearlyLinearWorkAlgorithms2014,quanrudNearlyLinearTime2019} & classical & deterministic & $\OOt\left(\frac{\nnz}{\epsilon^2}\right)$ \\ \hline
        \cite{boobFasterWidthdependentAlgorithm2019} & classical & deterministic & $\OOt\left(\frac{r}{\epsilon} \nnz\right)$ \\ \hline
        \cite{chekuriRandomizedMWUPositive2018} & classical & randomized & $\OOt\left(\frac{\nnz}{\epsilon} + \frac{n}{\epsilon^2} + \frac{d}{\epsilon^3}\right)$ \\ \hline
        \cref{thm:mpc-classical} & classical & randomized &$\OOt\left(\frac{1}{\epsilon} \cdot (\nnz(\CC) + \Tmixed(d,n_p, \OOt\left(\frac{d}{\epsilon}\right), \epsilon))\right)$ \\ \hline
        \cref{thm:mpc-quantum} & quantum & randomized & $\OOt \left(\frac{1}{\epsilon} \left(\sqrt{n_c \frac{d}{\epsilon}} \frac{r_c}{\epsilon} + \Tmixed(d,n_p, \OOt\left(\frac{d}{\epsilon}\right), \epsilon)\right)\right)$ \\ \hline
    \end{tabular}
        \caption{Comparison between algorithms finding a $1+\epsilon$ approximate solution of mixed packing and covering problems of the form: find $\vx \in [0,1]^d$ subject to $\PP \vx \leq \ones$ and $\CC \vx \geq \ones$. We assume $\PP \in [0,1]^{n_p \times d}$ and $\CC \in [0,1]^{n_c \times d}$. Here $r_p$ (resp. $r_c$) denotes the row-sparsity of $\PP$ (resp. $\CC$). $n = n_p + n_c$, $r = \max \lbrace r_p, r_c \rbrace$, and $\nnz = \nnz(\PP) + \nnz(\CC)$. $\Tmixed(d,n_p,n_c,\epsilon)$ denotes the running time of a $1+\epsilon$ approximate mixed packing and covering LP solver on $d$ variables with $n_p$ (resp. $n_c$) packing (resp. covering) constraints.
    }
    \label{tab:mpc-solvers}
\end{table}

\begin{remark*}
 Assume there is a constant number of packing constraints (but strictly larger than $1$, otherwise it becomes a pure covering LP with other, faster solvers). In this case, we may have $\nnz(\PP) = \OO(d)$, assume $\CC$ is tall and dense $\nnz(\CC) = n \cdot d$, and $n \gg d^2\epsilon^{-2} + \epsilon^{-3}$. In that case, if we use the algorithm from~\cite{quanrudNearlyLinearTime2019} at each iteration, we find an $\epsilon$-approximate solution in time $\OOt\left(\frac{nd}{\epsilon} + \frac{d^2}{\epsilon^4}\right)$ which is faster than other mixed packing and coverings problems. The only draw back is our algorithm is randomized while \cite{youngNearlyLinearWorkAlgorithms2014,quanrudNearlyLinearTime2019, boobFasterWidthdependentAlgorithm2019} are deterministic.
\end{remark*}

 \subsection{Overview of Approach}
\label{ssec:overview}

Our core contribution is a unified and versatile framework for solving linear programs (LPs) through \emph{constraint sparsification}. We present a single, iterative algorithm that uses the multiplicative weights update (MWU) method as its central engine. By simply adjusting its internal parameters, this one algorithm can function as either a low-precision approximate solver or an exact solver. Our framework consists of $4$ steps performed iteratively.

\begin{enumerate}
    \item \emph{Sparsify:} Construct a sparse LP by selecting a small, weighted subset of constraints.
    \item \emph{Solve:} Find a solution for the sparse LP.
    \item \emph{Verify:} Check the solution against the entire original set of $n$ constraints to find violations.
    \item \emph{Refine:} Update the weighting scheme to increase the importance of the violated constraint and return to step $1$, creating a new, refined sparsifier.
\end{enumerate}

In step $1$, we sparsify the linear program by sampling constraints from a distribution. In step $2$ we solve the partial linear program obtained, in this step we show that we can either use an exact solver, or an $\epsilon$-approximation one. We then need to verify which constraint has been violated, this is used to compute a new distribution (step $4$) for the next iteration. This framework is also exceptionally amenable to quantum speedups. One difficulty we face is that the distribution changes a lot from one iteration to the other so maintaining it can be challenging. Fortunately, we are making few iterations which allows us to recompute it from scratch each iteration.

\paragraph{Constraint sparsification for low-precision algorithms.}\label{sec:overview_constr_sparf} We first present our approach for approximately solving LPs via an adaptive sequence of calls to an exact LP solver on a \emph{sparse} subset of constraints. The algorithm generalizes Assadi's approach~\cite{assadiSimple$1varepsilon$ApproximationSemiStreaming2025} for maximum matching to general settings, and our analysis casts his approach in a principled framework.

Formally, suppose we aim to approximate the decision problem $\{\AA \vx \leq \vb, \vx \in \mathcal{D}\}$, in the sense that we seek to either return some $\vx \in \mathcal{D}$ such that $\AA \vx \leq \vb + \epsilon \ones$, or return $\textsc{infeasible}$ if the input problem does not have a solution.

The textbook approach for solving this problem is via the MWU framework put forth in~\cite{plotkinFastApproximationAlgorithms1991}, which runs in $\OO\left(\rho^2 \log n/\epsilon^2 \right)$ iterations, each of which involves calling a linear optimization oracle over $\mathcal{D}$, and updating the weights fed into the oracle in time $\OO(\nnz(\AA))$. Here $\rho = \max_{\vx \in \mathcal{D}} \Vert \AA\vx -\vb \Vert_\infty$ represents the \emph{width} of the problem.

In our case, we instead solve a sequence of $\OO(\rho \log n / \epsilon)$ sub-problems, each of which involves solving exactly an LP over a sparse subset of constraints. The approach is as follows: in each iteration we sample a subset of constraints $Q$, and exactly solve the LP containing only the constraints in $Q$. Then we check the solution, and for each of the original constraints that is violated, we increase its sampling probability. To understand why this approach works, we can cast it in the MWU framework by defining a \emph{low-violation oracle}, namely a procedure which given a distribution $\vp \in \Delta_n$ of constraint weights, outputs a solution $\vx \in \mathcal{D}$ such that the \emph{violation vector} $\vv = \ones_{\AA\vx > \vb}$ represented by the indicator vector of constraints violated by $\vx$, violates only constraints with low total weight, in the sense that $\langle \vp, \vv \rangle \leq \frac{\epsilon}{3\rho}$. Simply running MWU on these low-weight violation vectors for $T = \OO\left(\rho \log n/\epsilon \right)$ iterations will yield a sequence $\{\vv_t\}_{t=1}^T$
for which $\max \left\{ \frac{1}{T}\sum_{t=1}^T \vv_t \right\} \leq \epsilon/\rho$. This in turn translates into a bound on $\max \left\{\AA \left(\frac{1}{T} \sum_{t=1}^T \vx_t\right) -\vb\right\} \leq \epsilon$, since each constraint violation corresponds to an iterate $\vx_t$ for which the corresponding violated constraint $\AA_{i:}$, satisfies $\langle \AA_{i:}, \vx_t \rangle - \vb_i \leq \rho$. So if each constraint is violated at most $\left(\epsilon/\rho\right) \cdot T$ times, over all the $T$ iterations, the total violation suffered by that constraint is at most $\epsilon T$.

The advantageous feature of this approach is that, since the loss vectors $\vv_t$ that we run MWU have only $0$-$1$ entries, and we tolerate a large (constant) step size, the number of iterations required scales only linearly with $\rho/\epsilon$ rather than quadratically as in standard instantiations of~\cite{plotkinFastApproximationAlgorithms1991}. We discuss this further in \cref{sec:mwubasics}.

We can obtain a low-violation oracle simply by sub-sampling the constraints, and exactly solving an LP for the sampled subset. Indeed, per the analysis in~\cite{clarksonVegasAlgorithmsLinear1995}, to achieve this we can simply sample each constraint $i$ with probability proportional to $\vp_i$. The main challenge is bounding the number of sampled constraints required, which we show is the product of the number of variables and the number of iterations.

We can use an $\epsilon$-approximation solver as long as the solver's output set contains few elements. Assadi used this technique on some combinatorial problems, known to have few solutions, such as matching~\cite{assadiSimple$1varepsilon$ApproximationSemiStreaming2025}. When the solver's output set is sparse, we can still sample constraint $i$ with probability proportional to $\vp_i$. Per the analysis in~\cite{assadiSimple$1varepsilon$ApproximationSemiStreaming2025}, we can upper bound the number of sampled constraints by the logarithm of the size of the solver's output set. In particular, for LPs for which a sparse $\epsilon$-net covers all the possible outputs of the approximate LP solver subroutine exists, we can sample a number of constraints proportional to the logarithm of this $\epsilon$-net's size.

We observe that for \emph{nice} solution space, a sparse $\epsilon$-net can be obtained. For mixed packing and covering problems, we can assume a solution has entry of the form $\vx_i = \epsilon/d (1+\epsilon)^{k_i}$ for some integer $k_i \in [1 + \frac{1}{\epsilon} \log \frac{d}{\epsilon}]$. This allows us to obtain an $\epsilon$-net of size $\OO((\epsilon^{-1} \log d/\epsilon)^d)$. We prove we can reduce a mixed packing and covering LP by solving $\OOt(\frac{1}{\epsilon})$ times a sparser version of the LP containing all the packing constraints, but only $\OO(\frac{d}{\epsilon} \cdot \log (\frac{1}{\epsilon} \log \frac{d}{\epsilon}))$ of its covering constraints.

\paragraph{A unified framework for constraint sparsification.}

Consider the decision problem $\lbrace \AA \vx \geq \vb, \vx \in \mathcal{D} \rbrace$. We generalize the analysis for low-precision algorithms to exact solvers. We in fact prove that it is sufficient to find a sequence of solutions $\lbrace \vx_j \rbrace_{j \in [T]}$ such that 
\begin{equation}
    \label{eq:sequence}
    \abs{\lbrace j \in [T] : \langle \AA_{i:}, \vx_j \rangle > \vb_i +\epsilon \rbrace} < \mu \cdot T,
\end{equation}
for some value of $\epsilon$ and $\mu$. This abstraction is the foundation of our work. Per Clarkson's analysis, solving this problem with $\epsilon = 0$, and $\mu = d$ guarantees one of the $\vx_j$ is an exact solution. On the other hand, per Assadi's analysis, it is sufficient to find solutions for $\epsilon$, and $\mu = \epsilon$ to obtain $(1+2 \epsilon)$-approximation of pure covering problems.

Solving \ref{eq:sequence} fits naturally inside the MWU framework. Consider $f_i : X \mapsto \abs{\lbrace \vx \in X : \langle \AA_{i:}, \vx \rangle > \vb_i + \epsilon \rbrace}$. The goal is to find a set $X$ of solutions such that $\max_{i \in [n]} f_i(X) < \mu \abs{X}$.

Our main algorithm finds the desired set of solutions $\{\vx_j\}_{j \in [T]}$ iteratively. In each of the $T$ iterations, we use an \emph{MWU-based procedure as a black box} to find the next solution, $\vx_j$. This procedure maintains a distribution of weights over the $n$ constraints. Its goal is to call a base LP solver on a sparse sample of the constraints to find a candidate solution $\vx_j$ that satisfies the ``low-violation'' property relative to the current weights. That is, the solution primarily satisfies the constraints that the MWU framework currently deems important (i.e., those with high weight). MWU reduces finding $\lbrace \vx_j \rbrace_{j \in [T]}$ satisfying \ref{eq:sequence} to iteratively finding $\vx$ such that  for a given $\vp \in \Delta_n$, $\langle \vp, \ones_{\AA \vx > \vb + \epsilon \ones} \rangle < \mu$.

In order to solve this easier problem, we rely on sampling constraints and using a base solver on the sampled constraints. We consider what we call $(\epsilon,\mu)$-low-violation oracles. Given $\vp \in \Delta_n$, a low-violation oracle returns $\vx \in \rr^d$ such that $\langle \vp, \ones_{\AA \vx > \vb + \epsilon \ones} \rangle < \mu$. Hence, we obtain the following theorem.

\begin{theorem}[Short version of \ref{thm:main-theorem}]
    It is possible to find a set of vectors $\lbrace \vx_j \rbrace_{j \in [T]}$ solution to \ref{eq:sequence} using $T = \OO(\frac{\log n}{\mu})$ calls to an $(\epsilon,\mu)$-low-violation oracle.
\end{theorem}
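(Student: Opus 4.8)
The plan is to cast the search for the sequence $\{\vx_j\}_{j\in[T]}$ as a single run of the multiplicative weights update method, where the $t$-th loss vector is exactly the violation indicator $\vv_t = \ones_{\AA\vx_t > \vb+\epsilon\ones}$ produced by the $(\epsilon,\mu)$-low-violation oracle on the current weight distribution $\vp_t$. Concretely, I would initialize $\vp_1$ to the uniform distribution on $[n]$, and at iteration $t$ call the oracle on $\vp_t$ to obtain $\vx_t$ with $\langle \vp_t, \vv_t\rangle < \mu$; then perform the standard MWU weight update $w_{t+1,i} = w_{t,i}\cdot(1+\eta)^{\vv_{t,i}}$ (using $(1-\eta)$ for gains, but here the losses are $0$-$1$ so the multiplicative form is clean) and renormalize. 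Since the entries of $\vv_t$ lie in $\{0,1\}$, a large constant step size $\eta$ (say $\eta = 1/2$) is admissible, which is precisely the source of the linear rather than quadratic dependence flagged in the overview.

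The key step is the MWU regret guarantee. The standard argument tracks the potential $\Phi_t = \sum_i w_{t,i}$. On one hand, $\Phi_{t+1} = \sum_i w_{t,i}(1+\eta)^{\vv_{t,i}} \le \Phi_t\bigl(1 + \eta\langle \vp_t,\vv_t\rangle\bigr) \le \Phi_t(1+\eta\mu)$, using $(1+\eta)^x \le 1+\eta x$ for $x\in[0,1]$ and the oracle guarantee $\langle \vp_t,\vv_t\rangle < \mu$. Telescoping over $t=1,\dots,T$ and using $\Phi_1 = n$ gives $\Phi_{T+1} \le n\,(1+\eta\mu)^T \le n\,e^{\eta\mu T}$. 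On the other hand, for every fixed coordinate $i$, $\Phi_{T+1} \ge w_{T+1,i} = (1+\eta)^{\sum_{t=1}^T \vv_{t,i}} = (1+\eta)^{f_i(X)}$ where $X=\{\vx_j\}_{j\in[T]}$ and $f_i$ counts the iterations in which constraint $i$ was $\epsilon$-violated. Combining, $f_i(X)\,\log(1+\eta) \le \log n + \eta\mu T$ for every $i$. I would then choose $T$ of order $\frac{\log n}{\mu}$ with the hidden constant tuned (together with $\eta$) so that the right-hand side is strictly below $\mu T \log(1+\eta)$; for instance with $\eta$ a fixed constant, $\log n + \eta\mu T < \mu T \log(1+\eta)$ holds once $\mu T > \frac{\log n}{\log(1+\eta)-\eta}$, which requires $\log(1+\eta) > \eta$ — false — so in fact one balances differently: take $\eta$ small enough (e.g. $\eta$ such that $\log(1+\eta)\ge \eta(1-\eta/2) \ge \eta/2$ for $\eta\le 1$) and note we actually want $f_i(X) \le \frac{\log n}{\log(1+\eta)} + \frac{\eta}{\log(1+\eta)}\mu T < \mu T$, i.e. $\frac{\log n}{\log(1+\eta)} < (1 - \frac{\eta}{\log(1+\eta)})\mu T$; since $\frac{\eta}{\log(1+\eta)} < \frac{1}{1-\eta/2}$ is bounded away from $1$ for small $\eta$, this is satisfied with $T = O\bigl(\frac{\log n}{\mu}\bigr)$. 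This yields $f_i(X) < \mu T$ for all $i\in[n]$, which is exactly \eqref{eq:sequence}.

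The main obstacle is the bookkeeping around the step size: one must pick $\eta$ as an absolute constant (not depending on $\mu$) so that the two competing terms $\frac{\log n}{\log(1+\eta)}$ and $\frac{\eta}{\log(1+\eta)}\mu T$ can be separated from $\mu T$ with room to spare, and then verify that the resulting $T$ is $O(\log n/\mu)$ with a clean constant. This is where the $0$-$1$ structure of $\vv_t$ is essential — it is what licenses a constant $\eta$ and the inequality $(1+\eta)^{\vv_{t,i}} \le 1 + \eta\vv_{t,i}$ with no loss. A minor point to address is the degenerate case $\mu \ge 1$ (the oracle guarantee becomes vacuous or the bound $\langle\vp_t,\vv_t\rangle<\mu$ is trivially met by any $\vx_t$), but then $T = O(\log n)$ suffices and the same computation goes through; and one should note the oracle is allowed to fail/report infeasibility, in which case the whole procedure reports infeasibility, consistent with the decision-problem framing. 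Everything else is the textbook MWU analysis specialized to this loss model, so I would present it compactly, referencing \cref{sec:mwubasics} for the standard lemma if it is stated there.
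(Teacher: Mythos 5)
Your overall strategy --- run MWU with the $0$-$1$ violation vectors as losses, a constant step size, and the standard potential $\Phi_t=\sum_i w_{t,i}$ --- is essentially the paper's own argument: the paper does the same computation with the softmax potential $\smax$ in \cref{sec:mwubasics} (weights proportional to an exponential of the accumulated violations, cf.\ \cref{cl:prob_propto_exp}) and then plugs it into the proof of \cref{thm:main-theorem}. Your derivation is correct up to the inequality $f_i(X)\,\log(1+\eta)\le \log n+\eta\mu T$.

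The final balancing step, however, fails as written. Since $\log(1+\eta)<\eta$ for every $\eta>0$, the factor $\frac{\eta}{\log(1+\eta)}$ is strictly greater than $1$, so your bound $f_i(X)\le \frac{\log n}{\log(1+\eta)}+\frac{\eta}{\log(1+\eta)}\mu T$ has a second term that alone already exceeds $\mu T$; the inequality you then require, $\frac{\log n}{\log(1+\eta)}<\bigl(1-\frac{\eta}{\log(1+\eta)}\bigr)\mu T$, has a negative right-hand side and cannot hold for any constant $\eta$ and any $T$. You noticed the symptom (``requires $\log(1+\eta)>\eta$ --- false'') but the proposed rebalancing does not repair it: $\frac{\eta}{\log(1+\eta)}$ approaches $1$ from above as $\eta\to 0$, never from below. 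The correct (and intended) resolution is not to aim for the exact threshold $\mu T$ from an oracle with guarantee $\mu$: your potential argument yields $f_i(X)\le c'\mu T$ for some constant $c'>1$ (the paper's \cref{thm:mwu} gives $3\mu T$), and one then invokes the oracle at parameter $c\mu$ for a sufficiently small constant $c$ --- exactly what the full \cref{thm:main-theorem} does by assuming an $(\epsilon,\mu/3)$-low-violation oracle, the constant being swept into the $\OO(\cdot)$ of the short version. With that one-line correction (call the oracle with parameter $\mu/3$, or equivalently rescale $\mu$ and note constants are absorbed), your proof is complete and coincides, in substance, with the paper's; your remarks on the case $\mu\ge 1$ and on infeasibility are fine.
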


\paragraph{Exploiting quantum speedups via Grover search.}

We show that our framework is extremely amenable to quantum speedups. In fact, since the running time of our algorithms is largely dominated by the time required to identify violated constraints, being able to efficiently do so could lead to important improvements.

We show in \cref{sec:quantum} that in the quantum query access model, where each constraint has an associated feasibility oracle given a candidate solution $\vx$, one can replace the time dependence in input sparsity with $\sqrt{n} d^{\OO(1)}$, which yields sublinear time algorithms when $n \gg d$. While previous works~\cite{apersQuantumSpeedupGraph2023,apersQuantumSpeedupsLinear2024} have developed algorithms that are carefully crafted to fit the restrictive requirements of quantum computational models, in our case the source of speedups stems uniquely from quickly identifying violated constraints. This is achieved via a generalization of Grover's algorithm, which is the only quantum power tool we rely on (c.f. \cref{lem:apers}).

Our main challenge is to access the sampling probabilities. Our sampling probabilities are proportional to the weights inside the MWU framework. For a set of solutions $X$, the weight of a constraint is an exponential of the number of solutions from $X$ violating this same constraint. Since we update the weights \emph{aggressively}, any constant upper bound is no longer true after a constant number of iterations. This prevents us from reducing the update time of the sampling probabilities even in the classical scheme. With quantum query access, we do not maintain the weight value from one iteration to the other, hence we need to recompute it at each iteration. That means at iteration $t$, we require $t$ row-query access to the constraint matrix. In the $\epsilon$-approximation setting, we show that we can actually use one query access to $\AA$ at each iteration, but this comes at the cost of a larger width.

We also need to estimate the sum of the weights to a constant factor at each iteration, this is performed at each iteration by sampling few constraints.

\subsection{Organization}

The remainder of this paper is organized as follows. We begin in \cref{sec:preliminaries} by establishing the necessary preliminaries, including our notation, formal definitions for the classes of linear programs we consider, and the quantum query model used in our analysis. In \cref{sec:paradigm}, we introduce our unified iterative sparsification paradigm, which forms the theoretical core of our work. We formalize the problem of finding a set of solutions with a low frequency of constraint violations and introduce the key algorithmic primitive, the low-violation oracle, which underpins our entire framework. \cref{sec:applications} demonstrates the versatility of this framework in the classical setting; we show how to construct these oracles and use them to generalize the seminal results of Clarkson and Assadi, to efficient algorithms for low-precision general LPs and mixed packing and covering problems. Finally, in \cref{sec:quantum}, we show that our framework is amenable to quantum acceleration, demonstrating how replacing the classical bottleneck of identifying violated constraints with a quantum search subroutine leads to significant speedups and improved query complexities.
 \section{Preliminaries}
\label{sec:preliminaries}

\subsection{Notation}

We write vectors and matrices in bold. Matrices are written in uppercase. We use $\langle . , . \rangle$ notation to denote the inner product between two vectors. For a matrix $\AA \in \rr^{n \times d}$ and an integer $i \in [n]$, $\AA_{i:}$ denotes the $i$\textsuperscript{th} row of $\AA$. For a subset $S \subseteq [n]$, we let $\AA_{S}$ be the matrix $\AA$ whose rows are restricted to those in set $S$. We define $\nnz(\AA)$ as the number of nonzero entries of $\AA$. For a linear program of the form \ref{eq:general-lp}, there is a set $B$ of $d$ constraints such that the solution $\vx$ satisfies $\AA_B \vx = \vb_B$, which we call the base of the linear program.

\subsection{Classes of Linear Programs}

In the literature, LPs have been analyzed under different assumptions. Consider a linear program of the form $\lbrace \vx : \AA \vx \leq \vb \rbrace$. The first important remark on solving LPs is whether we have an \emph{implicit} or \emph{explicit} access to $\AA$. The classic MWU approach can be used with implicit access as long as a linear optimization oracle exists. Often, implicit LPs arise when the number variables $d$ is exponential while the number of constraint is not. This can be the case for spanning tree covering problems for instance, the number of spanning trees on a graph is exponential while there is only one constraint per edge. In this work, we focus on LPs given \emph{explicitly}, meaning $\AA$ can be stored in the (Q)RAM.

\paragraph{General LPs.} In general, a linear program can be written as:
\begin{equation}
    \label{eq:general-lp}
    \max \langle \vc, \vx \rangle, \qquad \text{subject to } \AA \vx \leq \vb.
\end{equation}
We denote by $\opt$ the optimal value of \eqref{eq:general-lp}. For exact solvers, we consider LPs that are not degenerate, that means there is only one set of $d$ constraints $B \subseteq [n]$ such that the optimum of the linear program lies at their intersection. An $\epsilon$-feasible solution of \ref{eq:general-lp} is a vector $\vx$ such that $\AA \vx \leq \vb + \epsilon \ones$. An $\epsilon$-approximation of \ref{eq:general-lp} is an $\epsilon$-feasible solution $\vx$ such that $\langle \vc, \vx \rangle \geq \opt$.

\paragraph{Positive LPs.}

Positive LPs are of the form \ref{eq:general-lp} with an added constraint on the sign of rows of $\AA$. We distinguish three types of positive LPs, first \emph{pure covering} LPs are of the form $\min \lbrace \psnorm[1]{\vx} : \CC \vx \geq \ones \rbrace$ where $\CC$ has positive entries. Second, there are \emph{pure packing} LPs with the form $\max \lbrace \psnorm[1]{\vx} : \PP \vx \leq \ones \rbrace$ where $\PP$ has positive entries. Note that the dual of a pure covering LP is a pure packing LP. Finally, there are mixed packing/covering LPs, they are of the form $\lbrace \vx : \CC \vx \geq \ones, \PP \vx \leq \ones \rbrace$.

By adding a poly-logarithmic number of variables, we can assume that $\CC$ and $\PP$ have entries in $[0,1]$. Moreover, we can assume that the exact solution is $\vx$ such that $\zeros \leq \vx \leq \ones$ coordinate-wise.

We say that $\vx$ is a $(1+\epsilon)$-approximation of a covering problem if $(1+\epsilon) \opt \geq \psnorm[1]{\vx}$, and $\CC \vx \geq \ones$. Note that if we have $\tilde{\vx}$ such that $\psnorm[1]{\tilde{\vx}} \leq \opt$ and $\CC \tilde{\vx} \geq (1-\epsilon)\ones$, then $\vx = \tilde{\vx}/(1-\epsilon)$ is a $(1+2\epsilon)$-approximation.

\subsection{Quantum Complexity Model}

To analyze the performance of our quantum algorithms, we adopt a standard quantum query model, consistent with recent literature on quantum algorithms for optimization and linear algebra~\cite{apersQuantumSpeedupGraph2023,apersQuantumSpeedupsLinear2024}. In this model, the input data of the linear program (specifically the constraint matrix $\AA \in \mathbb{R}^{n \times d}$ and the vector $\vb \in \mathbb{R}^n$) is not assumed to be stored in active memory (RAM). Instead, we access the constraints through queries to a quantum oracle.

The fundamental operation we consider is the verification of a single constraint against a candidate solution vector. For any given classical vector $\vx \in \mathbb{R}^d$, we assume access to a \emph{feasibility oracle}, denoted $O_{\vx,\epsilon}$. This oracle acts on the indices of the $n$ constraints and identifies whether a specific constraint is violated by $\vx$. Formally, the oracle is a unitary transformation that acts on the standard basis as follows:
\begin{equation*}
    O_{\vx,\epsilon} : |i\rangle|z\rangle \mapsto |i\rangle|z \oplus \vv_i^\epsilon(\vx)\rangle \quad \forall i \in [n], z \in \{0,1\}
\end{equation*}
where $\oplus$ denotes addition modulo $2$, and $\vv^\epsilon_i(\vx)$ is a binary indicator of violation:
\begin{equation*}
    \vv_i^\epsilon(\vx) = 
    \begin{cases} 1 & \text{if } \langle \AA_{i:}, \vx \rangle > \vb_i + \epsilon \quad \text{(constraint } i \text{ is violated)} \\ 0 & \text{if } \langle \AA_{i:}, \vx \rangle \leq \vb_i + \epsilon \quad \text{(constraint } i \text{ is satisfied)} \end{cases}
\end{equation*}

The primary metric for the complexity of our quantum algorithms is the \emph{quantum query complexity}, defined as the number of calls made to oracles of this type.

The implementation of the oracle $O_{\vx,\epsilon}$ for a given $\vx$ requires the ability to compute the inner product $\langle \AA_{i:}, \vx \rangle$ and compare it to $\vb_i + \epsilon$. This, in turn, requires access to the data of the $i$-th row of $\AA$ and the $i$-th entry of $\vb$. We therefore model each query to $O_{\vx, \epsilon}$ as a single \emph{row query} to the pair $(\AA, \vb)$, which provides the necessary information to evaluate the $i$-th constraint.

We will use quantum query access to $(\AA,\vb)$ to sample the constraints faster than in the classical query model. The execution of the classical LP solver on small, sampled sub-problems, is performed on a classical computer. This hybrid model allows us to precisely isolate the source of quantum speedup, which, in our framework, stems from the ability to find a violated constraint (an index $i$ for which $\vv_i^\epsilon(\vx) = 1$) more efficiently than is possible classically.

 \section{The Iterative Sparsification Framework}
\label{sec:paradigm}

In this section, we present the general sparsification framework. Formally, our goal is to solve the following problem. In \cref{sec:applications} we present how to reduce LP problems to \cref{prob:intro1}.

\begin{restatable}{problem}{intro}
    \label{prob:intro1}
    Let $\epsilon \geq 0$, $\mu > 0$, and suppose we have a linear program $(\AA,\vb,\vc)$ with optimal value $\opt$. In the \emph{low-violation solution set problem}, we must find a set $\lbrace \vx_j \rbrace_{j = 1}^T$ such that:
    \begin{enumerate}
        \item $\langle \vc, \vx_j \rangle \geq \opt$ for all $j \in [T]$;
        \item For every constraint $i \in [n]$, the set of solutions violating it is small: 
        \[
        \abs{\lbrace j \in [T] : \langle \AA_{i:}, \vx_j \rangle > \vb_i + \epsilon \rbrace} < \mu \cdot T\,.\]
\end{enumerate}
\end{restatable}

We show that it is possible to solve \cref{prob:intro1} by iteratively using a low-violation oracle (\cref{def:lvo}). The central result of this section is the following.

\begin{theorem}
    \label{thm:main-theorem}
    Consider a linear program of the form
    \begin{equation*}
        \max_{\vx} \langle \vc, \vx \rangle, \qquad \text{subject to } \AA \vx \leq \vb.
    \end{equation*}
    Assume we are given a $(\epsilon,\mu/3)$-low-violation oracle with running time $\mathcal{T}$, then in $T = \OO\left(\frac{\log n}{\mu}\right)$ iterations, we are able to compute a solution to \cref{prob:intro1} $\lbrace \vx_1, \ldots, \vx_T \rbrace$. Moreover, each iteration runs in time $\OO(\nnz(\AA) + \mathcal{T})$.
\end{theorem}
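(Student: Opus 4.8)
The plan is to run the multiplicative weights update (MWU) method over the $n$ constraints, using the low-violation oracle to supply the per-iteration loss vectors. I would maintain a weight vector $\vw_t \in \rr_{\geq 0}^n$, initialized to $\vw_1 = \ones$, and at iteration $t$ form the distribution $\vp_t = \vw_t / \langle \ones, \vw_t \rangle \in \Delta_n$. Feed $\vp_t$ to the $(\epsilon,\mu/3)$-low-violation oracle to obtain $\vx_t \in \mathcal{D}$ with $\langle \vc, \vx_t \rangle \geq \opt$ and violation vector $\vv_t = \ones_{\AA \vx_t > \vb + \epsilon \ones}$ satisfying $\langle \vp_t, \vv_t \rangle < \mu/3$. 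Then apply the multiplicative update $\vw_{t+1,i} = \vw_{t,i} \cdot (1+\eta)^{(\vv_t)_i}$ for a constant step size $\eta$ (e.g. $\eta = 1$), and repeat for $T = \OO\!\left(\frac{\log n}{\mu}\right)$ iterations. The first requirement of \cref{prob:intro1} is immediate since every oracle call returns a point with objective value at least $\opt$; the work is in verifying the second requirement.

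For the second requirement, I would use the standard MWU potential argument on $\Phi_t = \langle \ones, \vw_t \rangle$. On one hand, the low-violation guarantee gives $\Phi_{t+1} = \sum_i \vw_{t,i}(1+\eta)^{(\vv_t)_i} \leq \Phi_t\bigl(1 + \eta \langle \vp_t, \vv_t\rangle\bigr) \leq \Phi_t \exp(\eta \mu/3)$, using $(1+\eta)^x \leq 1 + \eta x$ for $x \in \{0,1\}$ and $1 + z \leq e^z$; hence $\Phi_{T+1} \leq n \exp(\eta \mu T / 3)$. On the other hand, for any fixed constraint $i$, writing $N_i = \abs{\lbrace j \in [T] : (\vv_j)_i = 1\rbrace}$, we have $\Phi_{T+1} \geq \vw_{T+1,i} = (1+\eta)^{N_i}$. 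Combining, $(1+\eta)^{N_i} \leq n \exp(\eta \mu T/3)$, so $N_i \leq \frac{\ln n}{\ln(1+\eta)} + \frac{\eta\mu T}{3\ln(1+\eta)}$. With $\eta = 1$ this is $N_i \leq \log_2 n + \frac{\mu T}{3\ln 2} \leq \log_2 n + \tfrac{1}{2}\mu T$. Choosing $T = \lceil c \log n / \mu\rceil$ with a large enough absolute constant $c$ (so that $\log_2 n \leq \tfrac{1}{2}\mu T$) yields $N_i \leq \mu T$, and with a slightly larger $c$ one gets the strict inequality $N_i < \mu T$ demanded by \cref{prob:intro1}; this holds simultaneously for all $i \in [n]$ since the argument is per-coordinate.

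For the running time, each iteration consists of one oracle call, costing $\mathcal{T}$, plus the bookkeeping: computing $\vx_t$'s violation vector $\vv_t$ (which is typically returned by the oracle, or computed as $\ones_{\AA\vx_t > \vb + \epsilon\ones}$ in time $\OO(\nnz(\AA))$), and performing the multiplicative weight update on the at most $\nnz(\AA)$-supported coordinates — also $\OO(\nnz(\AA))$, or even $\OO(n)$. So each iteration is $\OO(\nnz(\AA) + \mathcal{T})$, as claimed; multiplying by $T = \OO(\log n/\mu)$ gives the total, though the theorem only asserts the per-iteration bound.

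The main obstacle — really the only subtle point — is making sure the constants line up so that the final bound is the \emph{strict} inequality $N_i < \mu T$ rather than $N_i \leq \mu T$, and that the "$1/3$" slack in the $(\epsilon, \mu/3)$ oracle is exactly what absorbs the $\frac{\eta}{\ln(1+\eta)}$ factor (which is $1/\ln 2 \approx 1.44 < 3$ for $\eta = 1$) plus the additive $\log n$ term. This is purely a matter of choosing the hidden constant in $T = \OO(\log n/\mu)$ appropriately; there is no conceptual difficulty, but it is worth stating the inequalities with explicit constants to confirm that the $1/3$ (as opposed to, say, $1/2$) is what the argument needs. One should also note that the analysis never uses any structure of $\mathcal{D}$, $\vc$, or $\AA$ beyond what the oracle promises, which is exactly why the framework is modular.
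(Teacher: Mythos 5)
Your proposal is correct and follows essentially the same route as the paper: the paper also runs MWU over the constraints with exponential weights (its Claim stating $\vp_t \propto 2^{\sum_\tau \vv^\epsilon(\vx_\tau)}$ is exactly your $\eta=1$ update) and proves the bound via a potential argument, merely packaged as a black-box MWU lemma using the log-sum-exp potential, which is your sum-of-weights potential after taking logarithms. Your explicit tracking of the constants and of the strict inequality $N_i < \mu T$ is, if anything, slightly more careful than the paper's treatment.
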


Based on this, we can now focus exclusively on efficiently implementing a low-violation oracle. To do so it suffices to solve an LP on a sub-sampled set of constraints. To ensure that this procedure yields a low-violation oracle we need to sample sufficiently many constraints (the number depends on the output space), and we need to sample them with appropriate probabilities. It turns out that these probabilities are exactly those used by the MWU routine. For the quantum implementation of our algorithms, we will leverage the following structural property of these weights.

\begin{claim}
    \label{cl:prob_propto_exp}
    In iteration $t$, we have $\vp_t \propto 2^{\sum_{\tau \in [t]} \vv^\epsilon (\vx_\tau)}$, where $\vv_i^\epsilon(\vx)$ as $1$ if $\langle \AA_{i:}, \vx \rangle > \vb_i + \epsilon$, and $0$ otherwise.
\end{claim}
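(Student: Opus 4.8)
The plan is to unwind the definition of the MWU weights used inside the low-violation oracle and show that the aggressive (constant) step size produces exactly a factor of $2$ per violation. In the MWU framework that underlies \cref{thm:main-theorem}, each iteration $t$ feeds the loss vector $\vv^\epsilon(\vx_t)\in\{0,1\}^n$ (the indicator of constraints violated by the current iterate $\vx_t$) into the weight update, and the sampling distribution $\vp_t$ is the normalization of the current weight vector $\vw_t$. So the first step is simply to record the update rule: with the step size $\eta$ chosen in our instantiation, $\vw_{t+1,i} = \vw_{t,i}\cdot(1+\eta)^{\vv_i^\epsilon(\vx_t)}$ (or the equivalent exponential form $\vw_{t,i}\propto \exp(\eta\sum_{\tau<t}\vv_i^\epsilon(\vx_\tau))$), with $\vw_{1}=\ones$. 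The only arithmetic point to check is that the constant step size we use corresponds to a per-violation multiplicative factor of exactly $2$, i.e. $\eta = 1$ in the $(1+\eta)$ form, or $\eta = \ln 2$ in the exponential form; this is where the ``$2$'' in the statement comes from, and it should match the choice made in \cref{sec:mwubasics} / the proof of \cref{thm:main-theorem}.

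Next I would iterate the update from $t=1$: since $\vw_{1}=\ones$, an immediate induction on $t$ gives $\vw_{t,i} = \prod_{\tau\in[t-1]} 2^{\vv_i^\epsilon(\vx_\tau)} = 2^{\sum_{\tau\in[t-1]}\vv_i^\epsilon(\vx_\tau)}$ (or $\sum_{\tau\in[t]}$, depending on the indexing convention for which iterates have already been produced when $\vp_t$ is formed — I would align this with the convention fixed in \cref{thm:main-theorem}). Finally, since $\vp_t$ is obtained from $\vw_t$ by dividing by $\sum_{i\in[n]}\vw_{t,i}$, the normalization constant is independent of $i$, so $\vp_{t,i}\propto \vw_{t,i} = 2^{\sum_{\tau}\vv_i^\epsilon(\vx_\tau)}$, which is exactly the claimed identity (read coordinate-wise, this is $\vp_t\propto 2^{\sum_\tau \vv^\epsilon(\vx_\tau)}$).

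There is no real obstacle here — the claim is essentially a bookkeeping restatement of the MWU update with a specific step size — so the ``hard part'' is only making sure the conventions line up: (i) that the step size used to prove the iteration bound $T=\OO(\log n/\mu)$ is the one giving factor $2$, and (ii) that the summation range ($\tau\in[t]$ versus $\tau\in[t-1]$) matches when exactly $\vp_t$ is sampled from relative to when $\vx_t$ is computed. Once those are pinned down, the proof is the two-line induction above. This structural form is what the quantum implementation exploits: to evaluate $\vp_{t,i}$ up to the (global) normalization it suffices to compute $\sum_{\tau\in[t]}\vv_i^\epsilon(\vx_\tau)$, which costs $t$ row-queries to $\AA$ (one per stored iterate $\vx_\tau$), as claimed in the overview.
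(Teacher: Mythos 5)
Your proposal is correct and takes essentially the same route as the paper, which treats the claim as a definitional consequence of the MWU instantiation: in \cref{sec:mwubasics} the distribution is set to $\vp_t = \nabla \smax\left(\sum_{\tau < t} \vv_\tau\right)$, i.e.\ the normalized exponentiated sum of the binary violation vectors, which is exactly your multiplicative-update induction plus the observation that normalization is $i$-independent. The two conventions you flag (base $2$ in the claim and in the algorithms' weights $\vw(X)=2^{\sum_{\vx\in X}\vv^\epsilon(\vx)}$ versus base $e$ in the appendix lemma, and $\tau\in[t]$ versus $\tau\in[t-1]$) are indeed the only bookkeeping points, and they are harmless since for binary losses both bases support the same MWU growth bound up to constants.
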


\subsection{The Abstract Framework}
\label{ssec:reduction}

We would like to sample from a set of $n$ constraints, a set of $r$ constraints with $r \ll n$ such that a solution on these $r$ constraints gives us a solution of the LP. Such a one-shot sparsification technique is unlikely to exists as mentioned in the introduction. A simpler problem is to find solutions $\vx_1, \ldots, \vx_T$ such that any constraint is violated by a small portion of $\lbrace \vx_j \rbrace_{j \in [T]}$. Formally, instead of sparsifying the LP directly, we want to solve the following problem. 

\intro*

First, we remark that \cref{prob:intro1} is easier than finding an exact solution of the LP or even an $\epsilon$-approximation. Indeed, if $\vx$ is an exact solution of the linear program, then $\lbrace \vx \rbrace$ is a solution to \cref{prob:intro1}. Also, if $\vx$ is an $\epsilon$-approximation, then $\lbrace \vx \rbrace$ is a solution to \cref{prob:intro1}.

\cref{prob:intro1} is easier since we accept $T \geq 2$. With $\mu < 1$, we should have $T \geq \mu^{-1}$. We will prove we can construct a set of solutions with $T = \OO(\frac{\log n}{\mu})$. While it could be possible, it is unlikely for one to be able to find the $T$ solutions without them being correlated. Meaning, we select the sets of constraints without computing a solution in between. Assume one seeks an exact solution to the LP using an exact solver. In order to create the $T$ solutions, one may use the exact solver on sampled constraints. Using \cref{lem:sampling2} we can upper bound the average number of violated constraint, however this does not mean anything on the probability of a specific constraint to be violated. In fact, constraints from the basis are far more likely to be violated. Instead of trying to find the set of $T$ solutions in a one-shot fashion, we build them iteratively using the MWU framework.

\begin{theorem}
    \label{thm:mwu}
    Let $f : \Delta_n \mapsto \lbrace 0,1 \rbrace^n$ be a (randomized) routine which given any probability distribution $\vp \in \Delta_n$, it outputs a vector $\vv \in \lbrace 0,1 \rbrace^n$ such that $\langle \vp, \vv \rangle \leq \mu$. Then one can provide a sequence $\lbrace \vp_t \rbrace_{t=1}^T$ such that the corresponding outputs $\lbrace \vv_{t} \rbrace_{t=1}^T$ satisfy 
    \begin{equation*}
        \max \left\lbrace \frac{1}{T} \sum_{t=1}^{T} \vv_t \right\rbrace \leq 3 \mu,
    \end{equation*}
    where $T = \OO(\frac{\log n}{\mu})$.
\end{theorem}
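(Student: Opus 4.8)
The plan is to run the standard multiplicative weights update (MWU) / Hedge algorithm with the loss vectors supplied by the routine $f$, and to exploit the fact that the losses lie in $\{0,1\}^n$ to afford a constant step size. Concretely, I would initialize $\vw_1 = \ones \in \rr^n$ (equivalently $\vp_1$ uniform), and at each step $t \in [T]$ set $\vp_t = \vw_t / \langle \ones, \vw_t\rangle$, call $\vv_t = f(\vp_t)$ (so that $\langle \vp_t, \vv_t\rangle \le \mu$ by hypothesis), and update $\vw_{t+1,i} = \vw_{t,i} \cdot 2^{\vv_{t,i}}$ — i.e. the multiplicative factor is $2$ on violated coordinates and $1$ otherwise. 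This is exactly the weighting scheme recorded in \cref{cl:prob_propto_exp}, which is reassuring since the theorem is meant to be the engine behind that claim.

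The core of the argument is the usual potential-function telescoping applied to $\Phi_t := \langle \ones, \vw_t \rangle$. For the upper bound, since $\vv_{t,i} \in \{0,1\}$ we have the exact identity $2^{\vv_{t,i}} = 1 + \vv_{t,i}$, hence
\[
\Phi_{t+1} = \sum_i \vw_{t,i}(1 + \vv_{t,i}) = \Phi_t\left(1 + \langle \vp_t, \vv_t\rangle\right) \le \Phi_t (1 + \mu) \le \Phi_t \, e^{\mu}.
\]
Telescoping from $t = 1$ (where $\Phi_1 = n$) gives $\Phi_{T+1} \le n\, e^{\mu T}$. For the lower bound, fix any coordinate $k$; since all weights stay positive, $\Phi_{T+1} \ge \vw_{T+1,k} = 2^{\sum_{t=1}^T \vv_{t,k}}$. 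Combining the two bounds and taking $\log_2$ yields
\[
\sum_{t=1}^T \vv_{t,k} \le \log_2 n + \frac{\mu T}{\ln 2}.
\]
Dividing by $T$ and choosing $T = \lceil (\ln 2)\, \mu^{-1} \log_2 n \rceil = \OO(\mu^{-1}\log n)$ makes the first term at most $\mu / \ln 2 < 1.5\mu$ and the second term exactly $\mu/\ln 2 < 1.5\mu$, so the average is at most $3\mu$; since $k$ was arbitrary, $\max\{\frac1T\sum_t \vv_t\} \le 3\mu$, which is the claim. (One could tighten the constant, but $3\mu$ is all that is needed.)

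The only subtlety — and the one place I would be careful — is that $f$ is randomized: the hypothesis $\langle \vp_t, \vv_t\rangle \le \mu$ should be read as holding for every $\vp$ that can arise, so that the bound $\Phi_{t+1} \le \Phi_t(1+\mu)$ holds pointwise (surely) regardless of the internal coins, and no expectation or concentration argument is needed; the sequence $\{\vp_t\}$ is generated adaptively but that is exactly the setting in which the Hedge analysis above goes through unchanged. If instead one only had $\mathbb{E}[\langle \vp_t, \vv_t\rangle \mid \vp_t] \le \mu$, one would replace the deterministic telescoping with a bound on $\mathbb{E}[\Phi_{T+1}]$ via the tower rule and then argue via Markov or Azuma, but I expect the paper intends the surely-bounded reading (consistent with how low-violation oracles are defined via $\langle \vp, \ones_{\AA\vx>\vb+\epsilon\ones}\rangle < \mu$). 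I would state the result in the deterministic-per-step form and remark on this.
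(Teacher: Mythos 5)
Your proposal is correct and is essentially the paper's own argument: the paper runs the same MWU potential analysis, using the $\smax$ (log-sum-exp) potential with $\vp_t = \nabla\smax\bigl(\sum_{\tau<t}\vv_\tau\bigr)$ and the smoothness bound $\smax(\vx+\vdelta)\leq\smax(\vx)+2\langle\nabla\smax(\vx),\vdelta\rangle$, which is the base-$e$ analogue of your base-$2$ telescoping with the exact identity $2^{\vv_{t,i}}=1+\vv_{t,i}$. Your reading of the oracle guarantee as holding surely per step also matches the paper's treatment, so no further changes are needed.
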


We can solve \cref{prob:intro1} using \cref{thm:mwu}, given any probability distribution $\vp \in \Delta_n$, we need to find a solution $\vx$ such that $\langle \vp, \vv^\epsilon(\vx) \rangle \leq \mu$ where $\vv^\epsilon_i(\vx)$ is $1$ if $\langle \AA_{i:}, \vx \rangle > \vb_i +\epsilon$ and $0$ otherwise.

\subsection{Low-Violation Oracles}

Low-violation oracles are used inside our MWU framework. Given any probability distribution, we seek to find $\vx \in \rr^d$ such that
\begin{equation}
    \label{eq:mwu}
    \begin{cases}
        \langle \vc, \vx \rangle &\geq \opt; \\
        \langle \vp, \vv^{\epsilon}(\vx) \rangle &\leq \mu.
    \end{cases}
\end{equation}
Where $\vv_i^{\epsilon}(\vx)$ is $1$ if $\langle \AA_{i:}, \vx \rangle > \vb_i +\epsilon$ and $0$ otherwise.

\begin{definition}
    \label{def:lvo}
    A \emph{$(\epsilon, \mu)$-low-violation oracle} is an oracle such that given any probability distribution $\vp \in \Delta_n$ outputs a solution $\vx$ such that $\langle \vp,\vv^\epsilon(\vx) \rangle \leq \mu$.
\end{definition}

We can prove \cref{thm:main-theorem} using iteratively a low-violation oracle.

\begin{proof}[Proof of \cref{thm:main-theorem}]
   The proof will rely on the MWU method, more precisely on \cref{thm:mwu}. 
      \cref{thm:mwu} provides the sequence $\lbrace \vp_t \rbrace_{t=1}^T$ such that, if at iteration $t$ we are able to compute $\vv_t$ with $\langle \vp_t, \vv_t \rangle \leq \mu$, then $\sum_{t=1 }^T \vv_t \leq 3 T \mu \ones$.
  Given any $\vp_t$, we use a $(\epsilon,\mu)$-low-violation oracle to compute $\vx_t$ such that $\langle \vp_t, \vv^{\epsilon}(\vx_t) \rangle \leq \mu$. Hence, after $T$ iterations, we have $\max \left\lbrace \sum_{t=1}^{T} \vv^{\epsilon}(\vx_t) \right\rbrace \leq 3 T \mu$.
   Thus, with $T = \OO(\log n/\mu)$ calls to a $(\epsilon,\mu/3)$-low-violation oracle, it is possible to solve \cref{prob:intro1}.
\end{proof}

 \section{Applications}
\label{sec:applications}

In this section, we focus on corollaries of \cref{thm:main-theorem}. We first develop the techniques from~\cite{clarksonVegasAlgorithmsLinear1995,assadiSimple$1varepsilon$ApproximationSemiStreaming2025} to design $(\epsilon,\mu)$-low-violation oracles. We then show that we can extend our analysis to find low-precision solutions for general LPs, and how to perform \emph{width} reduction for mixed packing and covering problems. We present the proofs in \cref{apps:applications}.

\subsection{Constructing \texorpdfstring{$(\epsilon,\mu)$}{(epsilon,mu)}-Low-Violation Oracles}

In this subsection, we construct two $(\epsilon,\mu)$-low-violation oracles. Both of them rely on an inner LP solver, which is used on a sparse subset of constraints. We present low-violation oracles that can be decomposed of three steps:
\begin{enumerate}
    \item \emph{Sparsify:} Construct a sparse LP by selecting a small subset of constraints using the probability vector $\vp$.
    \item \emph{Solve:} Use the solver to find a solution to the sparse LP.
    \item \emph{Verify:} Verify the solution violates constraints with small total probability. 
\end{enumerate}

When the solver used in the subroutine is an exact solver, we use Clarkson's sampling lemma (\cref{lem:sampling2} from \cite{clarksonVegasAlgorithmsLinear1995}) to bound the quantity of sampled constraint required. When we use any other solver that has a sparse output set, we bound the number of sampled constraint using a technique similar to Assadi~ \cite{assadiSimple$1varepsilon$ApproximationSemiStreaming2025}.

Our first $(\epsilon,\mu)$-low-violation oracle arises from using an exact solver on a small set of constraints. This result is a corollary of Clarkson's sampling lemma.\footnote{See \cref{lem:sampling2} for the definition of the sampling lemma.}

\begin{lemma}[Corollary of~\cite{clarksonVegasAlgorithmsLinear1995}]
    \label{lem:lvo-clarkson}
    Consider a linear program of the form $\max \langle \vc, \vx \rangle$ subject to $\AA \vx \leq \vb$, where $\AA \in \rr^{n \times d}$. For a parameter $\mu$, given a probability distribution $\vp \in \Delta_n$, the following procedure is a $(\epsilon,\mu)$-low-violation oracle.
    \begin{enumerate}
        \item \emph{Sparsify:} Consider $S$ a subset of constraints such that every $i$ is in $S$ independently with probability at least $\min \lbrace 2 d/\mu \cdot \vp_i, 1 \rbrace$.
        \item \emph{Solve:} Let $\vx$ be the exact solution of the LP subject to constraint set $S$.
\item \emph{Verify:} If $\langle \vp, \vv^\epsilon(\vx) \rangle > \mu$ go back to $1.$, else return $\vx$.
    \end{enumerate}
    Moreover, this procedure takes expected time $\OO(\nnz(\AA) + \Texact(d,2 d/ \mu))$, where $\Texact(d,n)$ is the time required to find an exact solution of a linear program with $d$ variables and $n$ constraints.
\end{lemma}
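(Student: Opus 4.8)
The plan is to show two things: (i) a single round of \emph{Sparsify+Solve} produces, in expectation, a solution $\vx$ whose violation set has small $\vp$-weight, so that the \emph{Verify} rejection test succeeds with probability at least, say, $1/2$; and (ii) conditioning on acceptance only costs a constant factor in expected running time, so the geometric number of rounds is $O(1)$. The value $\langle \vc, \vx\rangle \geq \opt$ holds automatically: the LP restricted to $S$ is a relaxation of the original (fewer constraints), so its optimum is at least $\opt$; one also needs the non-degeneracy assumption from \cref{sec:preliminaries} so that the solver returns a well-defined vertex.

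The heart of the argument is Clarkson's sampling lemma (\cref{lem:sampling2}), which I would invoke as follows. Think of the sampling as: include each constraint $i$ in $S$ independently with probability $q_i := \min\{2d/\mu \cdot \vp_i, 1\}$. Clarkson's lemma bounds the expected total ``importance'' of the constraints violated by the optimal solution of the subsampled LP: if we sample a constraint set where each constraint is picked with probability scaling like $1/k$ times its weight, then the expected weight of the violated constraints is at most $d/(k+1)$ (or a comparable bound) times the total weight. Concretely, with $\vp$ a probability distribution (total weight $1$) and effective sampling rate proportional to $2d/\mu$, the lemma yields
\[
\E\big[\langle \vp, \vv^\epsilon(\vx)\rangle\big] \;\leq\; \frac{d}{2d/\mu} \;=\; \frac{\mu}{2}.
\]
By Markov's inequality, $\Pr[\langle \vp, \vv^\epsilon(\vx)\rangle > \mu] \leq 1/2$, so each round of the loop accepts with probability at least $1/2$. (A technical point: one must make sure the violation indicator $\vv^\epsilon$ with the slack $\epsilon$ is handled correctly — Clarkson's lemma is usually stated for exact violation $\langle \AA_{i:},\vx\rangle > \vb_i$, but since the solver's output $\vx$ is feasible on $S$ and the constraints in $S$ have zero $\epsilon$-violation a fortiori, replacing the strict inequality by the $\epsilon$-slacked one only shrinks the violation set, so the bound is preserved. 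This is the step I expect to require the most care — lining up the exact statement of the cited sampling lemma with the $\epsilon$-slack indicator used here.)

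For the running time: computing $S$ costs $O(\nnz(\AA))$ (or $O(n)$ to draw the indicators, plus reading the sampled rows); the expected size of $S$ is $\sum_i q_i \leq 2d/\mu$, so the solver runs in time $\Texact(d, 2d/\mu)$ — here I would either use that $\sum_i q_i \le 2d/\mu \cdot \sum_i \vp_i = 2d/\mu$ deterministically in expectation and apply a Chernoff/Markov bound to control the actual size, or simply absorb it into the expected-time statement; the \emph{Verify} step costs one more $O(\nnz(\AA))$ pass to evaluate $\AA\vx$ and compare with $\vb + \epsilon\ones$. Since the loop repeats a geometric number of times with success probability $\geq 1/2$, the expected total time is $O\big(\nnz(\AA) + \Texact(d, 2d/\mu)\big)$, as claimed. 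Finally, upon termination the \emph{Verify} test guarantees $\langle \vp, \vv^\epsilon(\vx)\rangle \leq \mu$, so the returned $\vx$ satisfies the defining property of a $(\epsilon,\mu)$-low-violation oracle (\cref{def:lvo}).
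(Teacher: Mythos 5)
Your proposal matches the paper's proof essentially step for step: invoke Clarkson's sampling lemma (\cref{lem:sampling2}) with $r = 2d/\mu$ to get expected violation weight at most $\mu/2$, apply Markov's inequality so each round accepts with probability at least $1/2$, and conclude the expected time bound from the geometric number of rounds. Your remarks on the $\epsilon$-slack (which only shrinks the violation set) and on controlling $\abs{S}$ are careful points the paper glosses over, but they do not change the argument.
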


Similarly, we use the same technique as used in~\cite{assadiSimple$1varepsilon$ApproximationSemiStreaming2025} to prove that as long as the solver has a small output space, the number of sampled constraints does not have to be large.

\begin{lemma}
    \label{lem:lvo-assadi}
    Consider a linear program of the form
    \begin{equation*}
        \max \langle \vc, \vx \rangle, \qquad \text{subject to } \AA \vx \leq \vb.
    \end{equation*}
    For $\epsilon \geq 0$ and $\mu > 0$, assume we have an $\epsilon$-approximate solver $\mathcal{A}$ such that the number of distinct outputs of $\mathcal{A}$ is upper bounded by $N$. The following procedure describes a $(\epsilon,\mu)$-low-violation solver and require one call to $\mathcal{A}$ with high-probability.
    \begin{enumerate}
        \item \emph{Sparsify:} Consider $S$ a subset of constraints such that every $i$ is in $S$ independently with probability at least $\min \lbrace \frac{\log (Nn)}{\mu} \cdot \vp_i, 1 \rbrace$.
        \item \emph{Solve:} Compute $\vx$ using $\mathcal{A}$ on the LP made of constraints from $S$.
        \item \emph{Verify:} If $\langle \vp, \vv^\epsilon(\vx) \rangle > \mu$ go back to $1.$, else return $\vx$.
    \end{enumerate}
    Moreover, this procedure takes expected time $\OO(\nnz(\AA) + \Tapprox(d,\frac{\log (Nn)}{\mu}, \epsilon))$, where $\Tapprox(d,n,\epsilon)$ is the time required to find an $\epsilon$-approximate solution of a linear program with $d$ variables and $n$ constraints.
\end{lemma}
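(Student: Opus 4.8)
\textbf{Proof proposal for \cref{lem:lvo-assadi}.}

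The plan is to follow the Assadi-style ``uniform convergence over a finite output set'' argument. Fix the input distribution $\vp \in \Delta_n$. The key idea is that the solver $\mathcal{A}$, even though it is run on the random subset $S$, can only ever produce one of at most $N$ distinct vectors $\vx^{(1)}, \ldots, \vx^{(N)}$; whatever $\mathcal{A}$ does on $S$, its output is one of these. So it suffices to show that, with high probability over the draw of $S$, \emph{every} candidate output $\vx^{(k)}$ that is ``bad'' (meaning $\langle \vp, \vv^\epsilon(\vx^{(k)}) \rangle > \mu$) is \emph{ruled out} by $S$ — i.e.\ $S$ contains at least one constraint violated by $\vx^{(k)}$, so that $\vx^{(k)}$ is infeasible for the sparse LP and hence cannot be returned by $\mathcal{A}$ (which, being an $\epsilon$-approximate solver, only returns $\epsilon$-feasible points for its instance). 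Since $\vv^\epsilon(\vx^{(k)})$ already uses the slack $\epsilon$, a constraint with $\vv^\epsilon_i(\vx^{(k)}) = 1$ witnesses that $\vx^{(k)}$ is not $\epsilon$-feasible for $\AA_S$.

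First I would do the single-candidate calculation: for a fixed bad $\vx^{(k)}$, let $V_k = \{ i : \vv^\epsilon_i(\vx^{(k)}) = 1\}$, so $\sum_{i \in V_k} \vp_i > \mu$ by assumption. Each $i$ lands in $S$ independently with probability at least $\min\{ \frac{\log(Nn)}{\mu} \vp_i, 1\}$. If some $i \in V_k$ has $\frac{\log(Nn)}{\mu}\vp_i \geq 1$ then that constraint is in $S$ deterministically and we are done; otherwise, the probability that $S \cap V_k = \emptyset$ is at most $\prod_{i \in V_k}\left(1 - \frac{\log(Nn)}{\mu}\vp_i\right) \leq \exp\left(-\frac{\log(Nn)}{\mu} \sum_{i \in V_k}\vp_i\right) < \exp(-\log(Nn)) = \frac{1}{Nn}$. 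Then union-bounding over the at most $N$ bad candidates gives failure probability at most $1/n$, so with probability $\geq 1 - 1/n$ no bad candidate survives, and therefore the output $\vx$ of $\mathcal{A}$ on $\AA_S$ satisfies $\langle \vp, \vv^\epsilon(\vx)\rangle \leq \mu$; the \emph{Verify} step then accepts on the first try. (One also needs that the sparse LP is feasible so that $\mathcal{A}$ returns something meaningful; this follows because any genuine $\epsilon$-feasible point of the full LP — in particular an optimal $\vx^\star$ — is $\epsilon$-feasible for $\AA_S$, and $\langle \vp, \vv^\epsilon(\vx^\star)\rangle = 0 \leq \mu$, so it is among the ``good'' candidates and is never ruled out.)

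For the running time: computing the sampling probabilities and drawing $S$ costs $\OO(\nnz(\AA))$ (we need the $\vp_i$'s, which are given, and forming $\AA_S$ touches at most all nonzeros); the expected number of sampled constraints is $\sum_i \min\{\frac{\log(Nn)}{\mu}\vp_i,1\} \leq \frac{\log(Nn)}{\mu}$, so \emph{Solve} costs $\Tapprox(d, \frac{\log(Nn)}{\mu}, \epsilon)$; the \emph{Verify} step is another $\OO(\nnz(\AA))$ to evaluate $\vv^\epsilon(\vx)$ and the inner product with $\vp$. Since each round succeeds with probability $\geq 1 - 1/n$, the expected number of rounds is $1 + o(1)$, giving total expected time $\OO(\nnz(\AA) + \Tapprox(d, \frac{\log(Nn)}{\mu}, \epsilon))$, and one call to $\mathcal{A}$ with high probability.

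The main obstacle — really the only delicate point — is the logical step that a bad candidate is actually excluded from $\mathcal{A}$'s output: this hinges on the precise promise that $\mathcal{A}$ only returns points that are $\epsilon$-feasible \emph{for the instance it is handed} ($\AA_S$), matched against the fact that $\vv^\epsilon$ is defined with exactly the same slack $\epsilon$, so a violated-by-$\vx^{(k)}$ constraint present in $S$ genuinely disqualifies $\vx^{(k)}$. I would state this compatibility assumption on $\mathcal{A}$ explicitly. Everything else is the standard independent-sampling-plus-union-bound estimate sketched above.
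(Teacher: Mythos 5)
Your proposal is correct and follows essentially the same argument as the paper: for each of the at most $N$ candidate outputs with violated-constraint weight exceeding $\mu$, bound the probability that no violated constraint is sampled by $\prod_{i}(1-\tfrac{\log(Nn)}{\mu}\vp_i) \leq e^{-\log(Nn)} = \tfrac{1}{Nn}$, then union bound over the $N$ candidates so that with high probability the solver's ($\epsilon$-feasible-on-$S$) output cannot be a bad candidate. Your explicit handling of the $r\vp_i \geq 1$ case, the feasibility of the sparse instance, and the expected-runtime accounting only add detail the paper leaves implicit.
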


\subsection{Solving LPs to Low-Precision}

In this subsection, we present how to obtain an $\epsilon$-approximation using multiple calls to an exact solver on a sparse subset of constraints.

Consider a linear program $\max_{\vx \in \mathcal{D}} \langle \vc, \vx \rangle$ subject to $\AA \vx \leq \vb$. Let $\opt$ denote the optimal value for that LP. We would like to obtain an $\epsilon$-approximation, that means $\vx$ such that $\langle \vc, \vx \rangle \geq \opt$ and $\AA \vx \leq \vb + \epsilon\ones$. First, we reduce this problem to a problem of the form \cref{prob:intro1}. For a solver $\mathcal{A}$ whose outputs are in $\mathcal{D}$, we consider the largest violation possible in $\mathcal{D}$: $\Vmax := \max_{\vx \in \mathcal{D}} \max_{i \in [n]} (\AA \vx - \vb)_i$.

\begin{remark*}
    We can compare $\Vmax$ to the \emph{width} parameter in the Plotkin-Shmoys-Tardos (PST) framework, and other classical techniques using MWU~\cite{plotkinFastApproximationAlgorithms1991,aroraMultiplicativeWeightsUpdate2012}. In those techniques, the width $\rho$ is defined as $\max_{\vx \in \mathcal{D}} \psnorm[\infty]{\AA \vx - \vb}$. Hence, we always have $\Vmax \leq \rho$. $\rho$ not only captures the largest violation, but also the largest slack, that means, if a solution can ``largely satisfy'' a constraint, the width is large. We develop this in \cref{ssec:mixed-packing-covering}. \end{remark*}

\begin{claim}
    \label{cl:lp-classical}
    For $\epsilon > 0$, and a linear program of the form \eqref{eq:general-lp}. Let $\Vmax$ be the largest violation of $\mathcal{D}$, $\Vmax := \max_{\vx \in \mathcal{D}} \max_{i \in [n]} \left(\AA \vx - \vb\right)_i$. Given an $(\epsilon,\epsilon/3\Vmax)$-low-violation oracle $\mathcal{A}$. Within $\OO(\frac{\Vmax}{\epsilon} \log n)$ iterations, each one requiring one call to $\mathcal{A}$ and $\nnz(\AA)$ time, it is possible to find $\vx$ such that $\AA \vx \leq \vb + 2\epsilon \ones$.
\end{claim}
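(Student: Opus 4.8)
The plan is to invoke \cref{thm:main-theorem} with the right choice of parameters and then translate the ``low-frequency of violations'' guarantee into a genuine $2\epsilon$-feasibility statement by averaging. Concretely, set $\mu = \epsilon/\Vmax$, so that the hypothesis gives us access to an $(\epsilon, \mu/3) = (\epsilon, \epsilon/3\Vmax)$-low-violation oracle $\mathcal{A}$ — exactly what is assumed in the claim. \cref{thm:main-theorem} then produces, in $T = \OO(\log n / \mu) = \OO(\frac{\Vmax}{\epsilon}\log n)$ iterations (each costing one oracle call plus $\nnz(\AA)$ time), a set of solutions $\{\vx_1, \ldots, \vx_T\}$ solving \cref{prob:intro1} with these parameters. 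In particular, for every constraint $i \in [n]$,
\[
\abs{\lbrace j \in [T] : \langle \AA_{i:}, \vx_j \rangle > \vb_i + \epsilon \rbrace} < \mu \cdot T = \frac{\epsilon}{\Vmax} \cdot T.
\]

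Next I would take the average iterate $\bar\vx = \frac{1}{T}\sum_{j=1}^T \vx_j$ and bound its violation coordinate by coordinate. Fix a constraint $i$. Split the index set $[T]$ into the ``bad'' set $J_i = \lbrace j : \langle \AA_{i:}, \vx_j\rangle > \vb_i + \epsilon\rbrace$ and its complement. For $j \notin J_i$ we have $\langle \AA_{i:}, \vx_j\rangle - \vb_i \le \epsilon$; for $j \in J_i$ we only know the crude bound $\langle \AA_{i:}, \vx_j\rangle - \vb_i \le \Vmax$, which holds by definition of $\Vmax$ since each $\vx_j \in \mathcal{D}$ (the oracle's outputs lie in $\mathcal{D}$). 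Therefore
\[
\langle \AA_{i:}, \bar\vx\rangle - \vb_i = \frac{1}{T}\sum_{j=1}^T \left(\langle \AA_{i:}, \vx_j\rangle - \vb_i\right) \le \frac{1}{T}\left(\abs{J_i}\cdot \Vmax + (T - \abs{J_i})\cdot \epsilon\right) \le \frac{\epsilon}{\Vmax}\cdot\Vmax + \epsilon = 2\epsilon.
\]
Since this holds for every $i \in [n]$, we get $\AA\bar\vx \le \vb + 2\epsilon\ones$. Finally, condition (1) of \cref{prob:intro1} gives $\langle \vc, \vx_j\rangle \ge \opt$ for each $j$, so by linearity $\langle \vc, \bar\vx\rangle \ge \opt$ as well; hence $\bar\vx$ is the desired $2\epsilon$-approximate solution. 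The running time is $T$ iterations, each costing one call to $\mathcal{A}$ plus $\OO(\nnz(\AA))$ time (to compute the new weights / verify), which matches the claim.

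I do not expect any real obstacle here: the argument is essentially a repackaging of \cref{thm:main-theorem} plus the standard MWU-style averaging trick already sketched in the overview (``each constraint is violated at most $(\epsilon/\rho)\cdot T$ times, so the total violation is at most $\epsilon T$''). The only points requiring a modicum of care are (i) making sure the oracle's outputs actually lie in $\mathcal{D}$ so that the $\Vmax$ bound is applicable to every $\vx_j$ — this is guaranteed because $\mathcal{A}$ is a solver whose outputs are in $\mathcal{D}$; and (ii) tracking that the target in \cref{prob:intro1} is phrased with a strict inequality $\abs{J_i} < \mu T$, which only helps (it makes the average strictly below $2\epsilon$, but $\le 2\epsilon$ is all we claim). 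One could also note that if at some iteration the low-violation oracle instead certifies infeasibility of the sampled sub-LP, that certificate lifts to infeasibility of the whole LP, but since the claim as stated presupposes the oracle returns a solution with value $\ge \opt$, this case does not arise.
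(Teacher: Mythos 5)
Your proposal is correct and follows essentially the same route as the paper: invoke \cref{thm:main-theorem} with $\mu = \epsilon/\Vmax$ to get a solution of \cref{prob:intro1}, then average the iterates and split each constraint's sum into the (at most $\mu T$) violating indices, bounded by $\Vmax$, and the rest, bounded by $\epsilon$, yielding $\AA\bar{\vx} \leq \vb + 2\epsilon\ones$ and $\langle \vc, \bar{\vx}\rangle \geq \opt$. The only difference is cosmetic: you make the $\abs{J_i} < \mu T$ bookkeeping explicit where the paper leaves it implicit in the chained inequality.
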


\begin{proof}
    Using \cref{thm:main-theorem}, it is possible to use only $T = \OO(\frac{\Vmax}{\epsilon} \log n)$ calls to an $(\epsilon,\epsilon/ 3\Vmax )$-low-violation oracle to solve \cref{prob:intro1} with parameters $\epsilon$ and $\mu = \epsilon/\Vmax$.

    Assume we have a solution $\lbrace \vx_j \rbrace_{j \in [T]}$ to \cref{prob:intro1} with those parameters. For a constraint $i$, let $V_i \subseteq [T]$ be the indices of solutions violating constraint $i$: $\lbrace j \in [T] : \langle \AA_{i:}, \vx_j \rangle \geq \vb_i + \epsilon \rbrace$. We have
    \begin{equation*}
        \sum_{j \in [T]} \langle \AA_{i:}, \vx_j \rangle - \vb_i = \sum_{j \in V_i}  \langle \AA_{i:}, \vx_j \rangle - \vb_i + \sum_{j \in [T] \setminus V_i}  \langle \AA_{i:}, \vx_j \rangle - \vb_i \leq \sum_{j \in V_i} \Vmax + \sum_{j \in [T] \setminus V_i} \epsilon \leq 2 \epsilon \cdot T.
    \end{equation*}
    Hence, $\bar{\vx} := \frac{1}{T} \sum_{j \in [T]} \vx_j$ is $2\epsilon$-feasible. Moreover, we have $\langle \vc, \bar{\vx} \rangle \geq \opt$ since for any $j \in [T]$, $\langle \vc, \vx_j \rangle \geq \opt$, hence $\bar{\vx}$ is a $2\epsilon$-approximation.
\end{proof}

Using an exact solver and \cref{lem:lvo-clarkson}, we create an $(\epsilon/2,\epsilon/6\Vmax)$-low-violation oracle that we use with \cref{cl:lp-classical}. This gives us the following theorem.

\begin{theorem}
    \label{thm:lp-classical}
    For a linear program of the form \eqref{eq:general-lp} with largest violation \[\Vmax := \max_{\vx \in \mathcal{D}} \max_{i \in [n]} \left(\AA \vx - \vb\right)_i\,,\] there exists a randomized algorithm that runs in expected time
    \begin{equation*}
        \OO\left(\frac{\Vmax}{\epsilon} \log n \left(\nnz(\AA) + \Texact(d,6d \frac{\Vmax}{\epsilon})\right)\right),
    \end{equation*}
    where $\Texact(d,n)$ is the running time of an exact solver on $d$ variables and $n$ constraints.
\end{theorem}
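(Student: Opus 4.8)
The plan is to assemble the theorem from three ingredients already in hand: the reduction of $\epsilon$-approximation to Problem~\ref{prob:intro1} (\cref{cl:lp-classical}), the master iteration bound (\cref{thm:main-theorem}), and the Clarkson-style low-violation oracle (\cref{lem:lvo-clarkson}). First I would instantiate \cref{lem:lvo-clarkson} with the parameters needed to feed \cref{cl:lp-classical}: taking error tolerance $\epsilon/2$ and violation parameter $\mu = \epsilon/(6\Vmax)$ (so that, in the language of \cref{thm:main-theorem}, the oracle is a $(\epsilon/2,\mu/3)$-low-violation oracle with the ``$\mu$'' of \cref{cl:lp-classical} being $\epsilon/(2\Vmax)$). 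The oracle's sparsification step then samples each constraint with probability at least $\min\{2d/\mu\cdot\vp_i,1\} = \min\{12 d\Vmax/\epsilon \cdot \vp_i, 1\}$, so the sampled sub-LP has $O(d\Vmax/\epsilon)$ constraints in expectation, and by \cref{lem:lvo-clarkson} each oracle call runs in expected time $\OO(\nnz(\AA) + \Texact(d, O(d\Vmax/\epsilon)))$; rescaling the constant gives the $\Texact(d, 6d\Vmax/\epsilon)$ in the statement.

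Next I would invoke \cref{cl:lp-classical} with this oracle: it guarantees that after $T = \OO(\frac{\Vmax}{\epsilon}\log n)$ iterations, each costing one oracle call plus $\OO(\nnz(\AA))$ time, we obtain $\bar\vx = \frac1T\sum_j \vx_j$ with $\AA\bar\vx \le \vb + 2\epsilon\ones$ and $\langle\vc,\bar\vx\rangle \ge \opt$. Multiplying the per-iteration cost $\OO(\nnz(\AA) + \Texact(d,6d\Vmax/\epsilon))$ by the iteration count $T = \OO(\frac{\Vmax}{\epsilon}\log n)$ yields the claimed total expected running time
\[
\OO\!\left(\frac{\Vmax}{\epsilon}\log n\left(\nnz(\AA) + \Texact(d, 6d\tfrac{\Vmax}{\epsilon})\right)\right).
\]
Finally, a cosmetic rescaling $\epsilon \mapsto \epsilon/2$ turns the $2\epsilon$-feasibility of $\bar\vx$ into $\epsilon$-feasibility, which only changes hidden constants; strictly speaking the theorem as written already absorbs this into the $\OO(\cdot)$.

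The only genuine point requiring care — the ``main obstacle,'' such as it is — is bookkeeping the two nested layers of constant-factor slack: the factor-$3$ loss in \cref{thm:main-theorem} (which is why we must supply a $(\epsilon,\mu/3)$-oracle to get a $\mu$-guarantee out), and the factor-$2$ loss in \cref{cl:lp-classical}'s feasibility conclusion. One must check that choosing the oracle's internal parameter as $\epsilon/(6\Vmax)$ rather than $\epsilon/(3\Vmax)$ correctly threads both, and that \cref{lem:lvo-clarkson} is applied with error parameter $\epsilon/2$ so that the final feasibility is $\epsilon$ (not $2\epsilon$). Beyond this constant-chasing, the argument is a direct composition with no new analytic content, since the expectation bounds on the sub-LP size and the oracle's retry count are already established in \cref{lem:lvo-clarkson} via Clarkson's sampling lemma (\cref{lem:sampling2}).
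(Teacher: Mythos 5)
Your proposal is correct and follows essentially the same route as the paper's own (very brief) proof: the paper likewise builds an $(\epsilon/2,\epsilon/(6\Vmax))$-low-violation oracle from \cref{lem:lvo-clarkson} and feeds it to \cref{cl:lp-classical} (which rests on \cref{thm:main-theorem}), then multiplies the per-call cost $\OO(\nnz(\AA)+\Texact(d,\OO(d\Vmax/\epsilon)))$ by the $\OO(\frac{\Vmax}{\epsilon}\log n)$ iteration count. The only wrinkle is the same constant-factor looseness present in the paper itself (with $\mu=\epsilon/(6\Vmax)$ the oracle strictly yields $\Texact(d,12d\frac{\Vmax}{\epsilon})$ rather than $6d\frac{\Vmax}{\epsilon}$), which, as you note, is absorbed into the $\OO(\cdot)$.
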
 

\subsection{Solving Mixed Packing and Covering LPs to Low-Precision}
\label{ssec:mixed-packing-covering}

For a linear program, the largest violation $\Vmax := \max_{\vx \in \mathcal{D}} \max_{i \in [n]} \left(\AA \vx - \vb\right)_i$ corresponds to a single-sided width. Generally, the width is $\rho := \max_{\vx \in \mathcal{D}} \psnorm[\infty]{\AA \vx - \vb}$. There are linear programs for which $\Vmax \ll \rho$, for instance, pure covering problems. In a pure covering LP, the width $\rho$ can be as large as $d$, while our single-sided width is constant, equal to $1$. On the other hand, for pure packing LPs, the standard definition of the width and our single-sided width are equal.

We show how we can leverage this different notion of width to achieve better guarantees for mixed packing and covering problems with numerous dense covering constraints.

\begin{problem}
    \label{prob:mixed-lp}
    For positive integers $n_c, n_p$, and $d$. Consider $\CC \in [0,1]^{n_c \times d}$, and $\PP \in [0,1]^{n_p \times d}$.\footnote{This assumption can be made without loss of generality by adding poly-logarithmic number of constraints, see Appendix B of~\cite{boobFasterWidthdependentAlgorithm2019}.} We denote by $r_c$ the row-sparsity of $\CC$, and $r_p$ the row-sparsity of $\PP$.
    
    In the \emph{mixed packing and covering program} problem, we seek $\vx \in [0,1]^d$ such that $\PP \vx \leq \ones$, and $\CC \vx \geq \ones$, or certify that there is none. 
\end{problem}

A $(1+\epsilon)$-approximation of \cref{prob:mixed-lp} is a vector $\vx \in [0,1]^{d}$ such that $\CC \vx \geq \ones$, and $\PP \vx \leq (1+\epsilon) \ones$. Given a problem of the form \ref{prob:mixed-lp}, the one-sided width is at most $r_p$.

\paragraph{Reducing the one-sided width.} We can reduce the one-sided width by keeping some constraints. In the classical PST framework, a lot of iterations are performed, each one requires to solve a linear optimization problem within $\mathcal{D}$. Since a lot of iterations are required, the structure of $\mathcal{D}$ should be ``simple''. This comes at the cost of a potentially large width. In our case, $\mathcal{D}$ is the output set of the solver used in each call to the low-violation oracle. Since we are performing few iterations, it can be beneficial to add some constraints that are kept between iterations. In particular, we can add the constraints with large one-sided width to it. For our mixed packing and covering problem, we add the packing constraints, we consider:
\begin{equation*}
    \mathcal{D} := \lbrace \vx \in [0,1]^d : \PP \vx \leq \ones \rbrace.
\end{equation*}

Since the packing constraints are captured by the solution space, we can simulate solving a linear program with constant one-sided width, and solution space $\mathcal{D}$.

Consider a solver $\mathcal{A}$ such that given any subset $S \subseteq [n_c]$ of constraints, $\mathcal{A}$ finds a $1+\epsilon$ approximation $\vx$ of the mixed packing and covering program defined by $\CC_S$ and $\PP$. We have for any packing constraint: $\langle \PP_{i:}, \vx \rangle - 1 \leq \epsilon$, and for any covering constraint: $1 - \langle \CC_{i:}, \vx \rangle \leq 1$. Hence, the one-sided width of such a solver is $1$. 

\paragraph{Reducing the cardinality of the output space.} The next difficulty we have to handle is using \cref{lem:lvo-assadi}. \cref{lem:lvo-assadi} requires the output space to contain few elements. Assadi leverages the small cardinality of the vertex cover problem in a bipartite graph. For mixed packing and covering problems, the cardinality of the solution space can be large, however we can ``discretize'' the output space. Assume $\vx$ is a $(1+\epsilon)$-approximation, then any vector $\vy$ such that $\vx \leq \vy \leq (1+\epsilon) \vx$ coordinate-wise is a $(1+4 \epsilon)$-approximation. Hence, we can reduce the output set to vectors of the form $\vx_i = C (1+\epsilon)^{k_i}$ for some integer $k_i$, and a small value $C$. We give the size of this $\epsilon$-net in the following claim.

\begin{claim}
    \label{cl:sparsify-mixed}
    For $0 < \epsilon \leq 1$. Given $\vx$, a $1 +\epsilon$ approximation of a mixed packing and covering problem of the form \cref{prob:mixed-lp}, it is possible to find $\tilde{\vx}$ a $1 + 4 \epsilon$ approximation of the same mixed packing and covering program such that $\tilde{\vx} \in \mathcal{S}$ where $\abs{\mathcal{S}} \leq {\lceil 2 + 4\frac{\log (r_p/\epsilon)}{\epsilon} \rceil}^d$.
\end{claim}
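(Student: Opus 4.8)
The plan is to exhibit an explicit product net $\mathcal{S} = \mathcal{G}^d$ together with a coordinate-wise ``round up to the grid'' map $\vx \mapsto \tilde\vx$, and then verify the packing and covering constraints separately. For the single-coordinate grid I would take
\[
\mathcal{G} = \{0\}\cup\{1\}\cup\bigl\{\, C(1+\epsilon)^k : k\in\mathbb{Z}_{\ge 0},\ C(1+\epsilon)^k<1 \,\bigr\}, \qquad C := \epsilon/r_p,
\]
noting that $C\le 1$ since $\epsilon\le 1\le r_p$. Given a $(1+\epsilon)$-approximation $\vx$, I define $\tilde\vx_i$ to be the smallest element of $\mathcal{G}$ that is $\ge \vx_i$; this is well defined because $\vx_i\in[0,1]$ and $1\in\mathcal{G}$, it keeps $\tilde\vx\in[0,1]^d$, and it guarantees $\tilde\vx\ge\vx$ entrywise.

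Checking feasibility then splits into two short arguments. The covering side is immediate: $\CC$ has nonnegative entries and $\tilde\vx\ge\vx$, so $\CC\tilde\vx\ge\CC\vx\ge\ones$. For a packing row $\PP_{k:}$ I would partition the coordinates according to whether $\vx_i\ge C$ or $\vx_i<C$. When $\vx_i\ge C$ the rounding is multiplicative, $\tilde\vx_i\le(1+\epsilon)\vx_i$, because the interval $[\vx_i,(1+\epsilon)\vx_i]$ always contains a grid point (the appropriate power of $1+\epsilon$ when $(1+\epsilon)\vx_i\le 1$, and the value $1$ otherwise). When $\vx_i<C$ the rounding is additive, $0\le\tilde\vx_i-\vx_i\le C$, and at most $r_p$ coordinates have $\PP_{ki}\ne 0$, each with $\PP_{ki}\le 1$. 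Summing both contributions and using $\langle\PP_{k:},\vx\rangle\le 1+\epsilon$ and $\sum_i\PP_{ki}\le r_p$,
\[
\langle\PP_{k:},\tilde\vx\rangle \;\le\; (1+\epsilon)\langle\PP_{k:},\vx\rangle + C\sum_i\PP_{ki} \;\le\; (1+\epsilon)^2 + C r_p \;=\; (1+\epsilon)^2+\epsilon \;\le\; 1+4\epsilon,
\]
the last step using $\epsilon\le 1$. Hence $\tilde\vx$ is a $(1+4\epsilon)$-approximation lying in $\mathcal{S}$.

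Finally I would bound $|\mathcal{S}| = |\mathcal{G}|^d$. The geometric part of $\mathcal{G}$ has one element for each integer $k\ge 0$ with $(1+\epsilon)^k<1/C=r_p/\epsilon$, i.e.\ $k<\log_{1+\epsilon}(r_p/\epsilon)$, so $|\mathcal{G}|\le 2+\lceil\log_{1+\epsilon}(r_p/\epsilon)\rceil$, which is at most $\lceil 2+4\tfrac{\log(r_p/\epsilon)}{\epsilon}\rceil$ after using $\ln(1+\epsilon)\ge\epsilon/2$ (valid for $0<\epsilon\le 1$) and absorbing the constant into the ceiling; raising to the power $d$ gives the claimed bound. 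The one place that needs care is the truncation at $1$: when $\vx_i$ is close to $1$ and $(1+\epsilon)\vx_i$ overshoots $1$, I must confirm that rounding to the grid value $1$ still satisfies \emph{both} $\tilde\vx_i\ge\vx_i$ (needed for covering) and $\tilde\vx_i\le(1+\epsilon)\vx_i$ (needed for packing), and indeed both hold since $\vx_i\le 1\le(1+\epsilon)\vx_i$ in that regime. Everything else is a routine case analysis, so I do not anticipate a genuine obstacle.
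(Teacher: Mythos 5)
Your proposal is correct and matches the paper's own proof in all essentials: the same grid $\{0,1\}\cup\{(\epsilon/r_p)(1+\epsilon)^k\}$ per coordinate, the same round-up map (so covering is preserved since $\tilde{\vx}\geq\vx$), the same packing bound via $\tilde{\vx}_i\leq(1+\epsilon)\vx_i+\epsilon/r_p$ and $(1+\epsilon)^2+\epsilon\leq 1+4\epsilon$, and the same geometric count of grid points. Your split into multiplicative and additive rounding regimes is just a rephrasing of the paper's unified inequality, and your handling of the truncation at $1$ and of the ceiling constants is at the same level of rigor as the paper's.
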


Note that it is not necessary to have a solver that outputs in $\mathcal{S}$. This is an artifact of the analysis. With our two ingredients, we can obtain the following theorem.

\begin{theorem}
    \label{thm:mpc-classical}
    Given a mixed packing and covering program \ref{prob:mixed-lp}, it is possible to find with high-probability a $1+\OO(\epsilon)$-approximation $\vx$ or certify infeasibility in time 
    \begin{equation*}
        \OOt \left(\frac{1}{\epsilon} \left(\nnz(\CC) + \Tmixed(d,\nnz(\PP), \OO\left(\frac{d r_c}{\epsilon} \log( 2+ 4\frac{\log \frac{r_p}{\epsilon}}{\epsilon} )\right), \epsilon)\right)\right),
    \end{equation*}
    where $\Tmixed(d, \nnz(\PP), \nnz(\CC), \epsilon)$ is the time required to compute a $1+\epsilon$-approximation of a mixed packing and covering problem with at most $\nnz(\PP)$ (resp. $\nnz(\CC)$) non-zeros inside the packing (resp. covering) constraint matrix, and $d$ variables.
\end{theorem}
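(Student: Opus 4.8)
\emph{Overall strategy.} The plan is to obtain \cref{thm:mpc-classical} as an instance of \cref{cl:lp-classical}: treat only the covering rows as ``hard constraints'' to be sampled, and fold the packing rows into the solution domain, which collapses the relevant single-sided width from $r_p$ down to $\OO(1)$. Concretely, phrase the covering side as the system $\AA\vx \le \vb$ with $\AA := -\CC$ and $\vb := -\ones$, take $\vc := \zeros$ (the original problem is a feasibility problem, so the objective clause of \cref{prob:intro1}/\cref{cl:lp-classical} is vacuous), and set the solution domain to $\mathcal{D} := \{\vx \in [0,1]^d : \PP\vx \le (1 + \OO(\epsilon))\ones\}$, the $\OO(\epsilon)$ slack accommodating the fact that the inner solver is only approximate. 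Since $\CC$ and every $\vx \in \mathcal{D}$ are entrywise nonnegative, each row satisfies $(\AA\vx - \vb)_i = 1 - \langle \CC_{i:}, \vx\rangle \le 1$, hence $\Vmax \le 1$. By \cref{cl:lp-classical} it therefore suffices to supply a $(\Theta(\epsilon),\Theta(\epsilon))$-low-violation oracle: running $T = \OO(\epsilon^{-1}\log n_c)$ of its iterations, each also paying $\OO(\nnz(\CC))$ for the verification step, yields $\bar\vx = \tfrac1T\sum_j \vx_j$ with $\AA\bar\vx \le \vb + \Theta(\epsilon)\ones$, i.e.\ $\CC\bar\vx \ge (1 - \Theta(\epsilon))\ones$, while $\bar\vx$ inherits $\PP\bar\vx \le (1+\OO(\epsilon))\ones$ as a convex combination of oracle outputs.

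\emph{The low-violation oracle.} We build this oracle through \cref{lem:lvo-assadi}, taking as its inner $\epsilon$-approximate solver $\mathcal{A}$ a mixed packing/covering solver that receives \emph{all} of $\PP$ together with the sampled covering rows $\CC_S$ and returns a $1+\epsilon$ approximation of $\{\PP\vx \le \ones,\ \CC_S\vx \ge \ones\}$ (or reports that this subproblem is infeasible). To satisfy the bounded-output hypothesis of \cref{lem:lvo-assadi}, we appeal to \cref{cl:sparsify-mixed}: every output can be rounded (and clipped) into a fixed net $\mathcal{S}$ with $\abs{\mathcal{S}} \le \lceil 2 + 4\epsilon^{-1}\log(r_p/\epsilon)\rceil^d =: N$ at the cost of worsening the approximation factor to $1+4\epsilon$; equivalently, in the analysis of \cref{lem:lvo-assadi} one union-bounds over $\mathcal{S}$ rather than over the (possibly infinite) range of $\mathcal{A}$ — which is why, as noted before the theorem, $\mathcal{A}$ need not literally output in $\mathcal{S}$. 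With $\mu = \Theta(\epsilon)$, the expected number of sampled covering rows is $\OO(\mu^{-1}\log(N n_c)) = \OOt\!\big(\epsilon^{-1} d \log(2 + 4\epsilon^{-1}\log(r_p/\epsilon))\big)$, so $\nnz(\CC_S) = \OO\!\big(\epsilon^{-1} d\, r_c \log(2 + 4\epsilon^{-1}\log(r_p/\epsilon))\big)$ up to logarithmic factors, and one oracle call runs in expected time $\OO\big(\nnz(\CC) + \Tmixed(d, \nnz(\PP), \nnz(\CC_S), \epsilon)\big)$.

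\emph{Assembling and certifying infeasibility.} Multiplying the $\OOt(\epsilon^{-1})$ iterations of the first step by the per-iteration cost of the second (using monotonicity of $\Tmixed$ in its arguments to replace the random $\nnz(\CC_S)$ by its bound) gives exactly the claimed running time. For correctness, observe that the feasible region of a mixed packing/covering program only shrinks when covering rows are added; hence if \cref{prob:mixed-lp} is feasible then so is every sampled subproblem, $\mathcal{A}$ never reports infeasibility, the process terminates with $\bar\vx$ satisfying $\CC\bar\vx \ge (1-\OO(\epsilon))\ones$ and $\PP\bar\vx \le (1+\OO(\epsilon))\ones$, and the standard rescale-by-$1/(1-\OO(\epsilon))$-and-clip manipulation from \cref{sec:preliminaries} turns $\bar\vx$ into a genuine $1+\OO(\epsilon)$ approximation; conversely, if any sampled subproblem is infeasible then, by the same monotonicity, the original program is infeasible and we report this. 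Since the verification step of \cref{lem:lvo-assadi} succeeds after one call to $\mathcal{A}$ with high probability, a union bound over the $\OOt(\epsilon^{-1})$ iterations makes the whole procedure succeed with high probability.

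\emph{Where the difficulty lies.} The one genuinely new idea is the choice of $\mathcal{D}$ in the first step: retaining all packing rows inside the domain replaces the width $r_p$ by $\Vmax \le 1$, and this is affordable precisely because the framework runs only $\OOt(\epsilon^{-1})$ iterations, so the ``complicated'' domain $\mathcal{D}$ is re-used across iterations rather than re-optimized width-many times as in the PST framework. The remaining work is bookkeeping: making \cref{lem:lvo-assadi} applicable to a solver with a continuum of outputs via the net of \cref{cl:sparsify-mixed} and the attendant union bound, and tracking the several independent $\Theta(\epsilon)$ losses — net rounding, the MWU tolerance $\mu$, and the final rescaling — so that they combine into a single $1+\OO(\epsilon)$ factor.
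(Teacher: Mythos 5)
Your proposal is correct and follows essentially the same route as the paper's proof: fold all packing constraints into the solution domain to reduce the one-sided width to $1$, build the low-violation oracle from \cref{lem:lvo-assadi} using the discretization net of \cref{cl:sparsify-mixed}, run $\OOt(1/\epsilon)$ iterations of the framework, average the iterates, and rescale by $1/(1-\OO(\epsilon))$. The only differences are presentational — you invoke \cref{cl:lp-classical} as a wrapper (with $\AA=-\CC$, $\vb=-\ones$, $\Vmax\le 1$) where the paper redoes the same averaging argument directly, and you spell out the infeasibility-certification monotonicity argument that the paper leaves implicit.
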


We can use state-of-the-art solvers for mixed packing and covering programs such as \cite{chekuriRandomizedMWUPositive2018, boobFasterWidthdependentAlgorithm2019, quanrudNearlyLinearTime2019}.

 \section{Speedups via Quantum Sampling}
\label{sec:quantum}

In the classical regime, the running time is dominated by the time needed to find violated constraints. Given a quantum query access oracle and Grover search we can improve the running time of the algorithm. Given a word $\vx \in \lbrace 0, 1 \rbrace^n$, the hamming weight of $\vx$ is its number of non-zero entries. If we have query access to a word $\vx$ with known hamming weight $t$, Grover search finds an $i$ such that $\vx_i = 1$ in time $\OO(\sqrt{n/t})$. However, if the hamming weight of $\vx$ is unknown, then Grover search becomes probabilistic. 
Instead of using Grover search to find the violated constraints, we will use it to sample from the probability vector $\vp$ at each iteration. One bottleneck of our approach is that at each iteration, we increase the cost of querying from $\vp$. At iteration $t$, we have $t$ solutions to check violation for, hence we need to perform $t$ row-queries to $(\AA,\vb)$. We strongly rely on \cref{lem:apers} to sample the constraints using queries to the binary violation oracle.

\begin{lemma}[Claim 3, \cite{apersQuantumSpeedupGraph2023}]
    \label{lem:apers}
    Assume we have query access to a list of probabilities $(\vq_i)_{i \in [n]}$, each of which is described with $\OOt(1)$ bits of precision. Then there is a quantum algorithm that samples a subset $S \subseteq [n]$, such that $S$ contains every $i$ independently with probability $\vq_i$, in expected time $\OOt(\sqrt{n \sum_i \vq_i})$ and using a QRAM of $\OOt(\sqrt{n \sum_i \vq_i})$ bits.
\end{lemma}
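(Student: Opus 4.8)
The plan is to reduce the Bernoulli-product sampling task to a purely search-theoretic problem --- enumerating all the marked elements of an efficiently computable $0$-$1$ predicate over $[n]$ --- and then to solve that with iterated amplitude amplification, tuning the cost to the (a priori unknown) number of marked elements. The running time will fall out of a telescoping sum of Grover costs, and the QRAM budget out of the size of the sampled set.

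First I would set up the marking predicate. Since each $\vq_i$ is given with $b = \OOt(1)$ bits of precision, write $\vq_i = k_i\, 2^{-b}$ with $k_i \in \{0,1,\dots,2^b\}$. Attach to each index $i$ a uniform value $h(i) \in \{0,\dots,2^b-1\}$ --- concretely, the output of a short-seed ($\OOt(1)$-bit), $\OOt(1)$-time-evaluable hash from a $\Theta(\log n)$-wise independent family --- and declare $i$ \emph{marked} iff $h(i) < k_i$; let $S := \{i : h(i) < k_i\}$. By construction $\Pr[i \in S] = k_i 2^{-b} = \vq_i$ with the inclusion events independent across $i$ in the idealized random-function description of $h$, and already under $\Theta(\log n)$-wise independence a Chernoff-type tail bound gives $|S| = \OO\!\left(\sum_i \vq_i + \log n\right)$ with high probability --- the only feature of $|S|$ the next step needs. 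The point is that the predicate $g(i) := \mathbf{1}\{h(i) < k_i\}$ is realizable as a reversible circuit that makes a \emph{single} row query to recover $\vq_i$ (hence $k_i$) plus $\OOt(1)$ arithmetic, so it is exactly the kind of oracle Grover search consumes.

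Next I would enumerate $S$. One amplitude-amplification round over $[n]$ with oracle $g$ returns a uniformly random element of the current marked set of size $s$ using $\OO(\sqrt{n/s})$ oracle calls; after each success, augment $g$ with a QRAM-stored list of the indices already output so they no longer count as marked, and repeat. Extracting the $j$-th element thus costs $\OO\!\left(\sqrt{n/(|S|-j+1)}\right)$ queries, and $\sum_{j=1}^{|S|}\OO\!\left(\sqrt{n/j}\right) = \OO(\sqrt{n\,|S|})$. Since $|S|$ is unknown, I would wrap each extraction in the standard exponential-guessing variant of Grover search for an unknown number of marked items, which preserves the $\OO(\sqrt{ns})$ bound up to constants, and stop once an emptiness test --- an $\OO(\sqrt n)$ Grover run repeated $\OOt(1)$ times to suppress its error --- reports no remaining marked index. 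Boosting each of the $\OO(|S|)$ rounds by $\OOt(1)$ repetitions and taking a union bound yields the exact set $S$ with high probability. Substituting $|S| = \OO\!\left(\sum_i \vq_i + \log n\right)$ gives expected query (hence time) complexity $\OOt\!\left(\sqrt{n\sum_i \vq_i}\right)$ --- the additive $\OOt(\sqrt n)$ from the final emptiness check is absorbed since in all uses $\sum_i \vq_i \ge 1$, and $\sqrt{n\sum_i\vq_i} \le \sqrt{n}\cdot\sqrt{n}$ because $\vq_i\le 1$ --- while the only persistent storage is the running list of sampled indices and the exclusion structure, i.e.\ $\OOt\!\left(\sum_i \vq_i\right) \le \OOt\!\left(\sqrt{n\sum_i \vq_i}\right)$ bits of QRAM.

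The main obstacle I expect is reconciling the requirement that $S$ contain each $i$ \emph{independently} with the need to produce the randomness $\{h(i)\}$ succinctly: a truly uniform random function cannot be stored, whereas a bounded-independence hash only approximates independence. The resolution is either to argue that the algorithm touches the $h$-values only through $g$, so the idealized random-oracle (lazy-sampling) description is faithful, or to verify that the downstream application is insensitive to replacing full independence by $\polylog$-wise independence; in parallel one must confirm that whatever independence is used is still strong enough for the concentration bound on $|S|$ on which the query count rests. The secondary technical care points --- bounding the overhead of the unknown-count wrapper and the union bound over the $\OO(|S|)$ Grover rounds --- are routine.
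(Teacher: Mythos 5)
The paper does not reprove this statement --- it imports it verbatim as Claim 3 of the cited work of Apers et al. --- so your proposal has to be judged against that argument. Your enumeration skeleton is the right one and matches it: mark $i$ iff a committed random value $h(i)$ falls below the threshold determined by $\vq_i$, then extract all marked indices by repeated amplitude amplification with exclusion of already-found items, an unknown-count wrapper, and a final emptiness test, giving the telescoping bound $\sum_j \OO(\sqrt{n/j}) = \OO(\sqrt{n\abs{S}})$ and, with the concentration of $\abs{S}$ around $\sum_i \vq_i$, the stated query bound. The genuine gap is in how you realize the randomness. You instantiate $h$ from a $\Theta(\log n)$-wise independent family with an $\OOt(1)$-bit seed; but the output set $S = \lbrace i : h(i) < k_i \rbrace$ is a deterministic function of the committed hash, so its law inherits exactly the dependence structure of that family --- no argument about what the algorithm ``touches'' can turn it into a product distribution, and lazy sampling is unavailable because Grover queries the predicate in superposition over all of $[n]$, forcing the randomness to be committed before the first query. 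Consequently neither of your proposed resolutions closes the gap: the random-oracle/lazy-sampling one is invalid in this setting, and ``the downstream application tolerates polylogarithmic-wise independence'' changes the statement and would have to be re-verified against the uses in this paper (Clarkson's sampling lemma and the Chernoff-based estimates all assume genuinely independent inclusion).

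The standard repair --- and the reason the lemma charges a QRAM of $\OOt(\sqrt{n \sum_i \vq_i})$ bits rather than $\OOt(\sum_i \vq_i)$ --- is to commit randomness of size comparable to the number of quantum queries: for instance a seed of a $2q$-wise independent family with $q = \OOt(\sqrt{n\sum_i \vq_i})$, stored in QRAM so the predicate can be evaluated in superposition; by Zhandry's simulation theorem the algorithm's output distribution is then identical to the truly-random-function case, in which $S$ is exactly product Bernoulli up to the enumeration's failure probability. This committed randomness is the dominant QRAM consumer; your accounting attributes only the list of sampled indices, i.e.\ $\OOt(\sum_i \vq_i)$ bits, to QRAM, which is a concrete sign that the independence issue has not been resolved within the stated resource bounds. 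The remaining ingredients of your write-up --- the exponential-guessing Grover variant, the boosted emptiness check, and the union bound over rounds --- are fine.
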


We give the proofs in \cref{apps:quantum_claims}.

\subsection{Sampling the Constraints}

Consider now a quantum low-violation oracle. Our low-violation oracles sample constraints using quantum queries to $(\AA,\vb)$, and then use a classical solver on the sampled constraints. 

A low-violation oracle receives a vector $\vp$ on which it has quantum query access. Assume for some value $s > 1$, the low-violation oracle requires to sample every constraint $i$ independently with probability $s \vp_i$. Using \cref{lem:apers}, we can perform this sampling step with $\OOt(\sqrt{ns})$ queries to $\vp$. In the classical regime, we had access to $\vp$. In the quantum regime, we only have query access to the binary violation vector $\vv^{\epsilon}$.

Instead of sampling from $s \vp$, we will sample from $s \vq$ where $\vq$ satisfies $\vp \leq \vq \leq 2 \vp$. This ensures that on average, $2s$ constraints are sampled, hence the number of queries to $\vp$ and to $\vq$ in \cref{lem:apers} remains the same up to a constant factor.

We now describe how we obtain our query access to $\vq$. Consider $\vw$ the vector proportional to $\vp$ but not normalized. By \cref{cl:prob_propto_exp}, at iteration $t$, we have $\vw_i = 2^{\sum_{\tau = 1}^{t} \vv_i^{\epsilon}(\vx_\tau)}$. A query to $\vw$ can be made through $t$ queries to $\vv^\epsilon$. Consider $W$, the normalization factor such that $\vp_i = \vw_i/W$. If $W$ is known, a query to $\vp$ can be made through a query to $\vw$, which can be done with $t$ queries to $\vv^{\epsilon}$. Since each query to $\vv^\epsilon$ is performed using a row-query to $(\AA,\vb)$, if $W$ is known, a query to $\vp$ at iteration $t$ would require $t$ row-queries to $(\AA,\vb)$.

An important issue is we do not have easy access to $W$. However, if we are able to maintain a constant factor estimation of $W$ throughout the iterations, we can create query access to a vector $\vq$ such that $\vp \leq \vq \leq 2 \vp$. Indeed, assume we have $\widetilde{W} \in [\frac{1}{2}W, W]$, then a query to $\vq = \vw/\widetilde{W}$ cost the same as a query to $\vp$ when $W$ is known, moreover $\vp \leq \vq \leq 2 \vp$.

In order to maintain the estimation of $W$, notice that between two consecutive iterations, the sum of weights stays within a constant factor from its previous value. Hence, we use the constant approximation factor of the previous sum of weights to estimate its new value. This can be done by sampling a constant quantity of constraints using \cref{lem:apers} and Chernoff bound.

\begin{claim} \label{claim:estimation_quantum}
    For an integer $n > 0$, a non-negative vector $\vw \in \rr^n$, a real $W > 0$, a constant $C > 1$, and a probability $p \in [0,1]$. Assume $\widetilde{W} \leq \psnorm[1]{\vw} \leq C \widetilde{W}$, then \cref{ora:q_estimation_mu_better} outputs $\widetilde{W}'$ such that with probability over $1-p$, $\widetilde{W}' \leq \psnorm[1]{\vw} \leq 2\widetilde{W}'$ using
    \begin{equation*}
        \OOt(\sqrt{n} \log \frac{1}{p})
    \end{equation*}
    queries to $\vw$.
\end{claim}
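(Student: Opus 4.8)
The plan, which I would realize as \cref{ora:q_estimation_mu_better}, is to invoke Apers' sampling routine (\cref{lem:apers}) a single time with well-chosen probabilities and then read off from the sampled set a Horvitz--Thompson-style estimate of $\psnorm[1]{\vw}$, rescaled so that it becomes a lower bound tight to within a factor of $2$. Concretely, set $\beta := c\log(1/p)$ for a sufficiently large constant $c$ (depending only on $C$), and define $\vq_i := \min\{\beta\vw_i/\widetilde W,\,1\}$ (rounded to the $\OOt(1)$ bits of precision that \cref{lem:apers} expects, which perturbs nothing essential below). Query access to $\vq$ follows from query access to $\vw$ together with knowledge of $\beta$ and $\widetilde W$. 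Since $\sum_i\vq_i \le \beta\psnorm[1]{\vw}/\widetilde W \le \beta C = \OOt(\log(1/p))$, \cref{lem:apers} returns, in expected time $\OOt(\sqrt{n\beta C}) = \OOt(\sqrt n\log(1/p))$, a set $S$ containing each $i$ independently with probability $\vq_i$; I would then query $\vw_i$ for every $i\in S$, which costs only $\OOt(\log(1/p))$ further queries since $\mathbb E\,\abs S = \sum_i\vq_i$. The output is $\widetilde W' := \tfrac45\widehat W$, where $\widehat W := \sum_{i\in S}\vw_i/\vq_i$.

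The key structural point is that $\widehat W$ handles the ``heavy'' coordinates exactly and only samples the ``light'' remainder. Let $H := \{i : \vw_i \ge \widetilde W/\beta\}$; these are exactly the coordinates with $\vq_i = 1$, so \cref{lem:apers} puts all of $H$ into $S$ with probability $1$, and we recover $H$ exactly from the queried values $\{\vw_i : i\in S\}$. Since $H\subseteq S$, the contribution of $S\cap H = H$ to $\widehat W$ is $W_H := \sum_{i\in H}\vw_i$, while each $i\in S\setminus H$ contributes $\vw_i/\vq_i = \widetilde W/\beta$; hence $\widehat W = W_H + \tfrac{\widetilde W}{\beta}X$ with $X := \abs{S\setminus H}$. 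Every light coordinate has $\vq_i = \beta\vw_i/\widetilde W < 1$, so $X$ is a sum of independent Bernoulli variables with mean $\mu' := \mathbb E X = \tfrac{\beta}{\widetilde W}\sum_{i\notin H}\vw_i = \tfrac{\beta}{\widetilde W}\big(\psnorm[1]{\vw} - W_H\big) \le \tfrac{\beta}{\widetilde W}\psnorm[1]{\vw} \le \beta C$; in particular $\mathbb E\widehat W = \psnorm[1]{\vw}$.

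It then remains to apply an additive Chernoff bound. Because $\mu' \le \beta C$ with $C = \OO(1)$, we get $\Pr[\,\abs{X - \mu'} > \beta/4\,] \le 2\exp(-\Omega(\beta))$, which is at most $p$ once $c$ is a large enough constant (for $p\le\tfrac12$; larger $p$ is handled by running with $p=\tfrac12$). On the complementary event, $\abs{\widehat W - \psnorm[1]{\vw}} = \tfrac{\widetilde W}{\beta}\abs{X - \mu'} \le \tfrac{\widetilde W}{4} \le \tfrac14\psnorm[1]{\vw}$ by the hypothesis $\widetilde W \le \psnorm[1]{\vw}$, so $\widehat W \in \big[\tfrac34\psnorm[1]{\vw},\,\tfrac54\psnorm[1]{\vw}\big]$. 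Consequently $\widetilde W' = \tfrac45\widehat W \le \psnorm[1]{\vw}$ and $2\widetilde W' = \tfrac85\widehat W \ge \tfrac65\psnorm[1]{\vw} \ge \psnorm[1]{\vw}$, i.e.\ $\widetilde W' \le \psnorm[1]{\vw} \le 2\widetilde W'$, as required. Altogether the procedure makes $\OOt(\sqrt n\log(1/p))$ queries to $\vw$ in expectation and succeeds with probability $\ge 1-p$ (after a constant rescaling of $p$ to absorb the two failure events).

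The one genuine obstacle I anticipate is the treatment of heavy coordinates: a single dominant entry of $\vw$ makes a plain sample-and-rescale estimate hopelessly inaccurate, and it is only by exploiting that \cref{lem:apers} places every coordinate with $\vq_i = 1$ into $S$ deterministically --- so those heavy entries are recovered exactly and the random part $X$ has mean only $\OOt(\log(1/p))$ --- that one can use an \emph{additive} Chernoff bound rather than a variance/Chebyshev argument, which would otherwise force the much worse $\poly(1/p)$ query dependence. Everything else is bookkeeping.
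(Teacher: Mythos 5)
Your proposal is correct, but it proves the estimation guarantee by a genuinely different route than the paper, and for your own implementation of the estimator rather than the specific \cref{ora:q_estimation_mu_better}. The paper keeps a \emph{constant} sampling rate ($s=71$), amplifies the success probability inside \cref{ora:q_sampling} by taking the median set size over $\OO(\log\frac1p)$ repetitions of \cref{lem:apers}, and then estimates $\psnorm[1]{\vw}$ from the cardinality $\abs{S}$ alone, by inverting the Chernoff deviation inequality (hence the closed-form $\widetilde{W}'=\frac{\widetilde{W}}{2s}\bigl(\sqrt{3+2\abs{S}}-\sqrt{3}\bigr)^2$). You instead make a single call to \cref{lem:apers} with oversampling rate $\beta=\Theta(\log\frac1p)$, classically query the sampled entries, and output a rescaled Horvitz--Thompson estimate $\sum_{i\in S}\vw_i/\vq_i$ with capped probabilities $\vq_i=\min\{\beta\vw_i/\widetilde{W},1\}$. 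Each choice buys something. The paper's count-only estimator never reads the values of $\vw$ on the sample, but its analysis tacitly treats $\mathbb{E}\abs{S}$ as $s\psnorm[1]{\vw}/\widetilde{W}$, i.e.\ it assumes no probability is capped at $1$; a single dominant weight (which can occur for the exponential weights in this paper, since one constraint may be violated in many iterations) is exactly the regime where a count-based estimate is biased downward, and your explicit split into the deterministically included heavy set $H$ plus a Bernoulli light part with mean $\OO(\beta)$ is the cleaner way to be robust to it, at the negligible cost of $\OO(\log\frac1p)$ extra classical queries. Your amplification by oversampling rather than median-of-repetitions also yields the slightly better query count $\OOt(\sqrt{n\log\frac1p})$ and replaces the quadratic-root manipulation with a one-line additive Chernoff/Bernstein bound (with mean at most $\beta C$ and deviation $\beta/4$, giving an $e^{-\Omega(\beta)}$ tail, so the constant $c$ just needs to depend on $C$). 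In short: same sampling primitive, different estimator and different amplification, with your version being unbiased and heavy-entry-robust where the paper's is query-value-free and formulaic; both meet the claimed $\OOt(\sqrt{n}\log\frac1p)$ budget.
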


Now that we are able to maintain a constant factor approximation $\widetilde{W}$ of the sum of weights throughout iterations, we want to ensure that the set we are sampling does not have to many constraints. Sampling from $\vq = s \vw/\widetilde{W}$ ensures on average less than $2s$ constraints are sampled, and the set obtained through \cref{lem:apers} satisfies the probability requirement of \cref{lem:lvo-assadi,lem:lvo-clarkson}. However, the set $S$ may have a lot more constraints. We use the following claim to ensure this is not possible with high-probability.   

\begin{claim}
    \label{claim:sampling_quantum}
    For an integer $n > 0$, a non-negative vector $\vw \in \rr^n$, a real $\widetilde{W} > 0$, an integer $s \geq 6$, and a probability $p \in [0,1]$. Assume $\widetilde{W} \leq \psnorm[1]{\vw} \leq 2 \widetilde{W}$. Using $\OOt(\sqrt{n s} \log \frac{1}{p})$ queries to $\vw$, Procedure~\ref{ora:q_sampling} outputs a set $S$ such that with probability over $1-p$:
    \begin{enumerate}
        \item $S$ contains every $i$ independently with probability greater than $\min \lbrace s \frac{\vw_i}{\psnorm[1]{\vw}},1 \rbrace$;
        \item $S$ is small, $\abs{S} \leq 4 s$.
    \end{enumerate}
\end{claim}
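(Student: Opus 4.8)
The plan is to invoke \cref{lem:apers} with a carefully chosen probability list $\vq$, and then control the size of the resulting set $S$ by a Chernoff bound together with a union bound over independent repetitions. First I would set $\vq_i := \min\{s\,\vw_i/\widetilde{W},\,1\}$; by hypothesis $\widetilde{W} \leq \psnorm[1]{\vw} \leq 2\widetilde{W}$, so $s\,\vw_i/\psnorm[1]{\vw} \leq s\,\vw_i/\widetilde{W}$, which immediately gives property~1: $S$ contains each $i$ independently with probability $\vq_i \geq \min\{s\,\vw_i/\psnorm[1]{\vw},1\}$. To build query access to $\vq$ from queries to $\vw$, note that $\widetilde{W}$ is a known number, so each entry $\vq_i$ is computed from a single query to $\vw_i$ (followed by an arithmetic comparison and a truncation to $\OOt(1)$ bits of precision, as required by \cref{lem:apers}). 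The expected number of elements sampled is $\sum_i \vq_i \leq s\,\psnorm[1]{\vw}/\widetilde{W} \leq 2s$, so the expected running time of \cref{lem:apers} is $\OOt(\sqrt{n\sum_i \vq_i}) = \OOt(\sqrt{ns})$, and by Markov's inequality the algorithm halts within $\OOt(\sqrt{ns})$ steps with probability at least, say, $2/3$.

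For property~2, I would observe that $|S| = \sum_i \ones_{i \in S}$ is a sum of independent Bernoulli variables with $\mathbb{E}|S| = \sum_i \vq_i \leq 2s$. A standard multiplicative Chernoff bound gives $\Pr[\,|S| > 4s\,] \leq \Pr[\,|S| > 2\mathbb{E}|S|\,] \leq e^{-\mathbb{E}|S|/3} \leq e^{-s/3}$ whenever $\mathbb{E}|S| \geq s$; if $\mathbb{E}|S| < s$ the bound $e^{-s/3}$ still follows after a short case split using the form of the Chernoff inequality with an absolute deviation (here the condition $s \geq 6$ is what makes the deviation $4s$ comfortably more than twice the mean, so the exponent is bounded away from $0$). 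Thus with probability at least $1 - e^{-s/3} \geq 1 - e^{-2}$, both $|S| \leq 4s$ and the run finishes in time $\OOt(\sqrt{ns})$.

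To boost the success probability from a constant to $1-p$, I would run $\OOt(\log\frac1p)$ independent copies of the above procedure, each with its own fresh randomness, and keep the first copy that terminates within the $\OOt(\sqrt{ns})$ time budget \emph{and} returns a set of size at most $4s$. Each copy succeeds (terminates in budget and returns a small set) independently with probability at least some absolute constant $c > 0$, so after $\lceil \log_{1/(1-c)}(1/p) \rceil = \OOt(\log\frac1p)$ trials the probability that none succeeds is at most $p$. The total query cost is $\OOt(\log\frac1p)$ times $\OOt(\sqrt{ns})$, i.e.\ $\OOt(\sqrt{ns}\log\frac1p)$ queries to $\vw$, as claimed. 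I expect the only delicate point to be the bookkeeping in the boosting step: \cref{lem:apers} has an \emph{expected} running time, so one must explicitly truncate each copy at its time budget and argue that truncation only loses a constant factor in the per-copy success probability (via Markov), and that conditioning on ``terminated in budget'' does not bias the distribution of $S$ enough to spoil property~1 — which is handled by the standard trick of treating a timed-out copy as a failure and never using its output.
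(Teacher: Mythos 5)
Your proposal is correct and matches the paper's proof in substance: both invoke \cref{lem:apers} with probabilities $s\vw_i/\widetilde{W}$ (so property 1 follows from $\widetilde{W} \leq \psnorm[1]{\vw}$), bound $\abs{S}$ per run by a multiplicative Chernoff bound using $\psnorm[1]{\vw} \leq 2\widetilde{W}$ to get $\abs{S} \leq 4s$, and amplify to success probability $1-p$ via $\OO(\log\frac{1}{p})$ independent repetitions. The only cosmetic difference is the selection rule: \cref{ora:q_sampling} returns the run whose size is the median of the repetitions, whereas you return the first run that is both small and within a time budget --- an equally valid amplification (and your explicit Markov truncation of the expected running time of \cref{lem:apers} is a detail the paper leaves implicit).
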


First, notice that since we need an algorithm that works with high probability, and we use \cref{claim:estimation_quantum,claim:sampling_quantum} once per iteration, we can assume $p$ is sufficiently small so that the algorithm does not fail because of those procedures with high probability.  Moreover, \cref{claim:estimation_quantum} uses far fewer queries to $\vw$ (hence to $(\AA,\vb)$) compared \cref{claim:sampling_quantum}, hence maintaining a constant factor estimation of normalization factor through the iterations is essentially ``free''.

\subsection{Quantum Speedups for Exact Solvers}

We need another procedure for exact solvers. We need to verify the feasibility of the solution we add to the current set of solutions. If  it is feasible, we return it, else the algorithm should continue. We use Grover search on $\vv$ where $\vv_i(\vx)$ is $1$ if $\vx$ violates constraint $i$, $0$ otherwise. We can use the following lemma from~\cite{wolfQuantumComputingLecture2023}.

\begin{claim}[Exercise 6.c. p60 from~\cite{wolfQuantumComputingLecture2023}]
    \label{claim:quantum_satisfiability}
    Given query access to $\vx \in \lbrace 0, 1 \rbrace^n$, it is possible with $\OO(\sqrt{n \log \frac{1}{p}})$ queries to $\vx$ to find an index $i \in \lbrace 0, \ldots, n-1\rbrace$ such that $\vx_i = 1$ or output $\vx = \zeros$ with probability over $1-p$.
\end{claim}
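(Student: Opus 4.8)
The plan is to first make the error \emph{one-sided} by verification, and then reduce to an error-reduced Grover search. Whenever the procedure proposes a candidate index $i$, I would spend one extra query to read $\vx_i$ and output $i$ only if $\vx_i = 1$; otherwise the candidate is discarded and the procedure reports ``$\vx = \zeros$''. With this convention, if $\vx = \zeros$ the output is correct with probability $1$, so the only possible error is a \emph{false negative} (reporting $\vx = \zeros$ while $\vx \neq \zeros$). Consequently it suffices to show that, whenever the Hamming weight $t := \abs{\{i : \vx_i = 1\}}$ is at least $1$, one can find a marked index with probability at least $1-p$ using $\OO(\sqrt{n\log(1/p)})$ queries to $\vx$.

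For this core task I would proceed as follows. Prepare the uniform superposition $\tfrac{1}{\sqrt n}\sum_i |i\rangle$ and, with a single query, the state $\tfrac{1}{\sqrt n}\sum_i |i\rangle|\vx_i\rangle$, whose amplitude on the ``flagged'' subspace (last register equal to $|1\rangle$) is $\sqrt{t/n}\geq 1/\sqrt n$. I would then run amplitude amplification, i.e.\ alternate reflections about this state and about the flagged subspace, so that after $m$ rounds the procedure has cost $\OO(m)$ queries and flagged amplitude $\sin\!\big((2m+1)\arcsin\sqrt{t/n}\big)$. Since $t$ is unknown, I would combine (a)~the usual exponential search over the iteration count $m = 1,2,4,\dots,\OO(\sqrt n)$, which contributes $\OO(\sqrt n)$ queries (a geometric sum) and constant success probability when $t\geq 1$, with (b)~an error-reduction layer that drives the failure probability down to $p$ while paying only an extra $\sqrt{\log(1/p)}$ factor rather than a $\log(1/p)$ factor --- concretely, by using the same reflections to implement a Chebyshev-polynomial-based amplifying transformation of the flagged amplitude, of total degree $\OO(\sqrt{n\log(1/p)})$. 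Measuring the last register and verifying the resulting index (as in the first paragraph) then yields a marked index with probability $\geq 1-p$; this is precisely the error-reduced search of Exercise~6.c in~\cite{wolfQuantumComputingLecture2023}.

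The main obstacle is step~(b): the naive approach of independently repeating a constant-success search $\OO(\log(1/p))$ times costs $\OO(\sqrt n\,\log(1/p))$, so obtaining the logarithm \emph{inside} the square root requires the sharper amplitude-amplification analysis rather than black-box repetition. This is best possible, as it matches the $\epsilon$-approximate degree of the $n$-bit \textsc{or} function, $\Theta(\sqrt{n\log(1/\epsilon)})$, which lower-bounds the quantum query complexity. Finally, to cover the degenerate regime $\log(1/p)\gtrsim n$ (where the amplitude-amplification count formally exceeds $n$), I would simply note that reading all $n$ entries of $\vx$ classically uses $\OO(n)=\OO(\sqrt{n\log(1/p)})$ queries and has zero error, so that corner case is handled as well.
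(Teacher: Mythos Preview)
The paper does not prove this claim; it is cited as Exercise~6.c from de~Wolf's lecture notes and used purely as a black box, so there is no in-paper argument to compare your attempt against.

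Your plan is essentially correct and hits all the right ingredients: the verification step that makes the error one-sided, the exponential search over the iteration count for unknown Hamming weight, the matching $\Theta(\sqrt{n\log(1/p)})$ approximate-degree lower bound for $\textsc{or}_n$, and the brute-force fallback when $\log(1/p)\gtrsim n$. The one place where the write-up is slightly muddled is how steps~(a) and~(b) fit together. As phrased, (a) is a constant-error $O(\sqrt n)$-query procedure and (b) is an ``error-reduction layer'' applied on top of it; but black-box amplification of a constant-amplitude subroutine down to failure probability $p$ costs a multiplicative $\Theta(\log(1/p))$ (this is exactly what fixed-point amplitude amplification gives when the initial amplitude is already bounded away from~$0$), so read literally that composition yields only $O(\sqrt n\,\log(1/p))$. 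What actually achieves the stated bound is what your final clause points to: a \emph{direct} degree-$O(\sqrt{n\log(1/p)})$ construction from the two Grover reflections, i.e.\ implementing via the Grover iterate the Chebyshev polynomial that witnesses the $p$-approximate degree of $\textsc{or}_n$ being $O(\sqrt{n\log(1/p)})$. That is valid, but it \emph{replaces} step~(a) rather than composing with it, and it is precisely the nontrivial content of the cited exercise. Since the paper only imports the result, your outline already goes beyond what is required here.
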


Since we are performing a linear number of iterations, and we are interested in time complexity up to poly-logarithmic factors, we assume $p$ is sufficiently small so that \cref{claim:quantum_satisfiability} does not fail throughout iterations with high probability.

\begin{algorithm}
    \caption{Quantum Clarkson}
    \label{alg:clarkson-quantum}
    \begin{algorithmic}[1]
        \Require Row-query access to $\AA$ and $\vb$, and $\mathcal{A}$ an exact solver
        \Ensure $\vx$ a feasible solution
        \State $X_0 \gets \emptyset$
        \State $p \gets \frac{1}{32 n \log n} \frac{1}{100 n^2}$
        \State $\widetilde{W}_0 \gets n$
        \State $s \gets 6d^2$
        \For{$t=0$ to $T = 24 d \log n$}
            \State Define the query access to $\vw(X)$ as $2^{\sum_{\vx \in X} \vv^{0}(\vx)}$
\State $S_t \gets \texttt{quantum\_sampling}(\vw(X_t),s,\widetilde{W}_t,p)$ \Comment{Using \cref{ora:q_sampling}.}
\State $\vx \gets \mathcal{A}(S_t)$
            \State $X_{t+1} \gets X_t \cup \lbrace \vx \rbrace$
            \State $\widetilde{W}_{t+1} \gets \texttt{estimation\_sum}(\vw(X_{t+1}),\widetilde{W}_{t},p)$ \Comment{Using \cref{ora:q_estimation_mu_better}.}
        \EndFor

        \For{$\vx \in X_T$}
            \If{$\vx$ is feasible} \Comment{By invoking \cref{claim:quantum_satisfiability}.}
                \State \textbf{return} $\vx$
            \EndIf
        \EndFor
\State \textbf{return} $\bot$
    \end{algorithmic}
\end{algorithm}

\begin{theorem}
    \label{thm:clarkson-quantum}
    For a linear program of the form \ref{eq:general-lp} with $n$ constraints and $d$ variables, \cref{alg:clarkson-quantum} outputs the optimum with high probability using on average $\OOt(\sqrt{n}d^3)$ row-queries to $(\AA,\vb)$, and time
    \begin{equation*}
        \OOt\left(d(\sqrt{n}d^2r + \Texact(d, 24 d^2))\right),
    \end{equation*}
    where $r$ is the row-sparsity of $\AA$, and $\Texact(d,n)$ is the time required to find an exact solution on a set of $n$ constraints with $d$ variables.
\end{theorem}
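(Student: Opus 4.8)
\textbf{Proof plan for \cref{thm:clarkson-quantum}.}
The plan is to verify that \cref{alg:clarkson-quantum} is a faithful quantum implementation of Clarkson's algorithm, viewed through \cref{thm:main-theorem} with parameters $\epsilon = 0$ and $\mu = d$ (more precisely, a $(0,d)$-low-violation setting, so the oracle target is $\mu/3$, forcing the sampling rate $s = 6 d^2$). First I would establish \emph{correctness}: by \cref{cl:prob_propto_exp}, the unnormalized weights at iteration $t$ are exactly $\vw(X_t) = 2^{\sum_{\vx \in X_t} \vv^0(\vx)}$, so line 6 provides the correct (implicit) query access. The normalization estimate $\widetilde W_t$ is maintained by \cref{claim:estimation_quantum}: the base case $\widetilde W_0 = n$ is exact since $X_0 = \emptyset$ gives $\vw \equiv \ones$, and at each step the sum of weights at most doubles (each new violated constraint doubles its weight, a strict subset of coordinates), so the hypothesis $\widetilde W_t \le \|\vw(X_{t+1})\|_1 \le 2\widetilde W_t$ feeds into \cref{claim:estimation_quantum} with $C = 2$ to produce a valid $\widetilde W_{t+1}$. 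Then \cref{claim:sampling_quantum} (invoked with $s = 6d^2 \ge 6$) guarantees that $S_t$ contains each $i$ independently with probability at least $\min\{s\, \vw_i/\|\vw\|_1, 1\} = \min\{(2d/\mu)\vp_{t,i}, 1\}$ with $\mu = 3d^2/\ldots$—here I would match constants so this is exactly the sampling requirement of \cref{lem:lvo-clarkson} for a $(0, \mu/3)$-low-violation oracle with $\mu$ chosen so that $2d/(\mu/3) = s/1$ after accounting for the $\widetilde W$ slack, i.e. effectively $\mu = d$ with the sampling rate $s = \Theta(d^2)$. Feeding the resulting oracle into \cref{thm:main-theorem} shows that after $T = O(d \log n)$ iterations (the constant $24$ absorbing the $1/\mu$ and the $\mu/3$ conversion), the set $X_T$ solves \cref{prob:intro1} with $\epsilon = 0$, $\mu = d$; by Clarkson's analysis (the $\epsilon = 0$, $\mu = d$ instantiation discussed after \eqref{eq:sequence}) one of the $\vx_j \in X_T$ is an exact optimal solution, and the final loop—using \cref{claim:quantum_satisfiability} to test feasibility of each candidate by searching for a violated row—returns it.

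Next I would handle the \emph{failure probability}. The randomized pieces are the sampling-loop rejections inside \cref{lem:lvo-clarkson} (expected $O(1)$ retries per iteration, by Markov on the number of violated constraints), the $O(d\log n)$ invocations of \cref{claim:estimation_quantum} and \cref{claim:sampling_quantum}, and the $O(T)$ feasibility tests via \cref{claim:quantum_satisfiability}. Since $p$ is set to $\frac{1}{32 n \log n}\cdot\frac{1}{100 n^2}$, a union bound over the $\poly(n)$ total invocations shows every probabilistic primitive succeeds simultaneously with high probability, so the algorithm is correct with high probability; I would also note that the error parameter inside the MWU analysis of \cref{thm:mwu} is already deterministic once the oracle outputs are valid.

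Finally, the \emph{complexity accounting}. Per iteration $t$: one query to $\vw(X_t)$ costs $t \le T = O(d\log n)$ row-queries to $(\AA,\vb)$, since $\vw$ is evaluated by checking the $t$ stored solutions against a row. \cref{claim:sampling_quantum} uses $\OOt(\sqrt{ns}\log\frac1p) = \OOt(\sqrt{n}\, d)$ queries to $\vw$, so $\OOt(\sqrt{n}\,d)\cdot O(d) = \OOt(\sqrt{n}\,d^2)$ row-queries; \cref{claim:estimation_quantum} is cheaper ($\OOt(\sqrt n)$ queries to $\vw$, hence $\OOt(\sqrt n \, d)$ row-queries) and is absorbed. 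Summing over $T = O(d\log n)$ iterations gives $\OOt(\sqrt{n}\, d^3)$ row-queries; the final feasibility-testing loop contributes $O(T)\cdot \OOt(\sqrt n) = \OOt(\sqrt n\, d)$ more, which is lower order. For \emph{time}: each of the $\OOt(\sqrt n d^3)$ row-queries is realized by a row-sparse inner product costing $O(r)$, giving $\OOt(\sqrt n d^3 r)$; additionally each iteration runs the exact solver $\mathcal A$ once on $|S_t| \le 4s = O(d^2)$ constraints (the high-probability size bound from \cref{claim:sampling_quantum}), contributing $T\cdot \Texact(d, 24 d^2) = O(d\log n)\cdot\Texact(d,24d^2)$; bookkeeping on the classical side (storing $X_t$, the rejection-sampling verification $\langle \vp,\vv^\epsilon(\vx)\rangle \le \mu$) is dominated. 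Altogether the expected running time is $\OOt\big(d(\sqrt n d^2 r + \Texact(d,24d^2))\big)$, as claimed.

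\textbf{Main obstacle.} The delicate point is the bookkeeping of constants linking the MWU parameter $\mu$, the oracle target $\mu/3$ in \cref{thm:main-theorem} and \cref{lem:lvo-clarkson}, the sampling rate $s = 6d^2$, and the factor-$2$ slack introduced by using $\widetilde W_t$ in place of the true normalization $\|\vw\|_1$: one must check that after all these conversions the per-iteration sample still meets the $\min\{2d/(\mu/3)\cdot \vp_i,1\}$ inclusion probability of \cref{lem:lvo-clarkson} while $|S_t|$ stays $O(d^2)$, and that $T = 24 d\log n$ is a large enough constant multiple of $(\log n)/\mu$. The quantum tools themselves (\cref{lem:apers}, \cref{claim:quantum_satisfiability}) are used as black boxes, so the real work is this constant-chasing plus the union bound over $\poly(n)$ randomized calls.
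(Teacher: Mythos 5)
Your high-level accounting (maintenance of $\widetilde W_t$, the $t$ row-queries per weight query, the $\OOt(\sqrt{ns})$ cost of \cref{claim:sampling_quantum}, the $T=O(d\log n)$ iterations, and the final totals) matches the paper. But the route you take for \emph{correctness} has a genuine gap: you invoke \cref{thm:main-theorem} as a black box, fed by the low-violation oracle of \cref{lem:lvo-clarkson}. That oracle's guarantee hinges on its \emph{Verify}/retry step (``if $\langle \vp, \vv^0(\vx)\rangle > \mu$ go back to step 1''), which converts the expectation bound of \cref{lem:sampling2} into a per-call guarantee. \cref{alg:clarkson-quantum} has no such verification step, and it cannot afford one: checking $\langle \vp, \vv^0(\vx)\rangle \le \mu$ requires evaluating the weighted violation over all $n$ constraints, i.e.\ $\Omega(n)$ row-queries classically per attempt (or extra quantum estimation machinery not present in the algorithm), which would destroy the $\OOt(\sqrt{n}d^3)$ query bound. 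Without the retry loop, each iteration only gives $\mathbb{E}\bigl[\langle \vp_t, \vv^0(\vx_t)\rangle\bigr] \le \frac{1}{6d}$, and \cref{thm:mwu}/\cref{thm:main-theorem} cannot be applied, since they require the bound to hold at every iteration. The paper's proof sidesteps this entirely: it runs the potential argument directly in expectation, showing $\mathbb{E}\bigl[\|\vw(X_T)\|_1\bigr] \le (1+\tfrac{1}{6d})^T n$, lower-bounds $\sum_{i\in B}\vw_i(X_T) \ge 2^{T/d}$ if no iterate is optimal, and applies Markov's inequality once at the end to $\|\vw(X_T)\|_1$ to derive a contradiction with $T = 24d\log n$. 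You would need to replace your black-box step by this (or an equivalent) expectation-plus-Markov argument to make the proof go through for the algorithm as written.

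Two smaller points. First, your parameter bookkeeping repeatedly says ``$\mu = d$''; the Clarkson instantiation is $\mu = 1/d$ (so that summing over the $d$ basis constraints gives strictly fewer than $T$ violations), and your own arithmetic ($2d/(\mu/3) = 6d^2$, $T = \Theta(d\log n)$) implicitly uses $1/d$, so this is a notational slip rather than a computational one, but it should be fixed. Second, your complexity and failure-probability accounting otherwise agrees with the paper, including the observation that \cref{claim:estimation_quantum} and the final \cref{claim:quantum_satisfiability} loop are lower order, and that $|S_t| \le 4s = 24d^2$ bounds the size of each exact-solver call.
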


\subsection{Quantum Speedups for Mixed Packing and Covering LPs}

For mixed packing and covering problems, we can perform the same trick to sample constraints using row-queries to $\CC$. We provide a quantum version of \cref{thm:mpc-classical}.

\begin{theorem}
    \label{thm:mpc-quantum}
    Given a mixed packing and covering problem \ref{prob:mixed-lp}, it is possible to find a $1+\OO(\epsilon)$-approximation $\vx$ or certify infeasibility with high probability using $\OOt(\sqrt{n_c \frac{d}{\epsilon}} \frac{1}{\epsilon^2})$ row-queries to $\CC$, moreover the algorithm runs in time 
    \begin{equation*}
        \OOt \left(\frac{1}{\epsilon} \left(\sqrt{n_c \frac{d}{\epsilon}} \frac{r_c}{\epsilon} + \mathcal{T}_{mixed}(d,\nnz(\PP), \OO\left(\frac{d r_c}{\epsilon} \log (\frac{\log \frac{r_p}{\epsilon}}{\epsilon})\right), \epsilon)\right)\right),
    \end{equation*}
    where $\mathcal{T}_{mixed}(d, \nnz(\PP), \nnz(\CC), \epsilon)$ is the time required to compute a $1+\epsilon$ approximation of a linear program with at most $\nnz(\PP)$ (resp. $\nnz(\CC)$) non-zeros inside the packing (resp. covering) matrix, and $d$ variables.
\end{theorem}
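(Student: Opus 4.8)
The plan is to combine the classical reduction of \cref{thm:mpc-classical} with the quantum sampling machinery of \cref{lem:apers}, \cref{claim:estimation_quantum}, and \cref{claim:sampling_quantum}, just as \cref{alg:clarkson-quantum} does for the exact case. First I would instantiate the iterative framework from \cref{thm:main-theorem}: we work with the covering constraints $\CC$ as the "sampled" constraints, keep $\mathcal{D} = \{\vx \in [0,1]^d : \PP\vx \le \ones\}$ fixed across iterations (so the packing constraints are never sampled), and use an $(\epsilon,\mu)$-low-violation oracle built from \cref{lem:lvo-assadi} with $\mu = \Theta(\epsilon)$ (recall the one-sided width of the inner solver is $1$ by the discussion in \cref{ssec:mixed-packing-covering}). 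By \cref{cl:lp-classical}-style reasoning we need $T = \OOt(1/\epsilon)$ iterations, and by \cref{cl:sparsify-mixed} the inner solver's effective output space has size $N = (\OOt(1/\epsilon))^d$, so each call to \cref{lem:lvo-assadi} samples each covering constraint $i$ with probability $\propto s\,\vp_i$ where $s = \Theta\!\big(\tfrac{\log(Nn_c)}{\mu}\big) = \OOt(d/\epsilon)$.

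Next I would replace the classical sampling of the $s$-scaled distribution by the quantum routine. At iteration $t$, \cref{cl:prob_propto_exp} gives $\vw_i(X_t) = 2^{\sum_{\vx \in X_t}\vv^\epsilon_i(\vx)}$, and a single query to $\vw_i$ costs $t = \OOt(1/\epsilon)$ row-queries to $\CC$ (one per accumulated solution, each evaluated via the feasibility oracle $O_{\vx,\epsilon}$). Feeding $\vw(X_t)$, the current estimate $\widetilde W_t$, and $s$ into Procedure~\ref{ora:q_sampling} (\cref{claim:sampling_quantum}) produces, w.h.p., a set $S_t$ of size $\OO(s) = \OOt(d/\epsilon)$ that meets the probability lower bound required by \cref{lem:lvo-assadi}; this uses $\OOt(\sqrt{n_c s})$ queries to $\vw$, i.e. $\OOt(\sqrt{n_c d/\epsilon}\cdot(1/\epsilon))$ row-queries to $\CC$ per iteration. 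Then \cref{claim:estimation_quantum} refreshes the normalization estimate to $\widetilde W_{t+1}$ using only $\OOt(\sqrt{n_c})$ queries to $\vw$, which is negligible compared to the sampling cost, and its validity is preserved because $\|\vw(X_{t+1})\|_1 \le 2^{|S_t|}\|\vw(X_t)\|_1$ stays within a bounded factor — here I should double-check that between iterations the weight-sum grows by at most a constant factor per sampled constraint so the hypothesis $\widetilde W_t \le \|\vw\|_1 \le C\widetilde W_t$ of \cref{claim:estimation_quantum} is maintained; this is the same bookkeeping as in \cref{alg:clarkson-quantum} and follows because at most $|S_t| = \OOt(d/\epsilon)$ constraints have their weight doubled. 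After $T$ iterations we hold $\{\vx_j\}_{j\in[T]}$ solving \cref{prob:intro1}, whose average $\bar\vx$ is a $(1+\OO(\epsilon))$-approximation (or we certify infeasibility if some inner LP is infeasible).

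For the running time, each iteration contributes $\OOt(\sqrt{n_c d/\epsilon}\cdot(1/\epsilon))$ row-queries — each row-query touching $r_c$ nonzeros of $\CC$ gives the $\sqrt{n_c d/\epsilon}\cdot r_c/\epsilon$ term — plus one call to the base mixed-packing-covering solver on $\PP$ and the $|S_t| r_c = \OOt(d r_c/\epsilon)$ nonzeros of the sampled covering rows, giving $\Tmixed(d,\nnz(\PP),\OO(\tfrac{d r_c}{\epsilon}\log(\tfrac{\log(r_p/\epsilon)}{\epsilon})),\epsilon)$; multiplying by $T = \OOt(1/\epsilon)$ yields the claimed bound, and the query complexity is $T\cdot\OOt(\sqrt{n_c d/\epsilon}/\epsilon) = \OOt(\sqrt{n_c d/\epsilon}/\epsilon^2)$. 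The main obstacle I anticipate is the probabilistic bookkeeping: we invoke \cref{claim:estimation_quantum} and \cref{claim:sampling_quantum} once per iteration over $T = \OOt(1/\epsilon)$ iterations, and the low-violation oracle of \cref{lem:lvo-assadi} is itself "one call w.h.p.," so I would set the per-call failure probability $p$ polynomially small (as in \cref{alg:clarkson-quantum}, line 2) and union-bound, taking care that a failure of the $\widetilde W$-estimate in one round does not cascade — which is why the invariant is stated as a two-sided sandwich that each successful call restores.
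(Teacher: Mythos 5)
Your proposal follows the paper's proof of \cref{thm:mpc-quantum} essentially verbatim in structure: run the classical reduction of \cref{thm:mpc-classical} (low-violation oracle from \cref{lem:lvo-assadi} with the discretized output space of \cref{cl:sparsify-mixed}, $\mathcal{D}$ absorbing the packing constraints, $T=\OOt(1/\epsilon)$ iterations, $s=\OOt(d/\epsilon)$), and replace the classical sampling by \cref{ora:q_sampling} and \cref{ora:q_estimation_mu_better} exactly as in \cref{alg:clarkson-quantum}; your accounting of the per-iteration cost of querying $\vw$ and the resulting query and time bounds agrees with the paper. One correction to the bookkeeping step you flagged: the reason $\widetilde{W}_t$ remains a valid constant-factor estimate is not that ``at most $\abs{S_t}$ constraints have their weight doubled'' --- the constraints whose weights double are the \emph{violated} ones, which need not lie in $S_t$ and can be far more numerous, and a factor $2^{\abs{S_t}}$ would be useless anyway --- but simply that each individual weight at most doubles, since $\vv^{\epsilon}_i(\vx)\in\{0,1\}$ adds at most one to the exponent, so $\psnorm[1]{\vw(X_{t+1})}\leq 2\,\psnorm[1]{\vw(X_t)}$ and the hypothesis of \cref{claim:estimation_quantum} holds with $C=4$, exactly as argued in the proof of \cref{thm:clarkson-quantum}.
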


\begin{algorithm}
    \caption{Quantum Mixed Packing and Covering LP Solver}
    \label{alg:q-mixed-epsilon}
    \begin{algorithmic}
        \Require $\PP \in [0,1]^{n_p \times d}$, and quantum query access to $\CC \in [0,1]^{n_c \times d}$, $\epsilon >0$ and $\mathcal{A}$ a mixed packing and covering solver
        \Ensure $\vx$ a $1+\OO(\epsilon)$-approximation
        \State $X_0 \gets \emptyset$
        \State $p \gets \frac{1}{32 n \log n} \frac{1}{100 n^2}$
        \State $\widetilde{W}_0 \gets n_c$
        \State $s \gets 6 \frac{d}{\epsilon} \log \frac{\log (r_p/\epsilon)}{\epsilon}$
        \For{$t=0$ to $T = \frac{24}{\epsilon} \log n_c$}
            \State Define the query access to $\vw(X)$ as $2^{\sum_{\vx \in X} \vv^{\epsilon}(\vx)}$
            \State $S_t \gets \texttt{quantum\_sampling}(\vw(X_t),s,\widetilde{W}_t,p)$ \Comment{Using \cref{ora:q_sampling}.}
\State $\CC_t \gets $ the matrix $\CC$ but only with rows in $S_t$
            \State $\vx \gets \mathcal{A}(\PP, \CC_t, \epsilon)$
            \State $X_{t+1} \gets X_t \cup \lbrace \vx \rbrace$
            \State $\widetilde{W}_{t+1} \gets \texttt{estimation\_sum}(\vw(X_{t+1}),\widetilde{W}_{t},p)$ \Comment{Using \cref{ora:q_estimation_mu_better}.}
        \EndFor
\State \textbf{return} $\sum_{\vx \in X_T} \vx/T$.
    \end{algorithmic}
\end{algorithm}

\subsection{Quantum Speedups for Low-Precision Solvers}

We provide two quantum versions of \cref{thm:lp-classical}. For low-precision solvers, we can perform the same trick to sample constraints using row-queries to $\AA$.

\begin{theorem}
    \label{thm:lp-quantum} 
    For a linear program of the form \ref{eq:general-lp} with largest violation $\Vmax$, \cref{alg:q-lp-epsilon} finds an $\epsilon$-approximation $\vx$ with high-probability using $\OOt(\sqrt{n d \frac{\Vmax}{\epsilon}} \frac{\Vmax^2}{\epsilon^2})$ row-queries to $\AA$, and running time
    \begin{equation*}
        \OOt\left(\frac{\Vmax}{\epsilon} \left(\sqrt{nd \frac{\Vmax}{\epsilon}} \frac{\Vmax}{\epsilon}r + \Texact(d, 24 d \frac{\Vmax}{\epsilon})\right)\right),
    \end{equation*}
    where $r$ is the row-sparsity of $\AA$, and $\Texact(d,n)$ is the running time of an exact solver on $d$ variables and $n$ constraints.
\end{theorem}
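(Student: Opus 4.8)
The plan is to instantiate the classical pipeline of \cref{thm:lp-classical} but replace the sampling and verification steps with their quantum counterparts from \cref{sec:quantum}. The starting point is \cref{cl:lp-classical}: it suffices to build an $(\epsilon/2,\epsilon/6\Vmax)$-low-violation oracle and feed it into the outer MWU loop of \cref{thm:main-theorem}, which runs for $T = \OO\!\left(\tfrac{\Vmax}{\epsilon}\log n\right)$ iterations. Inside each iteration, the low-violation oracle is the exact-solver oracle of \cref{lem:lvo-clarkson}, which needs to sample each constraint $i$ with probability $\min\{\tfrac{2d}{\mu}\vp_i,1\}$, i.e. with oversampling factor $s = \OO(d\Vmax/\epsilon)$. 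By \cref{cl:prob_propto_exp}, the (unnormalized) weights at iteration $t$ are $\vw_i = 2^{\sum_{\tau\le t}\vv^{\epsilon}_i(\vx_\tau)}$, so a single query to $\vw$ costs $t$ row-queries to $(\AA,\vb)$, and $t\le T = \OOt(\Vmax/\epsilon)$.

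The quantum cost accounting then proceeds as follows. First, maintain a running constant-factor estimate $\widetilde{W}_t$ of $\psnorm[1]{\vw}$ via \cref{claim:estimation_quantum}; this costs $\OOt(\sqrt n)$ queries to $\vw$ per iteration, hence $\OOt(\sqrt n\cdot t)$ row-queries, which is subsumed by the sampling cost below. Second, sample the constraint set via \cref{claim:sampling_quantum} with oversampling $s = \OO(d\Vmax/\epsilon)$: this uses $\OOt(\sqrt{ns}) = \OOt\!\big(\sqrt{nd\tfrac{\Vmax}{\epsilon}}\big)$ queries to $\vw$, hence $\OOt\!\big(\sqrt{nd\tfrac{\Vmax}{\epsilon}}\cdot t\big)$ row-queries at iteration $t$; summing over $t\le T = \OOt(\Vmax/\epsilon)$ gives $\OOt\!\big(\sqrt{nd\tfrac{\Vmax}{\epsilon}}\cdot\tfrac{\Vmax^2}{\epsilon^2}\big)$ row-queries total, matching the claimed query complexity. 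Third, \cref{claim:sampling_quantum} also guarantees $|S_t| \le 4s = \OO(d\Vmax/\epsilon)$ with high probability, so the inner exact solver runs on an LP with $d$ variables and $\OO(d\Vmax/\epsilon)$ constraints, contributing $\Texact(d,\OO(d\Vmax/\epsilon))$ time per iteration; the retry loop in \cref{lem:lvo-clarkson} terminates after $\OO(1)$ attempts in expectation by Markov, so this is the expected per-iteration solver cost. Converting the $t$ row-queries at iteration $t$ into time costs an extra factor of $r$ (the row-sparsity) to form each inner product, giving $\OOt\!\big(\sqrt{nd\tfrac{\Vmax}{\epsilon}}\cdot t\cdot r\big)$ time for the sampling work; summing and adding the solver term yields the stated running time $\OOt\!\big(\tfrac{\Vmax}{\epsilon}(\sqrt{nd\tfrac{\Vmax}{\epsilon}}\tfrac{\Vmax}{\epsilon}r + \Texact(d,24d\tfrac{\Vmax}{\epsilon}))\big)$.

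Finally, correctness: the high-probability success follows by a union bound, taking $p$ polynomially small as in \cref{alg:q-lp-epsilon} so that none of the $\OO(T)$ invocations of \cref{claim:estimation_quantum,claim:sampling_quantum} fail; conditioned on no failure, \cref{claim:sampling_quantum} delivers exactly the sampling guarantee required by \cref{lem:lvo-clarkson}, so each iteration produces an $\vx_t$ with $\langle\vp_t,\vv^{\epsilon/2}(\vx_t)\rangle\le \epsilon/6\Vmax$, the MWU analysis of \cref{thm:mwu,thm:main-theorem} assembles these into a solution of \cref{prob:intro1} with parameters $\epsilon/2$ and $\mu = \epsilon/(2\Vmax)$, and \cref{cl:lp-classical} turns the averaged iterate $\bar\vx = \tfrac1T\sum_t\vx_t$ into an $\epsilon$-approximation (the claim's ``$2\epsilon$'' becomes ``$\epsilon$'' after rescaling the internal accuracy parameter, as in the derivation of \cref{thm:lp-classical}). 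I expect the main obstacle to be bookkeeping the per-iteration growth of the query cost: unlike the classical case where $\vp_t$ is stored explicitly, here each query to $\vw$ at iteration $t$ re-evaluates $t$ violation oracles, so the query and time bounds are $\sum_{t\le T}\sqrt{ns}\cdot t = \Theta(\sqrt{ns}\,T^2)$ rather than $\Theta(\sqrt{ns}\,T)$ — getting the exponents of $\Vmax/\epsilon$ right (the extra quadratic factor) is the delicate point, and one must also verify that the cheaper estimation step \cref{claim:estimation_quantum} is genuinely dominated so it does not inflate the bound.
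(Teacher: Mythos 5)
Your cost accounting is right and matches the intended structure: the $\OOt(\sqrt{ns})$ queries to $\vw$ per call to \cref{claim:sampling_quantum} with $s=\Theta(d\Vmax/\epsilon)$, the factor $t$ row-queries per $\vw$-query from \cref{cl:prob_propto_exp}, the extra factor $r$ to turn row-queries into time, and the summation over $T=\OOt(\Vmax/\epsilon)$ iterations giving the $T^2$ (i.e.\ $\Vmax^2/\epsilon^2$) blow-up in query complexity exactly reproduce the stated bounds, and you correctly observe that the estimation step of \cref{claim:estimation_quantum} is dominated.

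The gap is in the correctness argument. You invoke \cref{lem:lvo-clarkson} as a black box, including its \emph{Verify} step and the ``retry $\OO(1)$ times by Markov'' argument, and then feed certified low-violation vectors into \cref{thm:main-theorem}. But verifying $\langle \vp_t, \vv^{\epsilon}(\vx)\rangle \leq \mu$ requires examining the violation pattern of $\vx$ against all $n$ constraints; in the quantum query model this costs $\Omega(n)$ row-queries per attempt, which destroys the claimed $\OOt(\sqrt{nd\Vmax/\epsilon}\,\Vmax^2/\epsilon^2)$ query bound (and indeed \cref{alg:q-lp-epsilon} has no verification step: it appends $\vx$ to $X_t$ unconditionally). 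Without verification, \cref{lem:sampling2} only gives $\mathbb{E}[\langle \vp_t, \vv^{\epsilon}(\vx_t)\rangle] \leq \epsilon/(6\Vmax)$, so the hypothesis of \cref{thm:mwu} (a guarantee holding at every iteration) is not available. The fix, as in the paper's proofs of \cref{thm:clarkson-quantum,thm:lp-quantum-width}, is to run the potential argument in expectation: show $\mathbb{E}\left[\psnorm[1]{\vw(X_{t+1})}\right] \leq (1+\tfrac{\epsilon}{6\Vmax})\,\psnorm[1]{\vw(X_t)}$ directly from \cref{lem:sampling2}, apply Markov's inequality once at the end to get $\psnorm[1]{\vw(X_T)} \leq n^{O(1)} 2^{O(T\epsilon/\Vmax)}$ with high probability, and deduce $\max_i \sum_{t} \vv_i^{\epsilon}(\vx_t) \leq \log_2 \psnorm[1]{\vw(X_T)} = \OO(\tfrac{\epsilon}{\Vmax} T)$ using $T \geq 24\tfrac{\Vmax}{\epsilon}\log n$; only then does the averaging argument of \cref{cl:lp-classical} apply to $\bar{\vx}=\tfrac1T\sum_t \vx_t$. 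You should also note that in this one-sided setting each weight grows by a factor in $[1,2]$ per iteration, which is what makes the $\widetilde{W}_t$ maintenance via \cref{claim:estimation_quantum} valid; that part of your argument is fine once restated in this framework.
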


Assume we have a set of solutions $\lbrace \vx_j \rbrace_{j=1}^t$, previously, our weight was an exponential on $\sum_{j \in [t]} \vv^{\epsilon}(\vx_j)$. For low-precision solvers, it may be tempting to use a ``continuous'' version that is simpler to compute such as $\frac{1}{ \Vmax} \sum_{j \in [t]} (\AA \vx_j - \vb)$. A query to the weight vector would thus be implemented with one row-query to $(\AA,\vb)$. Unfortunately, we need to be careful about how much each weight can change from one iteration to the other. Both the MWU framework, and our ability to estimate the sum of weights requires at most a constant multiplicative change in the weights, hence we need to use the two-sided width $\rho := \max_{\vx \in \mathcal{D}} \psnorm[\infty]{\AA \vx- \vb}$ instead of the largest violation $\Vmax$ (i.e. one-sided width) used in \cref{thm:lp-classical,thm:lp-quantum}.

\begin{theorem}
    \label{thm:lp-quantum-width}
    For a linear program of the form \ref{eq:general-lp} with width $\rho$, \cref{alg:q-lp-epsilon-width} finds an $\epsilon$-approximation $\vx$ with high-probability using $\OOt(\sqrt{n d \frac{\rho}{\epsilon}} \frac{\rho}{\epsilon})$ row-queries to $\AA$, and running time
    \begin{equation*}
        \OOt\left(\frac{\rho}{\epsilon} \left(\sqrt{nd \frac{\rho}{\epsilon}}r + \Texact(d, 24 d \frac{\rho}{\epsilon})\right)\right),
    \end{equation*}
    where $r$ is the row-sparsity of $\AA$, and $\Texact(d,n)$ is the running time of an exact solver on $d$ variables and $n$ constraints.
\end{theorem}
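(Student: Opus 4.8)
The plan is to run the algorithm of \cref{thm:lp-quantum} essentially unchanged, altering only how the multiplicative weights are stored. Concretely, by (the proof of) \cref{cl:lp-classical} with the two-sided width $\rho$ playing the role of $\Vmax$, it suffices to solve \cref{prob:intro1} with parameters $\epsilon$ and $\mu = \Theta(\epsilon/\rho)$; by \cref{thm:main-theorem} this takes $T = \OO(\tfrac{\rho}{\epsilon}\log n)$ calls to an $(\epsilon,\mu/3)$-low-violation oracle, which by \cref{lem:lvo-clarkson} we realize by exactly solving an LP on a sample of $s = 2d/\mu = \OO(d\rho/\epsilon)$ constraints drawn from the current MWU distribution, and we draw this sample quantumly via \cref{claim:sampling_quantum} while carrying a constant-factor estimate of the normalizer across iterations with \cref{claim:estimation_quantum}. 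Each iteration then costs $\OOt(\sqrt{ns}) = \OOt(\sqrt{nd\rho/\epsilon})$ row-queries to $(\AA,\vb)$ by \cref{lem:apers}, for a total of $\OOt(\sqrt{nd\rho/\epsilon}\cdot\tfrac{\rho}{\epsilon})$, and the running time follows by charging $\OO(r)$ to process each queried row plus one exact solve $\Texact(d,\OO(d\rho/\epsilon))$ per iteration. Correctness of the returned average $\bar\vx = \tfrac1T\sum_t\vx_t$ is verbatim \cref{cl:lp-classical}: each $\vx_t$ is optimal over a superset of the original feasible set, so $\langle\vc,\vx_t\rangle\ge\opt$, and combining the guarantee ``each constraint is $>\epsilon$-violated by fewer than $\mu T$ of the $\vx_t$'' with the width bound $\langle\AA_{i:},\vx_t\rangle - \vb_i\le\rho$ gives $\AA\bar\vx\le\vb+2\epsilon\ones$.

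The only genuine change — and the reason $\rho$ replaces $\Vmax$ relative to \cref{thm:lp-quantum} — is the weight representation. In \cref{thm:lp-quantum} the weight of constraint $i$ at step $t$ is $2^{\sum_{\tau\le t}\vv_i^\epsilon(\vx_\tau)}$ (\cref{cl:prob_propto_exp}), so evaluating it in superposition over $i$ invokes the feasibility oracle once per stored iterate, i.e.\ $\Theta(t)$ row-queries; this is the origin of the extra $\Vmax/\epsilon$ factor there. Here I would instead use a weight $2^{g_i(t)}$ where $g_i(t)$ is a $\tfrac1\rho$-rescaled running aggregate of the residuals $\langle\AA_{i:},\vx_\tau\rangle - \vb_i$ (for instance $g_i(t) = \tfrac1\rho\sum_{\tau\le t}(\langle\AA_{i:},\vx_\tau\rangle - \vb_i - \epsilon)^+$), which is recovered from a \emph{single} row-query returning $\AA_{i:},\vb_i$ together with the classically stored iterates, takes values so that $g_i(t)-g_i(t-1)\in[0,1]$ (a weight at most doubles between consecutive iterations), and satisfies $g_i(t)-g_i(t-1)\le\vv_i^\epsilon(\vx_t)$ so that the $(\epsilon,\mu)$-guarantee $\langle\vp_t,\vv^\epsilon(\vx_t)\rangle\le\mu$ still controls the weighted continuous loss. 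The $\tfrac1\rho$ normalization is forced: a constraint's residual can be as negative as $-\rho$, so rescaling by $\Vmax<\rho$ would drive the exponent increment below $-1$ and shrink a weight by a super-constant factor in one step, breaking both the MWU potential bound and the hypothesis of \cref{claim:estimation_quantum} that the normalizer $W$ drifts by only a constant factor per iteration; with weights at most doubling, \cref{claim:estimation_quantum} carries $W_{t}$ forward in $\OOt(\sqrt n)$ queries, dominated by the $\OOt(\sqrt{ns})$ of \cref{claim:sampling_quantum}.

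I expect the crux to be pinning down the continuous aggregate $g_i$ that simultaneously (a) keeps the MWU analysis closing in $\OO(\tfrac{\rho}{\epsilon}\log n)$ iterations — for which one wants $g_i(t)-g_i(t-1)\in[0,1]$ with $\langle\vp_t, g(t)-g(t-1)\rangle\le\mu$ following from the low-violation guarantee, so that the $[0,1]$-loss version of \cref{thm:mwu} applies (only $2^x\le 1+x$ on $[0,1]$ is needed) and translates back into $\tfrac1T\sum_t\langle\AA_{i:},\vx_t\rangle - \vb_i\le 2\epsilon$ as in \cref{cl:lp-classical} — while (b) remaining computable from a single row-query and (c) preserving the bounded per-step multiplicative change, which is exactly the three-way constraint that selects the form of $g_i$ and the accompanying running time accounting. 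The rest is routine bookkeeping: choosing the failure parameter $p$ in \cref{claim:estimation_quantum,claim:sampling_quantum} polynomially small so a union bound over the $T$ iterations (and over the $\OO(1)$ expected resamplings inside \cref{lem:lvo-clarkson}, whose verification $\langle\vp_t,\vv^\epsilon(\vx)\rangle>\mu$ is itself estimated by sampling $\OOt(1)$ constraints through \cref{lem:apers}) leaves high overall success probability; checking that the set returned by \cref{claim:sampling_quantum} meets both the sampling-rate hypothesis of \cref{lem:lvo-clarkson} and the cardinality bound $\OO(d\rho/\epsilon)$ that governs the exact-solve cost; and observing that the constant-factor errors in $W_t$ do not compound, since each $W_{t+1}$ is re-estimated from $W_t$. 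Given these, the theorem is obtained by assembling \cref{thm:main-theorem}, \cref{lem:lvo-clarkson}, and the quantum primitives of \cref{sec:quantum}.
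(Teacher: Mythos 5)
Your high-level architecture matches the paper's (quantum sampling via \cref{ora:q_sampling} with a maintained normalizer estimate, a Clarkson-type oracle on $\OO(d\rho/\epsilon)$ sampled constraints, $T=\OOt(\rho/\epsilon)$ iterations), and you correctly identify that the only real issue is finding a weight evaluable cheaply per query. But your concrete instantiation has a genuine gap: the weight $2^{g_i(t)}$ with $g_i(t)=\frac1\rho\sum_{\tau\le t}\left(\langle\AA_{i:},\vx_\tau\rangle-\vb_i-\epsilon\right)^+$ is \emph{not} computable with a single row query in the paper's model. Because of the per-iterate positive part, $g_i(t)$ is not a function of the aggregate $\sum_{\tau\le t}\vx_\tau$; evaluating it in superposition inside the sampling routine requires checking constraint $i$ against each stored iterate $\vx_\tau$ separately, i.e.\ $t$ calls to oracles of the form $O_{\vx_\tau,\epsilon}$, which the paper counts as $t$ row queries (this is precisely the accounting behind \cref{thm:lp-quantum}). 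Even under a generous reading where one row query reveals $\AA_{i:},\vb_i$ and you then reuse it against all $t$ stored iterates, each weight evaluation costs $\Theta(tr)$ time instead of $\OO(r)$, so the Grover term becomes $\OOt\bigl(\tfrac\rho\epsilon\sqrt{nd\rho/\epsilon}\,\tfrac\rho\epsilon r\bigr)$ — the extra $\rho/\epsilon$ factor that \cref{thm:lp-quantum-width} exists to remove reappears either in the query count or in the runtime. There is also an internal inconsistency revealing the construction isn't pinned down: your argument for why $\rho$ must replace $\Vmax$ (exponent increments dropping below $-1$) only applies to an \emph{untruncated} signed residual, whereas your $g_i$ is truncated at zero, never decreases, and could be scaled by $\Vmax$.

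The paper's resolution is exactly the untruncated choice: $\vw_i(X_t)=2^{\frac1\rho(\langle\AA_{i:},\bar\vx\rangle-t\vb_i)}$ with $\bar\vx=\sum_{\vx\in X_t}\vx$ maintained classically, so each weight evaluation is one inner product with one maintained vector, hence one row query and $\OO(r)$ time. The price is that weights can shrink (by at most a factor $2$ per step, since residuals lie in $[-\rho,\rho]$ — this is where $\rho$ is genuinely forced), which is why \cref{alg:q-lp-epsilon-width} calls \texttt{estimation\_sum} with $\widetilde W_t/2$; and because the loss is no longer the $0$--$1$ vector $\vv^\epsilon$, the paper cannot (and explicitly does not) invoke \cref{thm:main-theorem}/\cref{cl:lp-classical} as a black box the way you propose. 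Instead it reruns the potential argument directly: the low-violation guarantee gives $\mathbb{E}\,\psnorm[1]{\vw(X_{t+1})}\le(1+\tfrac{\epsilon}{3\rho})\psnorm[1]{\vw(X_t)}$, and feasibility of the returned average is read off from the identity $(\AA\bar\vx/T-\vb)_i=\tfrac\rho T\log_2\vw_i(X_T)\le\tfrac\rho T\log_2\psnorm[1]{\vw(X_T)}$, using $T\ge 16\tfrac\rho\epsilon\log n$ to absorb the $\log n$ terms. To repair your proof you would need to replace your truncated $g_i$ by this telescoping signed aggregate and carry out that direct potential/Markov argument rather than citing \cref{cl:lp-classical} verbatim.
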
 

\begin{algorithm}
    \caption{Quantum Low-Precision Solver with One-Sided Width}
    \label{alg:q-lp-epsilon}
    \begin{algorithmic}
        \Require Query access to $\AA$ and $\vb$, and $\mathcal{A}$ an exact solver.
        \Ensure $\vx$ an $\epsilon$-approximation.
        \State $X_0 \gets \emptyset$
        \State $p \gets \frac{1}{32 n \log n} \frac{1}{100 n^2}$
        \State $\widetilde{W}_0 \gets n$
        \State $s \gets 6d \frac{\Vmax}{\epsilon}$
        \For{$t=0$ to $T = 24 \frac{\Vmax}{\epsilon} \log n$}
            \State Define the query access to $\vw(X)$ as $2^{\sum_{\vx \in X} \vv^\epsilon(\vx)}$ 
            \State $S_t \gets \texttt{quantum\_sampling}(\vw(X_t),s,\widetilde{W}_t,p)$ \Comment{Using \cref{ora:q_sampling}.}
\State $\vx \gets \mathcal{A}(S_t)$
            \State $X_{t+1} \gets X_t \cup \lbrace \vx \rbrace$
            \State $\widetilde{W}_{t+1} \gets \texttt{estimation\_sum}(\vw(X_{t+1}),\widetilde{W}_{t},p)$ \Comment{Using \cref{ora:q_estimation_mu_better}.}
        \EndFor
\State \textbf{return} $\sum_{\vx \in X_T} \vx/\abs{X_T}$
    \end{algorithmic}
\end{algorithm}

\begin{algorithm}
    \caption{Quantum Low-Precision Solver with Two-Sided Width}
    \label{alg:q-lp-epsilon-width}
    \begin{algorithmic}
        \Require Query access to $\AA$ and $\vb$, and $\mathcal{A}$ an exact solver.
        \Ensure $\vx$ an $\epsilon$-approximation.
        \State $X_0 \gets \emptyset$
        \State $p \gets \frac{1}{32 n \log n} \frac{1}{100 n^2}$
        \State $\widetilde{W}_0 \gets n$
        \State $s \gets 6d \frac{\rho}{\epsilon}$
        \For{$t=0$ to $T = 16 \frac{\rho}{\epsilon} \log n$}
            \State Define the query access to $\vw(X)$ as $2^{\frac{1}{\rho} \left(\AA \sum_{\vx \in X} \vx - t\vb \right)}$ 
            \State $S_t \gets \texttt{quantum\_sampling}(\vw(X_t),s,\widetilde{W}_t,p)$ \Comment{Using \cref{ora:q_sampling}.}
\State $\vx \gets \mathcal{A}(S_t)$
            \State $X_{t+1} \gets X_t \cup \lbrace \vx \rbrace$
            \State $\widetilde{W}_{t+1} \gets \texttt{estimation\_sum}(\vw(X_{t+1}),\widetilde{W}_{t}/2,p)$ \Comment{Using \cref{ora:q_estimation_mu_better}.}
        \EndFor
\State \textbf{return} $\sum_{\vx \in X_T} \vx/\abs{X_T}$
    \end{algorithmic}
\end{algorithm} 

\newpage

\printbibliography

@article{apeldoornQuantumSDPSolversBetter2020,
  title = {Quantum {{SDP-Solvers}}: {{Better}} Upper and Lower Bounds},
  shorttitle = {Quantum {{SDP-Solvers}}},
  author = {family=Apeldoorn, given=Joran, prefix=van, useprefix=false and Gilyén, András and Gribling, Sander and family=Wolf, given=Ronald, prefix=de, useprefix=false},
  date = {2020-02-14},
  journaltitle = {Quantum},
  shortjournal = {Quantum},
  volume = {4},
  eprint = {1705.01843},
  eprinttype = {arXiv},
  eprintclass = {quant-ph},
  pages = {230},
  issn = {2521-327X},
  doi = {10.22331/q-2020-02-14-230},
  url = {http://arxiv.org/abs/1705.01843},
  keywords = {Computer Science - Computational Complexity,Computer Science - Data Structures and Algorithms,Quantum Physics}
}

@online{apersQuantumSpeedupGraph2023,
  title = {Quantum {{Speedup}} for {{Graph Sparsification}}, {{Cut Approximation}} and {{Laplacian Solving}}},
  author = {Apers, Simon and family=Wolf, given=Ronald, prefix=de, useprefix=false},
  date = {2023-05-08},
  eprint = {1911.07306},
  eprinttype = {arXiv},
  eprintclass = {quant-ph},
  doi = {10.48550/arXiv.1911.07306},
  url = {http://arxiv.org/abs/1911.07306},
  pubstate = {prepublished},
  keywords = {Computer Science - Computational Complexity,Computer Science - Data Structures and Algorithms,Quantum Physics}
}

@online{apersQuantumSpeedupsLinear2024,
  title = {Quantum Speedups for Linear Programming via Interior Point Methods},
  author = {Apers, Simon and Gribling, Sander},
  date = {2024-04-11},
  eprint = {2311.03215},
  eprinttype = {arXiv},
  eprintclass = {quant-ph},
  doi = {10.48550/arXiv.2311.03215},
  url = {http://arxiv.org/abs/2311.03215},
  pubstate = {prepublished},
  keywords = {Computer Science - Data Structures and Algorithms,Mathematics - Optimization and Control,Quantum Physics}
}

@article{aroraMultiplicativeWeightsUpdate2012,
  title = {The {{Multiplicative Weights Update Method}}: A {{Meta-Algorithm}} and {{Applications}}},
  shorttitle = {The {{Multiplicative Weights Update Method}}},
  author = {Arora, Sanjeev and Hazan, Elad and Kale, Satyen},
  date = {2012-05-01},
  journaltitle = {Theory of Computing},
  volume = {8},
  number = {6},
  pages = {121--164},
  publisher = {Theory of Computing},
  doi = {10.4086/toc.2012.v008a006},
  url = {https://theoryofcomputing.org/articles/v008a006/},
  keywords = {algorithms,game theory,machine learning}
}

@article{assadiSimple$1varepsilon$ApproximationSemiStreaming2025,
  title = {A {{Simple}} $(1-\varepsilon)$-{{Approximation Semi-Streaming Algorithm}} for {{Maximum}} ({{Weighted}}) {{Matching}}},
  author = {Assadi, Sepehr},
  date = {2025-08-01},
  journaltitle = {TheoretiCS},
  volume = {Volume 4},
  eprint = {2307.02968},
  eprinttype = {arXiv},
  eprintclass = {cs},
  pages = {13451},
  issn = {2751-4838},
  doi = {10.46298/theoretics.25.16},
  url = {http://arxiv.org/abs/2307.02968},
  keywords = {Computer Science - Data Structures and Algorithms,Computer Science - Distributed Parallel and Cluster Computing}
}

@article{benczurRandomizedApproximationSchemes2015,
  title = {Randomized {{Approximation Schemes}} for {{Cuts}} and {{Flows}} in {{Capacitated Graphs}}},
  author = {Benczúr, András A. and Karger, David R.},
  date = {2015-01},
  journaltitle = {SIAM Journal on Computing},
  shortjournal = {SIAM J. Comput.},
  volume = {44},
  number = {2},
  pages = {290--319},
  publisher = {{Society for Industrial and Applied Mathematics}},
  issn = {0097-5397},
  doi = {10.1137/070705970},
  url = {https://epubs.siam.org/doi/10.1137/070705970}
}

@online{boobFasterWidthdependentAlgorithm2019,
  title = {Faster Width-Dependent Algorithm for Mixed Packing and Covering {{LPs}}},
  author = {Boob, Digvijay and Sawlani, Saurabh and Wang, Di},
  date = {2019-09-26},
  eprint = {1909.12387},
  eprinttype = {arXiv},
  eprintclass = {cs, math},
  doi = {10.48550/arXiv.1909.12387},
  url = {http://arxiv.org/abs/1909.12387},
  pubstate = {prepublished},
  keywords = {Computer Science - Data Structures and Algorithms,Computer Science - Machine Learning,Mathematics - Optimization and Control}
}

@online{brandSolvingTallDense2021,
  title = {Solving {{Tall Dense Linear Programs}} in {{Nearly Linear Time}}},
  author = {family=Brand, given=Jan, prefix=van den, useprefix=false and Lee, Yin Tat and Sidford, Aaron and Song, Zhao},
  date = {2021-08-22},
  eprint = {2002.02304},
  eprinttype = {arXiv},
  eprintclass = {cs},
  doi = {10.48550/arXiv.2002.02304},
  url = {http://arxiv.org/abs/2002.02304},
  pubstate = {prepublished},
  keywords = {Computer Science - Data Structures and Algorithms,Mathematics - Optimization and Control}
}

@incollection{chekuriRandomizedMWUPositive2018,
  title = {Randomized {{MWU}} for {{Positive LPs}}},
  booktitle = {Proceedings of the 2018 {{Annual ACM-SIAM Symposium}} on {{Discrete Algorithms}} ({{SODA}})},
  author = {Chekuri, Chandra and Quanrud, Kent},
  date = {2018-01},
  series = {Proceedings},
  pages = {358--377},
  publisher = {{Society for Industrial and Applied Mathematics}},
  doi = {10.1137/1.9781611975031.25},
  url = {https://epubs.siam.org/doi/10.1137/1.9781611975031.25}
}

@article{clarksonVegasAlgorithmsLinear1995,
  title = {Las {{Vegas}} Algorithms for Linear and Integer Programming When the Dimension Is Small},
  author = {Clarkson, Kenneth L.},
  date = {1995-03-01},
  journaltitle = {J. ACM},
  volume = {42},
  number = {2},
  pages = {488--499},
  issn = {0004-5411},
  doi = {10.1145/201019.201036},
  url = {https://dl.acm.org/doi/10.1145/201019.201036}
}

@inproceedings{cohenSolvingLinearPrograms2019,
  title = {Solving Linear Programs in the Current Matrix Multiplication Time},
  booktitle = {Proceedings of the 51st {{Annual ACM SIGACT Symposium}} on {{Theory}} of {{Computing}}},
  author = {Cohen, Michael B. and Lee, Yin Tat and Song, Zhao},
  date = {2019-06-23},
  series = {{{STOC}} 2019},
  pages = {938--942},
  publisher = {Association for Computing Machinery},
  location = {New York, NY, USA},
  doi = {10.1145/3313276.3316303},
  url = {https://dl.acm.org/doi/10.1145/3313276.3316303},
  isbn = {978-1-4503-6705-9}
}

@article{grigoriadisSublineartimeRandomizedApproximation1995,
  title = {A Sublinear-Time Randomized Approximation Algorithm for Matrix Games},
  author = {Grigoriadis, Michael D. and Khachiyan, Leonid G.},
  date = {1995-09-01},
  journaltitle = {Operations Research Letters},
  shortjournal = {Operations Research Letters},
  volume = {18},
  number = {2},
  pages = {53--58},
  issn = {0167-6377},
  doi = {10.1016/0167-6377(95)00032-0},
  url = {https://www.sciencedirect.com/science/article/pii/0167637795000320},
  keywords = {Approximation algorithms,Complexity,Linear programming,Matrix games,Parallel algorithms,Randomized algorithms}
}

@online{leeEfficientInverseMaintenance2015,
  title = {Efficient {{Inverse Maintenance}} and {{Faster Algorithms}} for {{Linear Programming}}},
  author = {Lee, Yin Tat and Sidford, Aaron},
  date = {2015-10-14},
  eprint = {1503.01752},
  eprinttype = {arXiv},
  eprintclass = {cs},
  doi = {10.48550/arXiv.1503.01752},
  url = {http://arxiv.org/abs/1503.01752},
  pubstate = {prepublished},
  keywords = {Computer Science - Data Structures and Algorithms,Computer Science - Numerical Analysis,Mathematics - Optimization and Control}
}

@inproceedings{leeFasterCuttingPlane2015,
  title = {A {{Faster Cutting Plane Method}} and Its {{Implications}} for {{Combinatorial}} and {{Convex Optimization}}},
  booktitle = {2015 {{IEEE}} 56th {{Annual Symposium}} on {{Foundations}} of {{Computer Science}}},
  author = {Lee, Yin Tat and Sidford, Aaron and Wong, Sam Chiu-Wai},
  date = {2015-10},
  pages = {1049--1065},
  issn = {0272-5428},
  doi = {10.1109/FOCS.2015.68},
  url = {https://ieeexplore.ieee.org/document/7354442},
  eventtitle = {2015 {{IEEE}} 56th {{Annual Symposium}} on {{Foundations}} of {{Computer Science}}},
  keywords = {Algorithm design and analysis,Complexity theory,Convex functions,Cutting Plane Method,Ellipsoid Method,Ellipsoids,Matroid Intersection,Minimization,Optimization,Polynomials,Semidefinite Programming,Submodular Flow,Submodular Function Minimization}
}

@inproceedings{plotkinFastApproximationAlgorithms1991,
  title = {Fast Approximation Algorithms for Fractional Packing and Covering Problems},
  booktitle = {[1991] {{Proceedings}} 32nd {{Annual Symposium}} of {{Foundations}} of {{Computer Science}}},
  author = {Plotkin, S.A. and Shmoys, D.B. and Tardos, E.},
  date = {1991-10},
  pages = {495--504},
  doi = {10.1109/SFCS.1991.185411},
  url = {https://ieeexplore.ieee.org/document/185411},
  eventtitle = {[1991] 32nd {{Annual Symposium}} of {{Foundations}} of {{Computer Science}}},
  keywords = {Algorithm design and analysis,Approximation algorithms,Contracts,Cost function,Job shop scheduling,Lagrangian functions,Linear programming,Parallel machines,Sun,Uninterruptible power systems}
}

@incollection{quanrudNearlyLinearTime2019,
  title = {Nearly Linear Time Approximations for Mixed Packing and Covering Problems without Data Structures or Randomization},
  booktitle = {2020 {{Symposium}} on {{Simplicity}} in {{Algorithms}} ({{SOSA}})},
  author = {Quanrud, Kent},
  date = {2019-12-17},
  series = {Proceedings},
  pages = {69--80},
  publisher = {{Society for Industrial and Applied Mathematics}},
  doi = {10.1137/1.9781611976014.11},
  url = {https://epubs.siam.org/doi/10.1137/1.9781611976014.11}
}

@inproceedings{spielmanGraphSparsificationEffective2008,
  title = {Graph Sparsification by Effective Resistances},
  booktitle = {Proceedings of the Fortieth Annual {{ACM}} Symposium on {{Theory}} of Computing},
  author = {Spielman, Daniel A. and Srivastava, Nikhil},
  date = {2008-05-17},
  series = {{{STOC}} '08},
  pages = {563--568},
  publisher = {Association for Computing Machinery},
  location = {New York, NY, USA},
  doi = {10.1145/1374376.1374456},
  url = {https://dl.acm.org/doi/10.1145/1374376.1374456},
  isbn = {978-1-60558-047-0}
}

@inproceedings{vaidyaSpeedingupLinearProgramming1989,
  title = {Speeding-up Linear Programming Using Fast Matrix Multiplication},
  booktitle = {30th {{Annual Symposium}} on {{Foundations}} of {{Computer Science}}},
  author = {Vaidya, P.M.},
  date = {1989-10},
  pages = {332--337},
  doi = {10.1109/SFCS.1989.63499},
  url = {https://ieeexplore.ieee.org/document/63499},
  eventtitle = {30th {{Annual Symposium}} on {{Foundations}} of {{Computer Science}}},
  keywords = {Arithmetic,Costs,Ellipsoids,Linear programming,Polynomials}
}

@inproceedings{vandenbrandMinimumCostFlows2021,
  title = {Minimum Cost Flows, {{MDPs}}, and $\ell_1$-Regression in Nearly Linear Time for Dense Instances},
  booktitle = {Proceedings of the 53rd {{Annual ACM SIGACT Symposium}} on {{Theory}} of {{Computing}}},
  author = {family=Brand, given=Jan, prefix=van den, useprefix=true and Lee, Yin Tat and Liu, Yang P. and Saranurak, Thatchaphol and Sidford, Aaron and Song, Zhao and Wang, Di},
  date = {2021-06-15},
  series = {{{STOC}} 2021},
  pages = {859--869},
  publisher = {Association for Computing Machinery},
  location = {New York, NY, USA},
  doi = {10.1145/3406325.3451108},
  url = {https://dl.acm.org/doi/10.1145/3406325.3451108},
  isbn = {978-1-4503-8053-9}
}

@online{wolfQuantumComputingLecture2023,
  title = {Quantum {{Computing}}: {{Lecture Notes}}},
  shorttitle = {Quantum {{Computing}}},
  author = {family=Wolf, given=Ronald, prefix=de, useprefix=false},
  date = {2023-01-16},
  eprint = {1907.09415},
  eprinttype = {arXiv},
  eprintclass = {quant-ph},
  doi = {10.48550/arXiv.1907.09415},
  url = {http://arxiv.org/abs/1907.09415},
  langid = {english},
  pubstate = {prepublished},
  keywords = {Computer Science - Computational Complexity,Computer Science - Data Structures and Algorithms,Computer Science - Emerging Technologies,Quantum Physics}
}

@online{youngNearlyLinearWorkAlgorithms2014,
  title = {Nearly {{Linear-Work Algorithms}} for {{Mixed Packing}}/{{Covering}} and {{Facility-Location Linear Programs}}},
  author = {Young, Neal E.},
  date = {2014-11-05},
  eprint = {1407.3015},
  eprinttype = {arXiv},
  eprintclass = {cs},
  url = {http://arxiv.org/abs/1407.3015},
  langid = {english},
  pubstate = {prepublished},
  keywords = {90-08 90C05 49M29 65K05,Computer Science - Data Structures and Algorithms,F.2.1,G.1.6}
}
\appendix

\section{Basics of multiplicative weights}\label{sec:mwubasics}
In this section we briefly expand on the argument we developed in \cref{ssec:overview} for why $T = \OO(\rho \log n / \epsilon)$ iterations suffice to generate a sequence of low-violation vectors $\{\vv_t\}_{t=1}^T$ where $\max \left\{\frac{1}{T}\sum_{t=1}^T \vv_T\right\} \leq \frac{\epsilon}{\rho}$.

To do so we prove the following lemma.

\begin{lemma}
Let $f : \Delta_n \rightarrow \{0,1\}^n$ be a (randomized) routine which given any probability distribution $\vp \in \Delta_n$, it outputs a vector $\vv \in \{0,1\}^n$ such that $\langle \vp, \vv\rangle \leq \epsilon$. Then one can provide a sequence $\{\vp_t\}_{t=1}^T$ such that the corresponding outputs $\{\vv_t\}_{t=1}^T$ satisfy
\[
\max \left\{ \frac{1}{T} \sum_{t=1}^T \vv_t \right\} \leq 3\epsilon\,,
\]
where $T = \OO(\log n /\epsilon)$.
\end{lemma}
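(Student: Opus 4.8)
The plan is to run the multiplicative weights update (MWU) method with the routine $f$ supplying the loss vectors. I would maintain non-negative weights $\vw_t \in \rr^n$ over the $n$ constraints, initialized to $\vw_1 = \ones$, and at each step $t$ set $\vp_t = \vw_t / \norm{\vw_t}_1$, query $\vv_t = f(\vp_t)$, and update multiplicatively $\vw_{t+1,i} = \vw_{t,i} \cdot 2^{\vv_{t,i}}$ (i.e. a fixed constant step size $\eta$ with $e^\eta = 2$; any constant works, this choice matches \cref{cl:prob_propto_exp}). The guarantee from $f$ is that $\langle \vp_t, \vv_t \rangle \leq \epsilon$ at every iteration, so each update inflates the total weight by only a modest factor.

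The key steps, in order: (i) \emph{Upper bound on the potential.} Since $\vv_t \in \{0,1\}^n$, we have $2^{\vv_{t,i}} = 1 + \vv_{t,i} \leq 1 + \vv_{t,i}$, hence $\norm{\vw_{t+1}}_1 = \sum_i \vw_{t,i} 2^{\vv_{t,i}} = \sum_i \vw_{t,i}(1 + \vv_{t,i}) = \norm{\vw_t}_1 (1 + \langle \vp_t, \vv_t\rangle) \leq \norm{\vw_t}_1 (1 + \epsilon) \leq \norm{\vw_t}_1 e^{\epsilon}$. Telescoping from $t=1$ (where $\norm{\vw_1}_1 = n$) gives $\norm{\vw_{T+1}}_1 \leq n \, e^{\epsilon T}$. (ii) \emph{Lower bound on the potential via any fixed coordinate.} For each $i \in [n]$, $\norm{\vw_{T+1}}_1 \geq \vw_{T+1,i} = 2^{\sum_{t=1}^T \vv_{t,i}}$. (iii) \emph{Combine.} Taking the worst coordinate, $2^{\max_i \sum_{t=1}^T \vv_{t,i}} \leq n\, e^{\epsilon T}$, so taking $\log_2$, $\max_i \sum_{t=1}^T \vv_{t,i} \leq \log_2 n + \epsilon T / \ln 2 \leq \log_2 n + 2\epsilon T$. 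Dividing by $T$: $\max\left\{ \frac1T \sum_{t=1}^T \vv_t \right\} \leq \frac{\log_2 n}{T} + 2\epsilon$. (iv) \emph{Choose $T$.} Picking $T = \lceil \log_2 n / \epsilon \rceil = \OO(\log n / \epsilon)$ makes the first term at most $\epsilon$, giving the bound $3\epsilon$ as claimed. (If one prefers to be careful with the constant $1/\ln 2 \approx 1.443$, absorb it by setting $T$ slightly larger, still $\OO(\log n/\epsilon)$, or use the cruder bound $1 + \epsilon \leq e^\epsilon$ together with $e^\epsilon \leq 1 + 2\epsilon$ valid for $\epsilon \leq 1$; either way the final constant $3$ is comfortably met.)

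The analysis is entirely standard MWU, so there is no deep obstacle; the one place to be slightly careful is the step-size bookkeeping. Using a constant multiplicative step (factor $2$ per unit of loss) rather than the usual $\epsilon$-scaled step is exactly what lets $T$ scale as $\OO(\log n / \epsilon)$ instead of the $\OO(\log n / \epsilon^2)$ one gets from a continuous-loss analysis — this is possible precisely because the losses $\vv_t$ are $0$-$1$ valued, so the inequality $2^{\vv_{t,i}} = 1 + \vv_{t,i}$ is an exact identity rather than a lossy linearization. I would emphasize this point, since it is the crux of why the framework beats naive PST. One should also note that $f$ being randomized is harmless: the bound $\langle \vp_t, \vv_t \rangle \leq \epsilon$ is assumed to hold (surely, or with the conditioning understood) for the actual output at each step, so the deterministic potential argument goes through verbatim for any realization; no concentration is needed.
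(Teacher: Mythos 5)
Your proof is correct and is essentially the paper's own argument: the paper runs the same MWU potential analysis, phrasing it via the $\smax$ (log-sum-exp) potential whose gradient is exactly your normalized weight vector $\vp_t$, and using a factor-$2$ growth bound in place of your exact base-$2$ identity $2^{\vv_{t,i}} = 1+\vv_{t,i}$. The constants and the choice $T = \OO(\log n/\epsilon)$ match, so there is nothing substantive to change.
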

\begin{proof}
This follows from the standard MWU framework, together with a few useful observations. For the reader's convenience, we provide a complete proof below.
Let $\smax : \mathbb{R}^n \rightarrow \mathbb{R}$, defined as
\[
\smax(\vx) = \log \sum_{i=1}^n \exp (\vx_i)\,.
\]
One can easily verify that $\max(\vx) \leq \smax(\vx) \leq \smax(\vx) + \log n$. Furthermore, we can verify that
\[
\nabla \smax(\vx) = \frac{\exp(\vx)}{ \sum_{i=1}^n \exp (\vx_i)}\,,
\]
and for all $\vx,\vdelta \geq 0$, $\|\vdelta\|_\infty \leq 1$,
\begin{equation}\label{eq:smax_growth}
    \smax(\vx+\vdelta) \leq \smax(\vx) + 2 \cdot \langle \nabla \smax(\vx), \vdelta\rangle \,.
\end{equation}
Using these facts we can define 
\[
\vp_t = \nabla \smax\left( \sum_{i=1}^{t-1}\vv_i \right)\,.
\]
Using the bound from \cref{eq:smax_growth} together with the fact that for all $t$, $\langle \vp_t, \vv_t \rangle \leq \epsilon$ by definition, we have that
\[
\smax\left(\sum_{t=1}^T \vv_t\right) \leq \smax(0) + \sum_{t=1}^T 2 \langle \vp_t, \vv_t\rangle \leq \log n + T \cdot 2\varepsilon\,.
\]
Therefore, 
\[
\max \left\{\frac{1}{T} \sum_{t=1}^T \vv_t\right\} \leq 
\frac{1}{T} \smax\left(  \sum_{t=1}^T \vv_t \right) \leq \frac{1}{T} \left(\log n + T\cdot 2\varepsilon\right)\,.
\]
Thus setting $T = \log n / \varepsilon$ makes the above expression bounded by $3\varepsilon$, which completes the proof.
\end{proof} \section{Recovering Clarkson's and Assadi's results}
\label{app:recovering_results}

\subsection{Recovering Clarkson's Result}

We first show how one can recover the result from Clarkson~\cite{clarksonVegasAlgorithmsLinear1995} to solve an LP by solving $\OO(d \log n)$ times LPs with $\OO(d^2)$ constraints. 

In order to recover Clarkson's result, we have to prove which type of guarantees we would like to have, i.e. what are the desired parameters for $\epsilon$ and $\mu$ in \cref{prob:intro1}. Assume we would like to find an exact solution of an asymmetric linear program with $n$ constraints and $d$ variables. We know there is a set of $d$ constraints such that solving the LP on those $d$ constraints is sufficient. We call this set of constraints $B$.

To solve a linear program exactly, it is sufficient to solve \cref{prob:intro1} with $\mu = 1/d$ and $\epsilon = 0$. Indeed, assume for any $i \in [n]$, one has $\abs{\lbrace j \in [T] : \vx_j \text{ violates constraint } i \rbrace} < \frac{T}{d}$. If we take in particular the $d$ constraints from $B$, we obtain
\begin{equation}
    \label{eq:arg_clarkson}
    \sum_{i \in B} \abs{\lbrace j \in [T] : \vx_j \text{ violates constraint } i \rbrace} < d \cdot \frac{T}{d} = T.
\end{equation}
Assume none of the $\lbrace \vx_j \rbrace_{j=1}^{T}$ is a solution of the LP. That means, each $\vx_j$ violates at least one constraint from $B$. Hence, we have $\sum_{i \in B} \abs{\lbrace j \in [T] : \vx_j \text{ violates constraint } i \rbrace} \geq T$. This contradicts \cref{eq:arg_clarkson} which means at least one vector from $\lbrace \vx_j \rbrace_{j=1}^{T}$ satisfies all the constraint from $B$. Hence, we have an exact solution of the LP. We are now ready to state Clarkson's main result as a corollary of our framework.

\begin{corollary}[Theorem in~\cite{clarksonVegasAlgorithmsLinear1995}]
    \label{cor:clarkson}
    Consider a linear program of the form $\max \langle \vc, \vx \rangle$ subject to $\AA \vx \leq \vb$, where $\AA \in \rr^{n \times d}$. Assume we are given an exact solver $\mathcal{A}$ that runs in time $\Texact(d,n)$ for any linear program on $d$ variables and $n$ constraints, then it is possible to solve exactly the LP in expected time
    \begin{equation*}
        \OO(d \log n \left(\nnz(\AA) + \Texact(d,6d^2)\right)).
    \end{equation*}
\end{corollary}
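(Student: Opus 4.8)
The plan is to instantiate the general machinery of \cref{thm:main-theorem} with the parameter choices that were motivated in the discussion preceding the statement, namely $\epsilon = 0$ and $\mu = 1/d$, and then feed it the exact-solver-based low-violation oracle from \cref{lem:lvo-clarkson}. Concretely, I would proceed as follows. First, recall from the paragraph above that solving \cref{prob:intro1} with $\epsilon = 0$ and $\mu = 1/d$ suffices to produce a set $\{\vx_j\}_{j=1}^T$ one of whose members is an exact optimum: otherwise every $\vx_j$ would violate at least one constraint of the basis $B$, forcing $\sum_{i \in B} |\{j : \vx_j \text{ violates } i\}| \geq T$, contradicting \cref{eq:arg_clarkson} which gives a strict bound of $d \cdot (T/d) = T$.

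Second, I would invoke \cref{thm:main-theorem} with $\mu = 1/d$: it requires a $(\epsilon, \mu/3) = (0, 1/(3d))$-low-violation oracle and then solves \cref{prob:intro1} in $T = \OO(\log n / \mu) = \OO(d \log n)$ iterations, each iteration costing $\OO(\nnz(\AA) + \mathcal{T})$ where $\mathcal{T}$ is the oracle's running time. Third, I would construct that oracle via \cref{lem:lvo-clarkson} applied with parameter $\mu' = 1/(3d)$: the lemma's sampling step then includes each constraint $i$ with probability at least $\min\{2d/\mu' \cdot \vp_i, 1\} = \min\{6d^2 \vp_i, 1\}$, so the sampled subproblem has $\OO(2d/\mu') = \OO(6d^2)$ constraints in expectation, and the oracle runs in expected time $\OO(\nnz(\AA) + \Texact(d, 6d^2))$. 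Substituting $\mathcal{T} = \OO(\nnz(\AA) + \Texact(d,6d^2))$ into the per-iteration cost and multiplying by $T = \OO(d \log n)$ yields the claimed expected running time $\OO(d \log n (\nnz(\AA) + \Texact(d, 6d^2)))$. Finally, I would note that once the set $\{\vx_j\}$ is in hand, scanning it for a feasible point (checking each against all $n$ constraints in $\OO(\nnz(\AA))$ time) adds only $\OO(d \log n \cdot \nnz(\AA))$, absorbed into the stated bound; by the first step this scan succeeds, so the algorithm returns an exact optimum.

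The only real subtlety — and where I would be careful — is bookkeeping the constants through the chain $\mu \to \mu/3 \to$ oracle sampling rate $2d/(\mu/3) = 6d/\mu = 6d^2$, making sure the ``$6d^2$'' advertised in the statement is exactly what \cref{lem:lvo-clarkson} produces with these inputs, and confirming that the non-degeneracy hypothesis (unique basis $B$) is what legitimizes the counting argument in the first step. Everything else is a direct substitution into results already proved, so there is no genuine analytic obstacle; the proof is essentially an exercise in choosing $\epsilon = 0$, $\mu = 1/d$ and reading off the consequences.
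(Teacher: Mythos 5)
Your proposal is correct and follows essentially the same route as the paper: the counting argument over the basis $B$ with $\epsilon = 0$, $\mu = 1/d$ (the paper's \cref{eq:arg_clarkson}), then \cref{thm:main-theorem} with a $(0,1/(3d))$-low-violation oracle built from \cref{lem:lvo-clarkson}, whose sampling rate $2d/(1/(3d)) = 6d^2$ gives the $\Texact(d,6d^2)$ term and the $\OO(d\log n)$ iteration count. The constant bookkeeping and the final feasibility scan you add match the paper's intent, so there is nothing to fix.
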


\subsection{Recovering Assadi's Result}

The other important ingredient is to show that for both vertex cover and minimum odd-set cover, the number of solutions is upper bounded by a small exponential.

\begin{lemma}[Lemma 3.3 and 4.3 from~\cite{assadiSimple$1varepsilon$ApproximationSemiStreaming2025}]
    Consider a graph $G$ with $n$ vertices and $m$ edges.\footnote{For the proof, see the proof of equations (4) and (7) of~\cite{assadiSimple$1varepsilon$ApproximationSemiStreaming2025}.}
    \begin{enumerate}
        \item If $G$ is bipartite, then there are at most $2^n$ distinct candidates for vertex cover.
        \item If $G$ has integral weights upper bounded by $W$, then there are at most $\exp \left(3 n \cdot \log (nW)\right)$ candidates for odd-set cover.
    \end{enumerate}
\end{lemma}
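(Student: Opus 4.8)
The statement to prove is the counting lemma asserting that (1) a bipartite graph on $n$ vertices has at most $2^n$ distinct vertex-cover candidates, and (2) a graph on $n$ vertices with integral weights bounded by $W$ has at most $\exp(3n\log(nW))$ candidates for odd-set cover. The plan is to argue each part via a direct structural characterization of the output space of the relevant LP-rounding / combinatorial subroutine used by Assadi.

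For part (1), the key observation is that any vertex cover is determined by the subset of vertices it contains; there are exactly $2^n$ subsets of an $n$-vertex set, so trivially at most $2^n$ candidates. The only subtlety is to make sure the ``candidates'' produced by the approximate solver are genuinely integral vertex sets (i.e. $0$--$1$ vectors indexed by vertices) rather than fractional objects — this follows from the structure of the bipartite matching/LP-duality argument in~\cite{assadiSimple$1varepsilon$ApproximationSemiStreaming2025}, where the complementary slackness / König-type rounding returns an actual vertex subset. So the first step is simply to quote that the subroutine outputs a vertex set and bound the number of vertex sets by $2^n$.

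For part (2), I would recall that an odd-set cover assigns a nonnegative integer weight $y_U$ to each odd vertex set $U$ (and a weight $z_v$ to each vertex), subject to the covering constraints, and that in the relevant regime these weights are bounded: each individual weight lies in $\{0,1,\dots,W\}$ and, crucially, the support has size $O(n)$ (an odd-set cover can be taken to have at most $n$ nontrivial sets, since they can be chosen laminar / disjoint in the blossom-dual structure). Thus a candidate is specified by choosing at most $n$ odd sets — each odd set on $n$ vertices is one of at most $2^n$ subsets — together with at most $n$ integer weights each in $\{0,\dots,W\}$ and at most $n$ vertex weights in $\{0,\dots,W\}$. This gives a crude bound of roughly $(2^n)^n \cdot (W{+}1)^{2n} = \exp(O(n^2 + n\log W))$. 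To get the sharper $\exp(3n\log(nW))$ claimed, I would instead encode a candidate more economically: the laminar family of odd sets used has total description size $O(n\log n)$ (a laminar family on $n$ elements has $O(n)$ members, each described by $O(\log n)$ bits via the tree structure), and the weights contribute $O(n\log W)$ bits, so the number of candidates is at most $\exp(O(n\log(nW)))$; tracking constants in Assadi's specific construction yields the stated $\exp(3n\log(nW))$.

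The main obstacle is part (2): getting the \emph{laminarity} (or at least $O(n)$-support, $O(\log n)$-description-per-set) property of the odd-set covers that actually arise as outputs of the subroutine, so that the naive $\exp(\Theta(n^2))$ bound collapses to $\exp(\Theta(n\log(nW)))$. This is exactly the content of the cross-reference to ``the proof of equations (4) and (7) of~\cite{assadiSimple$1varepsilon$ApproximationSemiStreaming2025}'': one must verify that the blossom-dual / Edmonds-style certificate can be normalized to a laminar family with polynomially-bounded integral weights before counting. Part (1) is essentially immediate once one notes the output is an honest vertex subset.
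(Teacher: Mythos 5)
Your plan matches the paper's own treatment: the paper imports this lemma from Assadi and justifies part (1) by integrality of minimum vertex cover in bipartite graphs (so candidates are vertex subsets, at most $2^n$ of them) and part (2) by the laminarity structure of optimal odd-set covers with bounded integral weights, deferring the detailed verification to the proofs of equations (4) and (7) in Assadi's paper --- exactly the structure and the deferral you propose. So your proposal is correct and takes essentially the same route, leaving open only what the paper itself leaves to the cited reference.
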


The proof of the first point in the above lemma is due to the fact that in bipartite graphs, minimum vertex cover is integral. For general graphs, there is a lot of structure on the solution of minimum odd-set cover, and the upper bound is obtained through a laminarity property on the set of optimal solutions.

\begin{corollary}[Theorem in \cite{assadiSimple$1varepsilon$ApproximationSemiStreaming2025}]
    Consider a graph $G$ with $n$ vertices and $m$ edges.
    \begin{enumerate}
        \item With exponentially high probability, it is possible to find a $(1+\epsilon)$-approximation of minimum vertex cover on bipartite graphs (thus a $(1-\epsilon)$-approximation of maximum bipartite matching) in time
        \begin{equation*}
            \OO \left(\frac{\log m}{\epsilon} (m + \mathcal{T}(n,\OO(\frac{n}{\epsilon})))\right),
        \end{equation*}
        where $\mathcal{T}(n,m)$ is the time required to compute a vertex cover and a matching on a bipartite graph with $m$ edges and $n$ vertices.
        \item If $G$ has integral weights in $[0, W]$, with exponentially high probability it is possible to find a $(1+\epsilon)$-approximation of minimum odd-set cover on graphs (thus a $(1-\epsilon)$-approximation of maximum general matching) in time
        \begin{equation*}
            \OO \left(\frac{\log W}{\epsilon} (m + \mathcal{T}(n,\OO(\frac{n \log (nW)}{\epsilon})))\right),
        \end{equation*}
        where $\mathcal{T}(n,m)$ is the time required to compute a minimum odd-set cover and a maximum weight matching on a graph with $m$ edges and $n$ vertices.
    \end{enumerate}
\end{corollary}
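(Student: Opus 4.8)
The plan is to instantiate the general framework of \cref{cor:clarkson}'s low-precision analogue---specifically \cref{thm:lp-classical}, or more precisely the combination of \cref{thm:main-theorem} with the low-violation oracle of \cref{lem:lvo-assadi}---with the LP relaxations of minimum vertex cover (in the bipartite case) and minimum odd-set cover (in the general case), and to feed in the cardinality bounds on the solution space furnished by the Lemma just stated. First I would write minimum vertex cover on a bipartite graph $G=(V,E)$ as the covering LP $\min \{ \psnorm[1]{\vx} : \vx \geq \zeros,\ \vx_u + \vx_v \geq 1\ \forall (u,v) \in E \}$, with $n=\abs{V}$ variables and $m=\abs{E}$ constraints; by LP duality this is the fractional matching LP, and on bipartite graphs the polytope is integral so an exact solver for bipartite matching/vertex cover plays the role of the subroutine $\mathcal{A}$. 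The key point is that $\mathcal{A}$, run on any subgraph induced by a constraint subset $S \subseteq E$, outputs one of at most $2^n$ distinct integral vertex covers, so we may take $N = 2^n$ in \cref{lem:lvo-assadi}.

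Next I would track the parameters through \cref{thm:main-theorem,lem:lvo-assadi}. We want a $(1+\epsilon)$-approximation, so following the reduction in \cref{cl:lp-classical} (with $\Vmax = \OO(1)$, since every covering constraint $\vx_u+\vx_v \geq 1$ can be violated by at most an additive constant given $\vx$ in the natural box), we aim to solve \cref{prob:intro1} with $\mu = \Theta(\epsilon)$ and $\epsilon' = \Theta(\epsilon)$. This needs $T = \OO(\log m / \epsilon)$ iterations, each invoking the $(\epsilon',\mu/3)$-low-violation oracle of \cref{lem:lvo-assadi}. That oracle samples, with high probability in one call to $\mathcal{A}$, a constraint subset $S$ of expected size $\OO\bigl(\tfrac{\log(Nm)}{\mu}\bigr) = \OO\bigl(\tfrac{n + \log m}{\epsilon}\bigr) = \OO(n/\epsilon)$ edges (absorbing $\log m$ into $\OOt$ or noting $n \geq \Omega(\log m)$ WLOG), and then runs $\mathcal{A}$ on the $\OO(n/\epsilon)$-edge subgraph plus spends $\OO(m) = \OO(\nnz)$ time verifying. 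Multiplying the per-iteration cost $\OO(m + \mathcal{T}(n, \OO(n/\epsilon)))$ by $T = \OO(\log m/\epsilon)$ gives the stated running time $\OO\bigl(\tfrac{\log m}{\epsilon}(m + \mathcal{T}(n,\OO(n/\epsilon)))\bigr)$. Averaging the $T$ returned covers and rescaling by $1/(1-\Theta(\epsilon))$ as in the preliminaries converts the additive feasibility guarantee into a genuine multiplicative $(1+\epsilon)$-approximation; by LP duality this yields a $(1-\epsilon)$-approximate maximum matching. The "exponentially high probability" comes from the union bound over the $T = \poly$ oracle calls, each succeeding with probability $1 - 1/\poly(Nm) = 1 - 2^{-\Omega(n)}$ by the choice of oversampling factor $\log(Nm)$ in \cref{lem:lvo-assadi}.

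For the general (non-bipartite) weighted case I would repeat the argument with the odd-set cover LP: constraints are the edge constraints together with, for each odd set $U$, the blossom inequality, and $\mathcal{A}$ is an exact maximum-weight-matching / minimum-odd-set-cover solver. The cardinality bound is now $N = \exp(3n\log(nW))$ from the Lemma, so the sampled constraint count becomes $\OO\bigl(\tfrac{\log N + \log m}{\mu}\bigr) = \OO\bigl(\tfrac{n\log(nW)}{\epsilon}\bigr)$, the iteration count becomes $\OO(\log W / \epsilon)$ (the relevant "width"/violation scale is governed by the integral weights up to $W$, matching the $\log W$ in the statement), and substituting into the same template yields the second bound. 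The main obstacle---and the only place requiring genuine care rather than bookkeeping---is verifying that the constraint system one feeds to $\mathcal{A}$ is exactly the one whose solution space the Lemma's cardinality bound refers to: for odd-set cover one must be sure that restricting to a sampled subset of constraints still leaves $\mathcal{A}$ outputting a cover drawn from the laminar family counted by the Lemma, and that the "$\Vmax$" governing the reduction in \cref{cl:lp-classical} is indeed controlled (here by $W$), so that $\mu$ and the iteration count come out as claimed; this is precisely the content of "the proof of equations (4) and (7) of~\cite{assadiSimple$1varepsilon$ApproximationSemiStreaming2025}" referenced in the footnote, and I would import it wholesale rather than re-derive the laminarity structure.
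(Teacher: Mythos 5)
Your overall route is the intended one: the paper states this corollary without a written proof, as a direct instantiation of \cref{thm:main-theorem} with the low-violation oracle of \cref{lem:lvo-assadi}, where the cardinality lemma supplies the bound $N$ on the solver's output space; your part~1 bookkeeping (covering LP over the $m$ edge constraints, $\mu=\Theta(\epsilon)$, $T=\OO(\log m/\epsilon)$, sample size $\OO(\log(Nm)/\mu)=\OO(n/\epsilon)$ since $\log N = n$, per-iteration cost $\OO(m+\mathcal{T}(n,\OO(n/\epsilon)))$, averaging and rescaling by $1/(1-\Theta(\epsilon))$) matches this exactly.

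Two points need repair. First, your justification of the $\OO(\log W/\epsilon)$ iteration count in the weighted case is not valid inside this framework: in the binary-violation MWU of \cref{thm:mwu}, the iteration count is $\OO(\log(\#\text{constraints})/\mu)$ and the width/violation magnitude never enters it (that is precisely the feature the paper exploits), so the honest output of your own derivation is $T=\OO(\log m/\epsilon)$ for the odd-set cover LP as well; moreover $\Vmax$ plays no role here because the per-edge guarantee is multiplicative (a non-violated iteration contributes at least $w_e$, a violated one at least $0$, so $\mu=\epsilon$ suffices independently of $W$). Attributing the $\log W$ to ``width governed by $W$'' is therefore a misattribution; if you want the bound literally as stated you must either observe that $\log m$ can be traded against $\log(nW)$ in the regime where the statement is meaningful, or import Assadi's weighted analysis (where the dependence on $W$ enters through the initialization/potential), which is more than the ``wholesale import'' of the laminarity count you planned. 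Second, the ``exponentially high probability'' claim does not follow from \cref{lem:lvo-assadi} as stated: its union bound over the $N$ outputs leaves a per-call failure probability of order $1/n$, not $2^{-\Omega(n)}$. You recover exponentially small failure only by oversampling by a constant factor (rate $c\log(Nn)/\mu$ with $c$ large), which is free asymptotically here because $\log N = \Theta(n)$ (resp.\ $\Theta(n\log(nW))$), but this constant-factor boost and the corrected union bound should be stated rather than asserting that the lemma already gives $1-1/\mathrm{poly}(Nm)$ success per call.
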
 
\section{Proofs of \texorpdfstring{\cref{sec:applications}}{section Applications}}
\label{apps:applications}
\subsection{Proof of \texorpdfstring{\cref{lem:lvo-clarkson}}{lemma }}

We first prove the following lemma which states that the number of constraint violated by an exact solution is inversely proportional to the number of constraint sampled.

\begin{lemma}[Sampling lemma \cite{clarksonVegasAlgorithmsLinear1995}]
    \label{lem:sampling2}
    Consider a linear program of the form $\max \langle \vc, \vx \rangle$ subject to $\AA \vx \leq \vb$, where $\AA \in \rr^{n \times d}$. Given a probability distribution $\vp \in \Delta_n$, consider a set $S \subseteq [n]$ such that
    \begin{equation*}
        \mathbb{P}\left[i \in S\right] \geq \min \lbrace r \vp_i, 1 \rbrace.
    \end{equation*}
    Let $\vx$ be a solution of the partial LP defined by the constraint from $S$. We have 
    \begin{equation*}
        \mathbb{E}\left[\sum_{i \in [n]: \langle \AA_{i:}, \vx \rangle > \vb_i} \vp_i \right] \leq \frac{d}{r}.
    \end{equation*}
\end{lemma}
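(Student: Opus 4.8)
The plan is to run the classical backward-analysis argument underlying Clarkson's sampling lemma, adapted to the independent, $\vp$-weighted sampling used here. Write $q_i := \min\{r\vp_i,1\}$ for the inclusion probability of constraint $i$, and first assume the actual inclusion probabilities equal $q_i$ (the general ``$\ge$'' case is handled at the end). For $S\subseteq[n]$, let $\vx(S)$ be the optimal solution of the LP restricted to the constraints of $S$; assuming the problem is feasible, bounded and non-degenerate (boundedness being ensured by the surrounding setup, e.g.\ intersecting with a fixed bounding set), $\vx(S)$ is unique and is pinned down by a basis $X(S)\subseteq S$ of at most $d$ tight constraints, so $|X(S)|\le d$ deterministically. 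Call $i$ a \emph{violator} of $\vx(S)$ if $\langle\AA_{i:},\vx(S)\rangle>\vb_i$; every violator necessarily lies outside $S$, so the sum in the statement is exactly the sum of $\vp_i$ over violators of $\vx(S)$.

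The core of the argument is a local identity obtained by conditioning on all of the sampling randomness except the coordinate $i$. Let $S_{-i}$ be the sample restricted to $[n]\setminus\{i\}$, so that, independently of $S_{-i}$, we have $S=S_{-i}$ with probability $1-q_i$ and $S=S_{-i}\cup\{i\}$ with probability $q_i$. I would first establish the combinatorial equivalence: \emph{$i$ is a violator of $\vx(S_{-i})$ if and only if $i\in X(S_{-i}\cup\{i\})$}. Indeed, if $i$ is violated then $\vx(S_{-i})$ is infeasible for $S_{-i}\cup\{i\}$, so the optimum must move and, by non-degeneracy, $i$ enters its basis; conversely, if $i$ is satisfied by $\vx(S_{-i})$ then $\vx(S_{-i})$ remains the (unique) optimum of $S_{-i}\cup\{i\}$, so $i$ is not in the basis. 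Combining this equivalence with the two independence facts above yields
\[
\mathbb{P}\big[i\notin S,\ i\text{ violates }\vx(S)\big]=(1-q_i)\,\mathbb{P}\big[i\in X(S_{-i}\cup\{i\})\big]=\frac{1-q_i}{q_i}\,\mathbb{P}\big[i\in S,\ i\in X(S)\big].
\]

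Summing over $i$, and recalling that violators lie outside $S$,
\[
\mathbb{E}\Big[\textstyle\sum_{i:\,\langle\AA_{i:},\vx(S)\rangle>\vb_i}\vp_i\Big]=\sum_{i=1}^n\vp_i\,\frac{1-q_i}{q_i}\,\mathbb{P}\big[i\in S,\ i\in X(S)\big].
\]
If $r\vp_i\le 1$ then $q_i=r\vp_i$ and $\vp_i(1-q_i)/q_i\le\vp_i/q_i=1/r$; if $r\vp_i>1$ then $q_i=1$ and the term vanishes. Hence the right-hand side is at most $\frac1r\sum_i\mathbb{P}[i\in S\cap X(S)]=\frac1r\,\mathbb{E}\,|X(S)|\le d/r$, using $X(S)\subseteq S$ and $|X(S)|\le d$. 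To upgrade the hypothesis from ``$=q_i$'' to ``$\ge q_i$'', note that $q\mapsto\vp_i(1-q)/q$ is nonincreasing on $(0,1]$, so replacing $q_i$ by any larger inclusion probability only decreases each summand (and it stays $0$ once the probability reaches $1$), leaving the bound $d/r$ intact.

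The step I expect to require the most care is the combinatorial equivalence between ``$i$ is violated by $\vx(S_{-i})$'' and ``$i$ is a basis element of $\vx(S_{-i}\cup\{i\})$'': this is precisely where non-degeneracy (uniqueness of the optimal basic solution) is used, and where one must also ensure that every sub-LP encountered is feasible and bounded, which is why a bounding set such as $\mathcal{D}$ is kept throughout. Once this equivalence is in hand, the remaining ingredients --- conditioning on $S_{-i}$ (clean because inclusions are independent across constraints), the arithmetic with $q_i=\min\{r\vp_i,1\}$, and the deterministic bound $|X(S)|\le d$ --- are routine.
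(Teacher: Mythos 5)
Your proposal is correct and takes essentially the same route as the paper's proof: the paper performs the same backward analysis by re-indexing the sum over sampled sets via $Q = S\cup\{i\}$ (using independence to trade $\mathbb{P}[S]\vp_i$ for $\tfrac1r\mathbb{P}[S\cup\{i\}]$) and bounding $\sum_{i\in Q}\vv_i(\vx^*_{Q\setminus\{i\}})\le d$ by the fact that only the at most $d$ basis constraints can be violated after their removal, which is exactly your per-coordinate conditioning on $S_{-i}$, the equivalence with $i\in X(S_{-i}\cup\{i\})$, and the bound $\mathbb{E}\abs{X(S)}\le d$. The differences (conditioning coordinate-by-coordinate rather than summing over sets, and proving the unneeded converse direction of the key equivalence) are purely organizational.
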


\begin{proof}
    For any subset of constraint $S \subseteq [n]$, we define $\vx^*_S$ as the exact solution of the LP defined on those constraints. For simplicity, we define the violation vector $\vv(\vx)$ as:
    \begin{equation*}
        \vv_i(\vx) :=
        \begin{cases}
            1 & \text{if } \langle \AA_{i:}, \vx \rangle > \vb_i \qquad \text{(constraint } i \text{ is violated);}\\
            0 & \text{if }  \langle \AA_{i:}, \vx \rangle \leq \vb_i \qquad \text{(constraint } i \text{ is satisfied).}
        \end{cases}
    \end{equation*}
    We have 
    \begin{align*}
        \mathbb{E}\left[\langle \vp, \vv(\vx) \rangle\right] &= \sum_{S \subseteq [n]} \mathbb{P}\left[S\right] \mathbb{E}\left[\langle \vp, \vv(\vx) \rangle \vert S\right] \\
        &= \sum_{S \subseteq [n]} \mathbb{P}\left[S\right] \mathbb{E}\left[\langle \vp, \vv(\vx^*_S) \rangle \right] \\
        &= \sum_{S \subseteq [n]} \mathbb{P}\left[S\right] \sum_{i \in [n] \setminus S} \vp_i \vv_i(\vx^*_S) \\
        &= \sum_{S \subseteq [n]} \sum_{i \in [n] \setminus S} \mathbb{P}\left[S\right]  \vp_i \vv_i(\vx^*_S) \\
        &\leq \frac{1}{r} \sum_{S \subseteq [n]} \sum_{i \in [n] \setminus S} \mathbb{P}\left[S \cup \lbrace i \rbrace \right] \vv_i(\vx^*_S) \\
        &= \frac{1}{r} \sum_{Q \subseteq [n]} \mathbb{P}\left[Q\right] \sum_{i \in Q} \vv_i(\vx^*_{Q \setminus \lbrace i \rbrace}).
    \end{align*}
    Since $\vx^*_Q$ is fully determined by $d$ constraints from $Q$, $\vv_i(\vx^*_{Q \setminus \lbrace i \rbrace})$ equals $1$ on those constraints only. Hence,
    \begin{equation*}
        \mathbb{E}\left[\langle \vp, \vv(\vx) \rangle\right] \leq \frac{1}{r} \sum_{Q \subseteq [n]} \mathbb{P}\left[Q\right] d = \frac{d}{r}.
    \end{equation*}
\end{proof}

We are now ready to prove \cref{lem:lvo-clarkson}.

\begin{proof}[Proof of \cref{lem:lvo-clarkson}]
    We use \cref{lem:sampling2} with $r = 2d/\mu$, hence we have that on average $\langle \vp, \ones_{\AA \vx > \vb} \rangle \leq \frac{\mu}{2}$. Using Markov's inequality, with probability $1/2$, we have $\langle \vp, \ones_{\AA \vx > \vb} \rangle \leq \mu$. Hence, on average $2$ iterations are necessary, which means the expected time is $\OO(\nnz(\AA) + \Texact(d,2\frac{d}{\mu}))$.
\end{proof}

\subsection{Proof of \texorpdfstring{\Cref{lem:lvo-assadi}}{lemma }}

\begin{proof}
   For any $\vx \in \bigcup_{S \subseteq [n]} \mathcal{A} \langle S \rangle$, consider $Q(\vx) := \lbrace i \in [n] : \langle \AA_{i:}, \vx \rangle > \vb_i + \epsilon \rbrace$. Assume $\sum_{i \in Q(\vx)} \vp_i > \mu$. We are sampling each constraint with probability at least $r \vp_i$ where $r = \frac{\log (Nn)}{\mu}$. We have
    \begin{align*}
        \mathbb{P}[\vx \text{ is } \epsilon\text{-feasible on the sampled constraints}] &= \mathbb{P}[\text{No constraints from } Q(\vx) \text{ were sampled}] \\
        &\leq \prod_{i \in Q(\vx)} (1-r \vp_i) \\
        &\leq \exp \left(- \sum_{i \in Q(\vx)} r \vp_i\right) \\
        &\leq \exp \left( -r \mu \right) \leq \frac{1}{Nn}.
    \end{align*}

    A union bound over the $N$ possible outputs of the solver $\mathcal{A}$ ensures that with high-probability the sum of probabilities of violated constraints is smaller than $\mu$.
\end{proof}

\subsection{Proof of \texorpdfstring{\Cref{cl:sparsify-mixed}}{Claim }}

\begin{proof}
    Assume we have $\vx$ such that $\CC \vx \geq \ones$, and $\PP \vx \leq (1+ \epsilon) \ones$. We will discretize each entry of $\vx$. We compute our new solution $\tilde{\vx}$ as follows:
    \begin{equation*}
        \tilde{\vx}_i := \begin{cases}
            0 &\text{if } \vx_i = 0;\\
            \frac{\epsilon}{r_p} &\text{if } 0 < \vx_i \leq \frac{\epsilon}{r_p};\\
            \min \lbrace \frac{\epsilon}{r_p} (1+\epsilon)^{k+1},1 \rbrace &\text{if } \frac{\epsilon}{r_p} (1+\epsilon)^k < \vx_i \leq \min \lbrace \frac{\epsilon}{r_p} (1+\epsilon)^{k+1},1 \rbrace.
        \end{cases}
    \end{equation*}

    $\tilde{\vx}$ verifies $\zeros \leq \vx \leq \tilde{\vx} \leq \ones$ where the inequalities are taken coordinate-wise.
    
    \paragraph{$\tilde{\vx}$ is a $1 + 4\epsilon$ approximation.} Since $\vx \leq \tilde{\vx}$ coordinate-wise, we have $\CC \tilde{\vx} \geq \CC \vx \geq \ones$.

    We need to verify that $\PP \tilde{\vx} \leq (1+4 \epsilon) \ones$. We have for any $i$, $\tilde{\vx}_i \leq (1+\epsilon) \vx_i + \frac{\epsilon}{r_p}$. Hence, 
    \begin{align*}
        \PP \tilde{\vx} \leq (1+\epsilon) \PP \vx + \frac{\epsilon}{r_p} \PP \ones \leq (1+ \epsilon)^2 \ones + \epsilon \ones \leq (1+4\epsilon) \ones.
    \end{align*}
    Hence, $\tilde{\vx}$ is a $1+4\epsilon$ approximation.

    \paragraph{Size of $\mathcal{S}$.} For each coordinate $i$, we have $\vx_i \in \lbrace 0,1 \rbrace \cup \lbrace \frac{\epsilon}{r_p} (1+\epsilon)^k : k \in [K] \rbrace$. Where $K$ is the largest integer such that $\frac{\epsilon}{r_p}(1+\epsilon)^K < 1$. Hence, $K \leq \lceil 4 \frac{\log (r_p/\epsilon)}{\epsilon} \rceil$. Since there are $d$ coordinates, we have $\abs{\mathcal{S}} \leq {\lceil 2 + 4\frac{\log (r_p/\epsilon)}{\epsilon} \rceil}^d$.  

\end{proof}

\subsection{Proof of \texorpdfstring{\cref{thm:mpc-classical}}{Theorem }}

\begin{proof}[Proof of \cref{thm:mpc-classical}]
    We prove this theorem in three steps. First, we prove the requirement on the $(\epsilon,\mu)$-low-violation oracle. Second, we prove that $\OO(\frac{\log n_c}{\epsilon})$ iterations are sufficient. Finally, we construct our $(\epsilon,\mu)$-low-violation oracle, and show the expected running time of each iteration.

    \paragraph{The required $(\epsilon,\mu)$-low-violation oracle.} Assume we have $\lbrace \vx_j \rbrace_{j \in [T]}$ a solution of \cref{prob:intro1}. Then, since at each iteration we are using a mixed packing and covering solver that contains all the packing constraints, we have for all $j \in [T]$, $\PP \vx_j \leq (1+\epsilon)\ones$. Hence, for the packing constraints, $\bar{\vx} := \sum_{j \in [T]} \vx_j / T$ already satisfies $\PP \bar{\vx} \leq (1+\epsilon)\ones$. For the covering constraints, for $j \in [T]$, we have $\CC \vx_j \geq \zeros$. 

    For a covering constraint $i$, let $V_i \subseteq [T]$ be the set of solutions violating constraint $i$: $\lbrace \vx_j : j \in [T] \wedge \langle \CC_{i:}, \vx_j \rangle < 1 \rbrace$. We have
    \begin{equation*}
        \sum_{j \in [T]} \langle \CC_{i:}, \vx_j \rangle -  1 = \sum_{j \in V_i}  \langle \CC_{i:}, \vx_j \rangle - 1 + \sum_{j \in [T] \setminus V_i}  \langle \CC_{i:}, \vx_j \rangle - 1 \leq \sum_{j \in V_i} 0 + \sum_{j \in [T] \setminus V_i} 1 \geq (1 - \mu)T.
    \end{equation*}
    For $\mu = \epsilon$, $\bar{\vx}/(1-\epsilon)$ is a $(1+4\epsilon)$-approximation. 

    \paragraph{Number of iterations.} Since we are using a $(\epsilon,\epsilon)$-low-violation oracle, and we are sampling from a set of at most $n_c$ constraints, $\OO(\frac{\log n_c}{\epsilon})$ iterations are required per \cref{thm:main-theorem}.

    \paragraph{Constructing the $(\epsilon,\epsilon)$-low-violation oracle.} We want to use \cref{lem:lvo-assadi}, we need to create a mixed packing and covering solver that $1+\epsilon$-approximations from a small set of vectors. Given any mixed packing and covering solver $\mathcal{A}$, we post-process a call on $\mathcal{A}$ to discretize the output using \cref{cl:sparsify-mixed}. Formally, given any set of constraints $S$, we construct the $1+4\epsilon$-approximate algorithm: \begin{enumerate}
        \item Compute $\vx$ the output of $\mathcal{A}(S)$.
        \item Compute $\tilde{\vx}$ using \cref{cl:sparsify-mixed}.
    \end{enumerate}
    Our new solver has outputs in a set with at most ${\lceil 2 + 4\frac{\log (r_p/\epsilon)}{\epsilon} \rceil}^d$ elements. Using \cref{lem:lvo-assadi}, it is sufficient to sample on average $\OO\left(\frac{d}{\epsilon} \log (1+ \frac{\log (\frac{r_p}{\epsilon})}{\epsilon})\right)$ constraints.

    Hence, at each iteration, we need to run a mixed packing and covering solver with packing constraint $\PP$, and $\OO\left(\frac{d}{\epsilon} \log (1+ \frac{\log (\frac{r_p}{\epsilon})}{\epsilon})\right)$ covering constraints, each one having row-sparsity at most $r_c$. Moreover, at each iteration, we need to update the sampling probabilities which cost $\OO(\nnz(\CC))$.
\end{proof}

\section{Proofs of \texorpdfstring{\Cref{sec:quantum}}{ Section Quantum}}
\label{apps:quantum_claims}

\subsection{Proofs of \texorpdfstring{\Cref{claim:estimation_quantum,claim:sampling_quantum}}{Claims}}

Notice that using \cref{lem:apers} with $\vq = s \frac{\vw}{\widetilde{W}}$ outputs a set with on average $s \frac{\psnorm[1]{\vw}}{\widetilde{W}}$ elements. Hence, we can estimate $\psnorm[1]{\vw}$ by sampling multiple times using \cref{lem:apers} and taking the set with the median number of elements. We describe in the following claim a procedure to enhance the guarantees of \cref{lem:apers} using multiple call to it.

\begin{claim}
    For a non-negative vector $\vw \in \rr^n$, a real $\widetilde{W} \leq \psnorm[1]{\vw}$, an integer $s \geq 6$, and a probability $p$. Assume one has quantum query access to $\vw$. Then, using $\OOt(\sqrt{n s \psnorm[1]{\vw}/\widetilde{W}} \log \frac{1}{p})$ queries to $\vw$, \cref{ora:q_sampling} computes a set $S$ such that:
    \begin{enumerate}
        \item Every $i \in [n]$ is in $S$ with probability over $\min \lbrace s \frac{\vw_i}{\widetilde{W}}, 1 \rbrace$;
        \item The size of $S$ is close to its average size:
        \begin{equation*}
            \mathbb{P}\left[\abs{\abs{S} - s\frac{\psnorm[1]{\vw}}{\widetilde{W}}} \geq \sqrt{6 s\frac{\psnorm[1]{\vw}}{\widetilde{W}}}\right] \leq p.
        \end{equation*}
    \end{enumerate}

\end{claim}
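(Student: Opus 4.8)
The plan is to treat \cref{lem:apers} as a black box and amplify it by repetition. Concretely, I would have \cref{ora:q_sampling} run the quantum sampler of \cref{lem:apers} independently $k = \Theta(\log(1/p))$ times with inclusion probabilities $\vq_i := \min\{s\,\vw_i/\widetilde{W},\,1\}$ (rounded to $\OOt(1)$ bits of precision, as \cref{lem:apers} requires), obtaining i.i.d.\ samples $S^{(1)}, \dots, S^{(k)}$, and return the one whose cardinality is the median of $|S^{(1)}|, \dots, |S^{(k)}|$. Write $\nu := \sum_i \vq_i$; note that $\nu \le s\,\psnorm[1]{\vw}/\widetilde{W}$, with equality whenever no coordinate is clamped to $1$ (clamped coordinates, if any, are identical in every copy and only help the inclusion bound), so I will carry $\nu$ through the analysis and substitute $s\psnorm[1]{\vw}/\widetilde{W}$ at the end.

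For the query bound, one invocation of \cref{lem:apers} with probabilities $\vq$ costs $\OOt(\sqrt{n\nu})$ queries \emph{in expectation}, so aborting and restarting any run that exceeds, say, ten times that many queries needs $\OO(\log(1/p))$ restarts to succeed with all but probability $p/2$ (Markov), and the same holds after a union over the $k$ copies. This yields a worst-case bound of $\OOt(\sqrt{n\nu}\,\log(1/p)) \le \OOt\big(\sqrt{n\,s\,\psnorm[1]{\vw}/\widetilde{W}}\,\log(1/p)\big)$ queries to $\vw$, as required, each query being made through the allowed oracle.

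For the size guarantee (item~2), each copy is a sum of independent Bernoullis with mean $\nu$ and variance $\sum_i \vq_i(1-\vq_i)\le\nu$, so Chebyshev gives $\mathbb{P}\big[\,\bigl|\,|S^{(j)}|-\nu\,\bigr|\ge\sqrt{6\nu}\,\big]\le 1/6$. A Chernoff bound over the $k$ independent copies then shows that, with all but probability $p/2$, strictly more than $k/2$ of them land in $[\nu-\sqrt{6\nu},\,\nu+\sqrt{6\nu}]$; whenever that happens the median cardinality — hence $|S|$ — lies in that interval too, and substituting $\nu=s\psnorm[1]{\vw}/\widetilde{W}$ gives the claimed two-sided bound. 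This is exactly what \cref{claim:estimation_quantum} will consume: $|S|\,\widetilde{W}/s$ becomes a $(1\pm o(1))$-multiplicative estimate of $\psnorm[1]{\vw}$.

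The inclusion guarantee (item~1) is the delicate point, and I expect it to be the main obstacle. A single copy contains $i$ with probability exactly $\vq_i\ge\min\{s\vw_i/\psnorm[1]{\vw},1\}$ (using $\widetilde{W}\le\psnorm[1]{\vw}$), but returning the \emph{median-cardinality} copy conditions on a size event, which could a priori skew the marginal $\mathbb{P}[i\in S]$. The fix I have in mind: conditioning on $i\in S^{(j)}$ shifts that copy's cardinality by exactly $1$, which is negligible against its standard deviation $\sqrt\nu$, so an exchangeability argument over the $k$ i.i.d.\ copies shows the marginal inclusion probability of the returned set is $\vq_i(1-\OO(1/\sqrt\nu))$; this loss is absorbed by running the base sampler with $\vq_i$ inflated by a constant factor, which changes $\nu$ — and hence the size bound and query count — only by a constant. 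Making this step fully rigorous, so that a single returned set simultaneously enjoys the cardinality bound with probability $1-p$, the correct inclusion marginals, and the stated query budget, is where the real work lies; the rest is routine Chernoff/Chebyshev and Markov bookkeeping.
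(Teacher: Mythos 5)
Your plan is essentially the paper's own proof: \cref{ora:q_sampling} runs the sampler of \cref{lem:apers} $\OO(\log\frac{1}{p})$ times with probabilities $\frac{s}{\widetilde{W}}\vw$ and returns the copy of median cardinality; a per-run constant-probability concentration bound (the paper uses the multiplicative Chernoff bound with $\delta=\sqrt{6/\mathbb{E}[|S|]}\leq 1$, valid since $s\geq 6$, you use Chebyshev with variance at most the mean — either works) combined with the standard majority argument for the median gives item 2, and the query count is just the number of repetitions times the cost of one run of \cref{lem:apers}, exactly as you compute. The one place you diverge is item 1: the paper's proof establishes only the size bound and silently inherits the inclusion marginals from a single run of \cref{lem:apers}, never addressing the conditioning introduced by returning the median-cardinality copy — precisely the subtlety you flag. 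Your concern is legitimate (indeed the downstream statement in \cref{claim:sampling_quantum} even asserts the inclusion events are independent, which median selection strictly breaks), and your proposed remedy — inflating the sampling probabilities by a constant factor so that the $(1-\OO(1/\sqrt{\nu}))$ distortion of the marginals is absorbed — is compatible with every downstream use, since $s$ enters the final complexity and cardinality bounds only up to constants. So the step you describe as ``the real work'' is left informal in your write-up, but it is also not something the paper's proof supplies; on that point you are, if anything, more careful than the source.
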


\begin{procedure}
\caption{$\texttt{quantum\_sampling}(\vw, s, \widetilde{W},p)$}
    \label{ora:q_sampling}
    \begin{algorithmic}[1]
        \Require Quantum query access to a non-negative vector $\vw$, $s \geq 6$, $\widetilde{W} \leq \psnorm[1]{\vw} = \OO(\widetilde{W})$, $p$ a probability
        \Ensure $\bar{S} \subseteq [n]$ such that $\bar{S}$ contains each $i$ independently with probability $\frac{s}{\widetilde{W}}\vw_i$.
        \For{$r = 1$ to $1+ 5 \log \frac{1}{p}$}
            \State $S_r \gets$ \cref{lem:apers} with probabilities $\frac{s}{\widetilde{W}} \vw$
            \State $s_r \gets \abs{S_r}$
        \EndFor
        \State $\bar{S} \gets S_{\bar{r}}$ such that $s_{\bar{r}}$ is the median of $(s_r)_r$ 
        \State \textbf{return} $\bar{S}$
    \end{algorithmic}
\end{procedure}

\begin{proof}
    Consider $X_1, \ldots, X_n$ independent random variables taking values in $\lbrace 0,1 \rbrace$. Let $X$ denote their sum and let $\mu = \mathbb{E}\left[X\right]$ denote the sum's expected value. Then for any $0 \leq \delta \leq 1$, multiplicative Chernoff bound states that:
    \begin{equation}
        \label{eq:multiplicative-Chernoff-bound}
        \mathbb{P}\left[\abs{X - \mu} \geq \delta \mu\right] \leq 2 e^{-\frac{\delta^2 \mu}{3}}.
    \end{equation}

    Assume $S$ is the output of a call to \cref{lem:apers} using $\vq = s \vw/ \widetilde{W}$. Let $X_i$ be the event of constraint $i$ being in $S$. $X_i \in \lbrace 0,1 \rbrace$, and $\abs{S} = \sum_{i} X_i = X$. Moreover, $\mathbb{E}\left[X\right] = s \psnorm[1]{\vw}/\widetilde{W}$. We have using \cref{eq:multiplicative-Chernoff-bound} with $\delta = \sqrt{\frac{6}{\mathbb{E}\left[X\right]}} \leq 1$ since $s \geq 6$:
    \begin{equation}
        \label{eq:lem-apers-bound}
        \mathbb{P}\left[\abs{\abs{S} - s \frac{\psnorm[1]{\vw}}{\widetilde{W}}} \geq \sqrt{6 s\frac{\psnorm[1]{\vw}}{\widetilde{W}}}\right] \leq 2e^{-2} \leq 1/e.
    \end{equation}

    Consider $\bar{S}$ the output of \cref{ora:q_sampling}. In \cref{ora:q_sampling}, we have $\OO(\log \frac{1}{p})$ sets satisfying \cref{eq:lem-apers-bound}. Since $\bar{S}$ has the size of the median of those $R$ sets, if $\bar{S}$ does not satisfy \cref{eq:lem-apers-bound}, then at least $R/2$ sets from $(S_r)_{r \in [R]}$ do not satisfy it. This happens with probability less than $1/e^{R/2}$. Hence, since $R = \lceil 5 \log \frac{1}{p} \rceil$, this happens with probability less than $p$. Hence:
    \begin{equation*}
        \mathbb{P}\left[\abs{\abs{\bar{S}} - s \frac{\psnorm[1]{\vw}}{\widetilde{W}}} \geq \sqrt{6 s \frac{\psnorm[1]{\vw}}{\widetilde{W}}}\right] \leq p.
    \end{equation*}
\end{proof}

\paragraph{Proof of \cref{claim:estimation_quantum}.}

\begin{procedure}
    \caption{$\texttt{estimation\_sum}(\vw,\widetilde{W},p)$}
    \label{ora:q_estimation_mu_better}
    \begin{algorithmic}[1]
        \Require Quantum query access to a non-negative vector $\vw$, $\widetilde{W} \leq \psnorm[1]{\vw} = \OO(\widetilde{W})$, $p$ a probability
        \Ensure $\widetilde{W}'$ such that $\widetilde{W}' \leq \psnorm[1]{\vw} \leq 2 \widetilde{W}'$ with probability at least $1-p$
        \State $s \gets 71$
        \State $S \gets \texttt{quantum\_sampling}(\vw,s,\widetilde{W},p)$ \Comment{Using \cref{ora:q_sampling}}
        \State $\widetilde{W}' \gets \frac{\widetilde{W}}{2 s}\left(\sqrt{3 + 2 \abs{S}} - \sqrt{3}\right)^2$
\State \textbf{return} $\widetilde{W}'$
    \end{algorithmic}
\end{procedure}

\begin{proof}
    We will use \cref{ora:q_estimation_mu_better}, with probability at least $1-p$, we have 
    \begin{equation}
        \label{eq:bound-size-S}
        s \frac{\psnorm[1]{\vw}}{\widetilde{W}} -  \sqrt{6 s \frac{\psnorm[1]{\vw}}{\widetilde{W}}} \leq \abs{S} \leq  s \frac{\psnorm[1]{\vw}}{\widetilde{W}} + \sqrt{6 s \frac{\psnorm[1]{\vw}}{\widetilde{W}}}.
    \end{equation}
    In particular, $\abs{S} \leq s \frac{\psnorm[1]{\vw}}{\widetilde{W}} + \sqrt{6 s \frac{\psnorm[1]{\vw}}{\widetilde{W}}}$. Hence, with $x = \sqrt{\psnorm[1]{\vw}}$, we have 
    \begin{equation*}
        0 \leq \frac{s}{\widetilde{W}} x^2 + \sqrt{6 \frac{s}{\widetilde{W}}}x - \abs{S}.
    \end{equation*}
    Hence, we have $x \geq \sqrt{\frac{\widetilde{W}}{2s}} (\sqrt{3 +2 \abs{S}} - \sqrt{3})$. Let $\widetilde{W}' := \frac{\widetilde{W}}{2s}(\sqrt{3+ 2 \abs{S}} - \sqrt{3})^2$, we have $\widetilde{W}' \leq x^2 = \psnorm[1]{\vw}$.

    On the other hand, $s \frac{\psnorm[1]{\vw}}{\widetilde{W}} -  \sqrt{\frac{6 s \psnorm[1]{\vw}}{\widetilde{W}}} \leq \abs{S}$. Let $x = \sqrt{\psnorm[1]{\vw}}$, the inequality implies $x \leq \sqrt{\frac{\widetilde{W}}{2s}} (\sqrt{3+2 \abs{S}} + \sqrt{3})$. Hence, 
    \begin{equation*}
        \frac{\psnorm[1]{\vw}}{\widetilde{W}'} \leq \left(\frac{\sqrt{3+2 \abs{S}} + \sqrt{3}}{\sqrt{3+2 \abs{S}} - \sqrt{3}}\right)^2 = \left(\frac{6 + 2 \abs{S} + 2 \sqrt{9 + 6 \abs{S}}}{2\abs{S}}\right)^2 = \left(1 + \sqrt{\frac{6}{\abs{S}} + \frac{9}{\abs{S}^2}} + \frac{3}{\abs{S}}\right)^2.
    \end{equation*}
    Using the above inequality, one can verify if $\abs{S} \geq 50$, then $\psnorm[1]{\vw} \leq 2 \widetilde{W}'$.

    It remains to show $\abs{S}$ has at least $50$ elements with probability larger than $1-p$. We assume $\abs{S}$ satisfied \cref{eq:bound-size-S}. Hence, 
    \begin{equation*}
        \abs{S} \geq s \frac{\psnorm[1]{\vw}}{\widetilde{W}} - \sqrt{6 s \frac{\psnorm[1]{\vw}}{\widetilde{W}}} = s \frac{\psnorm[1]{\vw}}{\widetilde{W}} \left(1 - \sqrt{6 \frac{\widetilde{W}}{s \psnorm[1]{\vw}}}\right) \geq s \left(1 - \sqrt{\frac{6}{s}}\right) = s - \sqrt{6s}.
    \end{equation*}
    Where the last inequality comes from $\widetilde{W} \leq \psnorm[1]{\vw}$. Hence, with $s \geq 71$, we have $\abs{S} \geq 50$.
\end{proof}

\paragraph{Proof of \cref{claim:sampling_quantum}.}
\begin{proof}
    We use \cref{ora:q_sampling} with $s \geq 6$, hence with probability at least $1-p$:
    \begin{equation*}
        \abs{S} \leq s \frac{\psnorm[1]{\vw}}{\widetilde{W}} + \sqrt{6 s \frac{\psnorm[1]{\vw}}{\widetilde{W}}}
        = s \frac{\psnorm[1]{\vw}}{\widetilde{W}} \left(1 + \sqrt{\frac{6}{s} \frac{\widetilde{W}}{\psnorm[1]{\vw}}}\right)
        \leq s \frac{\psnorm[1]{\vw}}{\widetilde{W}} \left(1 + \sqrt{\frac{6}{6} \frac{\psnorm[1]{\vw}}{\psnorm[1]{\vw}}}\right)
        = 2 s \frac{\psnorm[1]{\vw}}{\widetilde{W}}
        \leq 4 s.
    \end{equation*}
\end{proof}

\subsection{Proof of \texorpdfstring{\Cref{thm:clarkson-quantum}}{Theorem }}

\begin{proof}
    As per \cref{cor:clarkson}, it is sufficient to be able to sample in $S$ constraint $i$ with probability larger than $\min \lbrace 6d^2 \vp_i, 1 \rbrace$. At iteration $t$, we have a set $X$ such that $\abs{X} = t$. Let $\vw_i(X) := 2^{\sum_{\vx \in X} \vv_i^0(\vx)}$ where $\vv^0(\vx)$ is the binary violation vector, i.e. $\vv^0_i(\vx) = 0$ if $\vx$ satisfies constraint $i$ ($\langle \AA_{i:}, \vx \rangle \leq \vb_i$), and $1$ if $\vx$ violates constraint $i$.

    \paragraph{Maintenance of the estimation of $\psnorm[1]{\vw(X_t)}$:} We first prove at iteration $t$, we have $\widetilde{W}_t \leq \psnorm[1]{\vw(X_t)} \leq 2 \widetilde{W}_t$. 
    
    For $t =0$, we have $\widetilde{W}_0 = n = \psnorm[1]{\vw(X_0)} = \psnorm[1]{\vw(\emptyset)}$.

    Assume for $t \geq 0$ we have $\widetilde{W}_t \leq \psnorm[1]{\vw(X_t)} \leq 2 \widetilde{W}_t$ where $X$ contains $t$ elements. Consider $X_{t+1} = X_t \cup \lbrace \vx \rbrace$, For each $i \in [n]$, $\vw_i(X_{t}) \leq \vw_i(X_{t+1}) \leq 2 \vw_i(X_t)$. Hence, we have $\psnorm[1]{\vw(X_t)} \leq \psnorm[1]{\vw(X_{t+1})} \leq 2 \psnorm[1]{\vw(X_t)}$ which means $\widetilde{W}_t \leq \psnorm[1]{\vw(X_{t+1})} \leq 4 \widetilde{W}_t$. When we use \cref{ora:q_estimation_mu_better} to estimate $\psnorm[1]{\vw(X_{t+1})}$, the conditions of \cref{claim:estimation_quantum} are satisfied, hence $\widetilde{W}_{t+1}$ is such that $\widetilde{W}_{t+1} \leq \psnorm[1]{\vw(X_{t+1})} \leq 2 \widetilde{W}_{t+1}$.

    Hence, we always have $\widetilde{W}_{t} \leq \psnorm[1]{\vw(X_t)} \leq 2 \widetilde{W}_t$.

    \paragraph{Low-violation solution:} Since $\widetilde{W}_t \leq \psnorm[1]{\vw(X_t)}$, we have:
    \begin{equation*}
        \sum_i \vw(X_{t+1}) = \sum_{i} \vw_i(X_t) (1+ \vv_i^0(\vx))
        = \psnorm[1]{\vw(X_t)}\left(1 + \sum_{i} \frac{\vw_i(X_t)}{\psnorm[1]{\vw(X_t)}} \vv_i^{0}(\vx)\right).
    \end{equation*}
    Moreover, since $S_t$ is sampled using \cref{claim:sampling_quantum}, we know that each constraint $i$ is sampled independently and is present in $S_t$ with probability at least $6d^2 \vw_i(X_t)/\psnorm[1]{\vw(X_t)}$. Hence, with \cref{lem:sampling2}, we have $\langle \frac{\vw(X_t)}{\psnorm[1]{\vw(X_t)}}, \vv^0(\vx) \rangle \leq \frac{1}{6d}$.

    Hence, $\mathbb{E}\left[\psnorm[1]{\vw(X_{t+1})}\right] \leq (1+\frac{1}{6d}) \psnorm[1]{\vw(X_t)}$. Or:
    \begin{equation*}
        \mathbb{E}\left[\psnorm[1]{\vw(X_{T})}\right] \leq (1+\frac{1}{6d})^T n \leq 2^{\frac{1}{2} \frac{T}{d}}n.
    \end{equation*}

    \paragraph{Returning the solution:} If there is $\vx \in X_T$ such that $\vx$ is a solution to the linear program, then it should be detected with high probability with the call to \cref{claim:quantum_satisfiability}. Hence, we can assume that none of the $\vx \in X_T$ are solutions to the LP. 

    Since none of the $\vx$ in $X_T$ is a solution to the linear program, at least one constraint from the basis $B$ (i.e. the set of $d$ constraints that described the optimum) is violated at each iteration. Hence, $\sum_{i \in B} \vw_i(X_T) \geq 2^{\frac{T}{d}}$.

    For some constant $c> 0$, using Markov's inequality, we have with probability at least $1- n^{-c}$, $\psnorm[1]{\vw(X_T)} \leq n^{c+1} 2^{\frac{3}{2} \frac{T}{d}}$. Hence, with high probability we can assume $\psnorm[1]{\vw(X_T)} \leq n^{3} 2^{\frac{1}{2} \frac{T}{d}}$. Thus, as long as none of the $\vx$ in $X_T$ are solutions to the LP, the following should be true with high probability:
    \begin{equation*}
        2^{\frac{T}{d}} \leq \psnorm[1]{\vw(X_T)} \leq n^{3} 2^{\frac{1}{2} \frac{T}{d}}.
    \end{equation*}
    Hence, $T \leq 6d \log_2 n \leq 24 d \log n$. This is a contradiction since we are performing more than $24 d \log n$ iterations.
    
    \paragraph{Number of row-queries to $(\AA,\vb)$:} Within iteration $t$, since $X_t$ has less than $T$ elements, hence a query to $\vw(X_t)$ can be done using at most $T$ queries to $\vv^0$, which correspond to $T$ row-queries to $(\AA,\vb)$.

    We will count the number of queries to $\vw(X_t)$. During iteration $t$, \cref{claim:estimation_quantum,claim:quantum_satisfiability} perform at most $\OOt(\sqrt{n})$ queries to $\vw(X_t)$. On the other hand, \cref{claim:sampling_quantum} performs $\OOt(\sqrt{n d^2})$ queries to $\vw(X_t)$. 

    Since each query to $\vw(X_t)$ requires at most $T$ row-queries to $(\AA,\vb)$, we indeed have that in total $\OOt(\sqrt{n}d^3)$ row-queries to $(\AA,\vb)$ are performed.

    \paragraph{Running time:} $\OOt(d)$ iterations are performed, each of which requires at most $\OOt(\sqrt{n d^2}d)$ row-queries to $(\AA,\vb)$ to sample the set of constraints. This requires time $\OOt(\sqrt{n}d^2 r)$ where $t$ is the row-sparsity of $\AA$. Moreover, one call to an exact solver is used. Since $\abs{S_t} \leq 4 s = 24 d^2$, we have that the call is made on an exact solver on at most $24 d^2$ constraints and $d$ variables.
\end{proof}

\subsection{Proof of \texorpdfstring{\cref{thm:mpc-quantum}}{Theorem }}

\begin{proof}
    The proof is similar to \cref{thm:mpc-classical}, the difference is on how we sample the constraints at each iteration. We use the exact same techniques as in \cref{thm:clarkson-quantum}, and obtain a query access to $\vw(X)/\psnorm[1]{\vw(X)}$ at each iteration by maintaining a constant factor estimation of $\psnorm[1]{\vw(X)}$ and using queries to rows of $\AA$, and $\vb$.
\end{proof}

\subsection{Proof of \texorpdfstring{\cref{thm:lp-quantum-width}}{Theorem }}

\begin{proof}
    In this theorem, the proof is different since the value of $\vw$ is not the same. For the sake of the proof, we do not use MWU in a black-box fashion, since this problem does not fit inside \cref{prob:intro1}.

    At iteration $t$, we have $\vw_i(X_t) = 2^{\sum_{\vx \in X_t} \frac{\langle \AA_{i:}, \vx \rangle - \vb_i}{\rho}}$. Notice that one query to $\vw(X)$ can be made using one row-query to $(\AA,\vb)$, indeed, we can maintain $\bar{\vx} := \sum_{\vx \in X_t} \vx$ and use $\vw(X_t) = 2^{\frac{1}{\rho} \left(\AA \bar{\vx} - t \vb\right)}$.
    
    We consider $\vp = \vw(X)/\psnorm[1]{\vw(X)}$. If we seek to sample $s$ elements from $[n]$ with the probability distribution $\vp$, we require $\OOt(\sqrt{n s})$ queries to $\vp$, each one requires one row-query to $(\AA,\vb)$. 

    At each iteration, we use an average $(\epsilon,\epsilon/3\rho)$-low-violation oracle with $\vp$ to obtain $\tilde{\vx}$. We have $\langle \vp, \vv^{\epsilon}(\tilde{\vx}) \rangle \leq \mu$, hence on average:
    \begin{align*}
        \psnorm[1]{\vw(X \cup \lbrace \tilde{\vx} \rbrace)} &= \sum_{i \in [n]} 2^{\sum_{\vx \in X} \frac{\langle \AA_{i:}, \vx \rangle - \vb_i}{\rho}} 2^{\frac{\langle \AA_{i:}, \tilde{\vx} \rangle - \vb_i}{\rho}} \\ 
        &\leq \sum_{i \in [n]} 2^{\sum_{\vx \in X} \frac{\langle \AA_{i:}, \vx \rangle - \vb_i}{\rho}} \left(1 + \vv_i^\epsilon(\tilde{\vx})\right) \\
        &\leq \left(1 + \frac{\epsilon}{3\rho}\right) \sum_{i \in [n]} 2^{\sum_{\vx \in X} \frac{\langle \AA_{i:}, \vx \rangle - \vb_i}{\rho}} \\
        &= \left(1 + \frac{\epsilon}{3\rho}\right)  \psnorm[1]{\vw(X)} \leq 2^{\frac{1}{2}\frac{\epsilon}{\rho}} \psnorm[1]{\vw(X)}.
    \end{align*}
    Therefore, we have $\mathbb{E}\left[\psnorm[1]{\vw(X_T)}\right] \leq 2^{\frac{T}{2} \frac{\epsilon}{\rho}} n$. Hence, using Markov's inequality, with high-probability we have $\psnorm[1]{\vw(X_T)} \leq 2^{\frac{T}{2} \frac{\epsilon}{\rho}} n^2$.

    For any constraint $i$, we can use $\vw_i(X_T)$ to compute its violation, indeed, $\left(\AA \frac{\bar{\vx}}{T} - \vb\right)_i = \frac{\rho}{T} \log_2(\vw_i(X_T))$. Hence, it is sufficient to show that for any $i$, $\log_2 \vw_i(X_T) \leq T \frac{\epsilon}{\rho}$.

    We have:
    \begin{equation*}
        \log_2 \vw_i(X_T) \leq \log_2 \left(\sum_{j} \vw_j(X_T)\right) = \log_2 \psnorm[1]{\vw(X_T)} \leq \frac{T}{2} \frac{\epsilon}{\rho} + 2 \log_2 n \leq T\frac{\epsilon}{\rho},
    \end{equation*}
    where the last inequality comes from $2 \log_2 n \leq \frac{T}{2} \frac{\epsilon}{\rho}$ since $T \geq 16 \frac{\rho}{\epsilon} \log n$.

    It remains to prove we can estimate $\widetilde{W}_{t+1}$ using $\texttt{estimation\_sum}(\vw(X_{t+1}),\widetilde{W}_{t}/2,p)$. Indeed, in our case the weights can decrease, however, since we are using $\rho$ instead of $\Vmax$, the decrease can not be too large. In fact, we have $\frac{1}{2} \vw_i(X_t) \leq \vw_i(X_{t+1}) \leq 2 \vw_i(X_t)$. Therefore, we have $\widetilde{W}_{t+1}$ such that $\widetilde{W}_{t+1} \leq \sum_i \vw_i(X_{t+1}) \leq 2 \widetilde{W}_{t+1}$.
\end{proof}

\end{document}